\newtheorem{theorem}{Theorem}
\newtheorem{Defi}{Definition}
\newtheorem{remark}{Remark}
\newtheorem{Pro}{Proposition}
\DeclareMathOperator*{\argmax}{arg\,max} \setcounter{page}{1}
\newtheorem{Lem}{Lemma}
\def\blue{\textcolor{blue}}
\begin{document}
%\listoftables
%\listoffigures
%\vspace{-10mm}
\title{
\huge
Minimizing the AoI
in Resource-Constrained Multi-Source Relaying Systems: Dynamic and Learning-based Scheduling
}
   \author{\IEEEauthorblockN{Abolfazl Zakeri, Mohammad Moltafet,  Markus Leinonen, and 
  %\IEEEauthorblockA{Centre for Wireless Communications – Radio Technologies\\
 % University of Oulu, Finland
 %e-mail: \{abolfazl.zakeri, mohammad.moltafet, markus.leinonen\}@oulu.fi
 % }
  %\and
  \IEEEauthorblockN{Marian Codreanu}
  }
 % \IEEEauthorblockA{Department of Science and Technology\\
  %Link\"{o}ping University, Sweden \\
  %e-mail: marian.codreanu@liu.se}
% % \and
% % \IEEEauthorblockN{James Kirk\\ and Montgomery Scott}
% % \IEEEauthorblockA{Starfleet Academy\\
% % San Francisco, California 96678--2391\\
% % Telephone: (800) 555--1212\\
% % Fax: (888) 555--1212}
\thanks{
A. Zakeri, M. Moltafet, and M. Leinonen are with Centre for Wireless Communications--Radio Technologies,
  University of Oulu, Finland,
 e-mail: \{abolfazl.zakeri, mohammad.moltafet, markus.leinonen\}@oulu.fi. 
 M. Codreanu is with Department of Science and Technology,
  Link\"{o}ping University, Sweden, 
  e-mail: marian.codreanu@liu.se.
 % \\
  %This work has been submitted to the IEEE for possible publication. Copyright may be transferred without notice, after which
%this version may no longer be accessible.
}
}	
\maketitle
\vspace{- 3.5 em}
\begin{abstract}
\footnote{Preliminary results of this paper were published in \cite{zakeri2021minimizing, Zakeri_Relay_Spw}.}
We consider a multi-source relaying system where independent sources randomly generate status update packets which are sent to the destination with the aid of a relay through unreliable links. 
We develop transmission scheduling policies to minimize the \blue{weighted} sum average age of information (AoI) subject to transmission capacity and long-run average resource constraints. 
We formulate a stochastic control optimization problem and solve it using a constrained Markov decision process (CMDP) approach and a drift-plus-penalty method.
The CMDP problem is solved by transforming it into an MDP problem using the Lagrangian relaxation method. 
We theoretically analyze the structure of optimal policies for the MDP problem 
and subsequently propose a structure-aware algorithm that returns a practical near-optimal policy.
Using the drift-plus-penalty method, we devise a near-optimal low-complexity policy that performs the scheduling decisions dynamically.
\blue{We also develop a model-free deep reinforcement learning policy for which 
the Lyapunov optimization theory and a dueling double deep Q-network are employed.
The complexities of the proposed policies are analyzed.}
Simulation results are provided to assess the performance of our policies and validate the theoretical results. The results show up to $91$\% performance improvement compared to a baseline policy. 
%\begin{IEEEkeywords}
\\
\indent \textit{Index Terms--} Age of information (AoI),  relay, constrained Markov decision process (CMDP), drift-plus-penalty, deep reinforcement learning.
%Lyapunov optimization theory
%\end{IEEEkeywords} multi-source status update systems,
\end{abstract}

%Extensive simulation results are provided to show the effectiveness of  our algorithms compared to benchmark algorithms. 
%	\textcolor{blue}{[Marian: This term is used in learning, as used in a given Ref. in Intro. part.]}    %, which does not require the full knowledge of system statistics. 
% To do so, we employ the Lyapunov optimization theory and
%	to transform the original problem into an MDP problem and then adopting

%\break
\vspace{- 1.5 em}
	\section{Introduction}
In many emerging applications  of wireless communications 
such as the Internet-of-Things (IoT), 
cyber-physical systems, and intelligent transportation systems, the freshness of  status information is crucial \cite{AoI_Mag, AoI_Monograph_Modiano}. The \textit{age of information} (AoI) has been proposed to characterize the information freshness in status update systems
\cite{Roy_2012}. The  AoI is defined as the time elapsed since the latest received status update packet was generated \cite{Roy_2012, AoI_Monograph_Modiano}. Recently, the AoI has attracted much interest 
in different areas, e.g., queuing systems \cite{MOhammad_1,Ahmad_Bd_Shroff,Roy_3,Marian_Information}, and scheduling and sampling problems \cite{Deniz_Relay,Brancu+2Hop,Eyton_Modiano, Eytan_Sch2, 2Hop_TWC, Elif_ARQ, Yin_Sun_2, Song_Time_varingCh, HetTraffic_JSAC, Age_Energy_Tradeoff_CMDP,Age_Learning_Autonomous_Scheduling, hatami, WS_Non_Uni_PS,  AoI_Throughput_Multi_Hop_ACM, RF_MultiHop_2022}. %\textcolor{blue}{[Marian: Some of these are related works and discussed in the ``Related Work" Section.]}
% D2 PL Age-Energy_Gursoy  Mohammad_DC,  Elif_MultiUser_DRL, Marian_Scheduling_Cost, hatami, WS_2018_Sam,
The reader can refer to  \cite{Roy_Survey} for a  survey on the AoI.
	  \\\indent
	  In some status update systems, there is no direct communication link between the source of information and the intended destination, or direct communication is costly.
In such systems, deploying an intermediate node, \blue{called \textit{relay}}\footnote{\textcolor{blue}{This relay could be a static node \cite{Deniz_Relay} or a mobile node, e.g., unmanned aerial vehicle (UAV) \cite{AoI_UAV_mag,AoI_UAV_V2V,  UAV_AoI_learnig, AoI_UAV_energy, UAV_AoI_nurul, Ferdowsi_relay} or a vehicle in the vehicular communications \cite{Vehicule_Book}. 
For instance, in \cite{UAV_AoI_nurul}, multiple UAVs serve as mobile relays between the sensors
and the base station, and the  goal is to optimize  the UAVs' trajectories  to minimize the average AoI and energy consumption. 
}},
is indispensable to  enable a \textit{long-distance communication}.  
 Deploying such a node has an array of benefits,
  e.g., saving on power usage of wireless sensors
and improving the transmission success probability.   
However, minimizing the AoI is particularly challenging in such \textit{relaying systems} due to need of \textit{jointly optimizing}  scheduling on both  source  and relay sides, especially in a \textit{multi-source} setup \cite{Deniz_Relay,Brancu+2Hop}. 
Moreover, minimizing the AoI becomes more challenging in the presence of unreliable wireless connectivity due to the possibility of losing some updates \cite{Eytan_Sch2}. 
At the same time, in practical status update systems, the number of transmissions is limited due to resource constraints (power, bandwidth, etc.), especially in power-limited sensor networks \cite{Brancu+2Hop,2Hop_TWC,Elif_ARQ, Deniz_Relay}.
\\\indent
In this paper, we consider a \textit{multi-source}  relaying status update system with \textit{stochastic arrivals}. The sources independently generate different types
 of status update packets which randomly arrive at a \textit{buffer-aided} transmitter. The transmitter sends the packets to a buffer-aided \textit{\blue{full-duplex}} relay  which further forwards the packets to the destination. 
 The buffers store the last arrived packet from each source. All  transmission links (channels), i.e., the transmitter-to-relay and relay-to-destination links, are 
 \textit{unreliable (error-prone)} and have a limited transmission capacity.
A practical application for  the considered system could be industrial monitoring, where status updates of various sensors in a given factory zone are first gathered by a low-power transmitter and then sent to a remote monitoring center with help of a relay. Another application could emerge in vehicular networks,  where status updates about various physical processes related to a vehicle are sent to a controller (e.g., a road side unit) for supporting vehicle safety applications \cite{Vehicule_Book}. However, the vehicle is far from the coverage of the controller, and thus, a relay (could be another vehicle \cite{Vehicule_Book}, or a UAV \cite{AoI_UAV_V2V}) is needed to establish the communication.
\\\indent
We formulate a \textit{stochastic control optimization problem} aiming  to minimize
the \blue{weighted} sum  average AoI (AAoI)  subject to transmission capacity constraints and 
a \textit{long-run average resource constraint}, which limits the average number of all transmissions in the system.
We develop three different  \textit{(transmission) scheduling polices} by solving the problem.
Namely, we provide: (1) a deterministic policy, (2) a drift-plus-penalty-based scheduling policy (DPP-SP), and (3) a deep reinforcement learning policy.  
% We propose 
\blue{A constrained Markov decision process (CMDP) approach and a drift-plus-penalty method are proposed.}
% \footnote{\blue{We would like to mention that the CMDP and drift-plus-penalty methods have been used in  AoI-related  works.
% However, applying such methods to a given problem at hand always requires taking into account the problem-dependent characteristics, which consequently leads to mathematical derivations and analysis that are unique to that problem.}\red{I am not sure about having this in the manuscript...}}.
\blue{For the former, we first show that the unichain structure holds for the CMDP problem and then  apply the Lagrangian relaxation method to solve it.}  
\blue{We  \textit{theoretically} analyze the structure of an optimal policy for the resulting MDP problem and subsequently   propose a  \textit{structure-aware} algorithm
that provides a near-optimal deterministic policy
(which is an optimal policy for the MDP problem)
and
another deterministic policy that  gives a lower bound on the optimal value of the CMDP problem.}
We note  that  an optimal policy can be obtained by randomizing the proposed near-optimal deterministic policy and the lower-bound deterministic policy; however, obtaining such randomized policy might be computationally intractable. 
\blue{In  the drift-plus-penalty method, we transform
the main problem into a sequence of per-slot problems and 
then devise a near-optimal \textit{low-complexity}
DPP-SP,
which  performs the scheduling dynamically,   using a scheduling rule described  by a  \textit{closed-form} solution to the per-slot optimization problem. 
 Moreover, we  provide a model-free deep reinforcement learning algorithm for which 
 we first employ the Lyapunov optimization theory to transform the main problem into an MDP
problem and then adopt 
a dueling double deep Q-network (D3QN)\footnote{\blue{This method integrates double
deep Q-network (DQN) and dueling DQN 
to further alleviate the
overestimation problem of DQN and improve its  convergence rate 
\cite{D3QN_silver, D3DQN_First}.  Moreover, it {was} shown, e.g., in \cite{D3QN_II_TWC, D3QN_silver},
that D3QN generally gives better performance than the other two mentioned methods.}} to solve it.}
The proposed learning-based policy addresses the case in which  the packet arrival rates and the error probabilities of the wireless channels are not known a priori, i.e., so-called \textit{unknown environment}.  
It should be noted that 
the environment model may not be
(readily) available, or using a perfect model is not applicable in practice owing
to computational difficulties.
\blue{The computational complexity  of the proposed policies is analyzed.}
Finally,   extensive numerical analysis are provided to validate the theoretical results and show the effectiveness of the proposed scheduling policies.
\vspace{-1.5 em}
\subsection{Contributions}
The main contributions of this paper are summarized as follows:
\begin{itemize}
    \item We study the AoI in a multi-source buffer-aided \blue{full-duplex} relaying status update system
with stochastic arrivals and unreliable links. % system-wise transmission and resource constraints.
We formulate a stochastic optimization problem
that aims  to minimize
\blue{the weighted sum  AAoI}  subject to transmission capacity constraints and 
a long-run time average resource constraint.
%for which its solutions are scheduling policies.
\item We develop three different scheduling polices by solving the main optimization  problem. % in known  environment  and   unknown  environment.
Particularly, we propose the CMDP approach and the drift-plus-penalty method.
Moreover, we develop a  deep reinforcement learning  algorithm by combining the Lyapunov optimization theory and D3QN. 
\item \blue{We theoretically analyze the structure of an optimal policy of the MDP problem (obtained via the Lagrangian relaxation) and develop a structure-aware \blue{iterative} algorithm for solving the CMDP problem.} \blue{The convergence of the algorithm is also proven.}
\blue{\item We devise a dynamic  near-optimal low-complexity scheduling policy, i.e., DPP-SP,  by providing 
a closed-form solution to the per-slot  problem obtained under the drift-plus-penalty method. 
Moreover, we prove that DPP-SP satisfies the average resource constraint.
\item We analyze the computational complexity of the proposed scheduling policies.} 
\item We  provide numerical analysis to verify the  theoretical results and assess the effectiveness of the devised  policies. The results show up to $91$\% performance improvement compared to a greedy-based baseline policy. 
\end{itemize}
\vspace{-1.5 em}
%...............
\subsection{Related Works}
	  	Recently, the  AoI  in   relaying systems has been studied in, e.g., \cite{MOradi_R,Deniz_Relay,Relay_Nikoss,Shroff,2Hop_ARQ, 2Hop_TWC,Relay_SA, T2R_Relay, Age_Relay_Stochastic_Any, Statis_Gua, AoI_Throughput_Trade, AoI_Energy_Relay, AoI_relay_Short_Packet}. % Relay_Maatouk
	  	The authors of \cite{MOradi_R} analyzed the AoI in a discrete-time Markovian system for two different relay settings and analyzed the impact of relaying on the AoI. In \cite{Relay_Nikoss}, the authors analyzed   the AAoI   in a two-way relaying system under the \textit{generate-at-will} model (i.e., possibility of generating a new update at any time) model in which two sources exchange status data.
	  	The AoI performance under different policies (e.g., a last-generated-first-served policy) in general multi-hop single-source  networks  was studied in \cite{Shroff}.
In \cite{2Hop_TWC}, the authors studied the AoI in a single-source energy harvesting relaying system with error-free channels and designed offline and online age-optimal policies.
Reference \cite{2Hop_ARQ} analyzed the AAoI in a single-source relaying system with and without the automatic repeat-request technique, where results show that the automatic repeat-request technique can reduce the AAoI.
The age-energy tradeoffs in a relay-aided  status update system were studied in \cite{AoI_Energy_Relay}, where the expressions for the AAoI and average energy cost were derived.
In \cite{Age_Relay_Stochastic_Any}, the expression of the AoI distributions  in a single-source relaying system under different circumstances were derived.
Minimization of the AAoI through optimizing the blocklengths in short-packet communications in decode-and-forward relaying IoT networks was conducted in \cite{AoI_relay_Short_Packet}.
Authors of \cite{Statis_Gua} optimized the steady-state  AoI violation probability with respect to the sampling rate of monitoring a process in both single-hop and two-hop systems.
In \cite{Relay_SA}, the authors  considered a  single-source relaying system under  stochastic packet arrivals where 
	  	 the source communicates with the destination either through the direct link or via a relay.
	  	They proposed two different relaying protocols 
	  	and derived the respective  AAoI expressions. 
    
	  	In summary, only a few works, such as \cite{2Hop_TWC,Brancu+2Hop, Deniz_Relay}, have incorporated  a resource constraint  
	  	(as we do in this paper) when analyzing and/or optimizing  the AoI in a relaying system. 
	  	 Moreover, different from our multi-source system, most of the discussed works, e.g., \cite{2Hop_TWC,MOradi_R,Shroff,Relay_SA,2Hop_ARQ,Statis_Gua,Age_Relay_Stochastic_Any, AoI_relay_Short_Packet,AoI_Energy_Relay, T2R_Relay}, consider single-source relaying systems.
     \blue{Clearly, multi-source scheduling is generally substantially challenging, especially when there are also  resource constraints (as in this paper). Because one needs to properly allocate
   %  it needs a proper allocation of 
     a limited amount of resources among multiple
sources, taking into account each source’s characteristics (e.g., the arrival rate of each source
and the source’s information importance). }
	  	%--------------------------- Most Related WORKS------------------
\\\indent Our relaying system, considered as a \textit{two-hop} network, is an extension of work
\cite{Eyton_Modiano}, where the authors provided scheduling policies for minimizing the AAoI in a \textit{one-hop buffer-free} network with stochastic arrivals and  an \textit{error-free} link  with no 
  average resource  constraint. 
\blue{In contrast,  our two-hop network is a buffer-aided network with error-prone links.}
The most-related works to our paper are  \cite{Brancu+2Hop, Deniz_Relay}.
	  	The  work \cite{Deniz_Relay} studied the AoI minimization in a multi-source relaying system with the generate-at-will model %(i.e., possibility of generating a new update at any time)
    and
	  	 unreliable channels.
	  	The authors proved that the greedy policy is an optimal scheduling policy for a setting called the error-prone symmetric IoT network whereas  for the general setting, they applied DQN. 
\blue{  In \cite{Brancu+2Hop},
	  	the authors studied  the AAoI minimization problem in a single-source \textit{half-duplex} relaying system with  the generate-at-will model under a constraint  on the average number of forwarding transmissions at the relay.
          In contrast  to \cite{Brancu+2Hop},  we consider a \textit{multi-source} setup;  because of the single-source setup in \cite{Brancu+2Hop}, the scheduling problem  of \cite{Brancu+2Hop} is essentially 
       the problem of optimizing whether the relay should receive or transmit at each slot,
       whereas our problem is multi-source scheduling.
Different to \cite{Eyton_Modiano}, we have two-dimensional decision variables in our problem which makes constructing optimal/good scheduling policies more  difficult.
We further consider an average resource constraint so that  our problem is a CMDP problem, whereas the problems of \cite{Deniz_Relay, Eyton_Modiano} are  MDP problems. Notably, not only solving a CMDP problem 
is substantially challenging but analyzing its optimal policy structure
%the structure of its an optimal policy 
is also challenging.
  Furthermore, the stochastic arrival model considered in our setup generalizes the generate-at-will model in \cite{Brancu+2Hop,Deniz_Relay} and brings additional challenges   in 
  the  design and analysis of scheduling policies since
  the statistics of the arrivals and
  the AoI at the
transmitter are also involved  in the
system dynamics.
%++++++++++++++++
       }
      \blue{Finally, besides the MDP/CMDP approach proposed in \cite{Eyton_Modiano, Brancu+2Hop,Deniz_Relay},
       we also propose the two different scheduling policies, i.e., DPP-SP, and the deep reinforcement learning policy that copes with unknown environments.}
       % two different scheduling policies: 1) a dynamic near-optimal low-complexity  scheduling policy, i.e., DPP-SP, and 2) a deep reinforcement learning policy that copes with unknown environments.}
        Even though \cite{Brancu+2Hop} also develops a low-complexity double threshold relaying policy, the thresholds need to be optimized numerically. In contrast, our low-complexity  scheduling policy requires to execute two simple operations.
        \vspace{-1.5 em}
\subsection{Organization} % and Main Notations}
The rest of this paper is organized as follows. The system model and problem formulation are presented in Section \ref{SM_PF}.  The CMDP formulation and its solution  are presented in Section \ref{Sec_CMDP_Lag}. 
The 
DPP-SP is presented in Section \ref{Sec_LC_MW}.
The deep reinforcement learning algorithm  is provided in Section \ref{Sec_Learning}. 
\blue{The computational complexity of the proposed policies  is analyzed in Section \ref{Sec_ComAna}.}
The numerical analysis and conclusions are provided in Section \ref{Sec_Numerical_Res} and Section \ref{Sec_Conclusion}, respectively. 
	
\vspace{-1em}
\section{System Model and Problem Formulation}\label{SM_PF}
	\subsection{System Model}
We consider a status update system  consisting of a set $\mathcal{I}=\{1,\dots,I\}$ of $I$ independent sources,
		a buffer-aided transmitter\footnote{\textcolor{blue}{Even though there is one transmitter,  the  system is mathematically equivalent to one 
where each source 
directly sends its updates to the relay using a shared channel
and where  at most one
source is allowed to occupy the channel at each slot.} }, a \textcolor{blue}{ buffer-aided full-duplex}
		relay, and a destination, as depicted in   ${ \text{Fig.}~ \ref{SM}}$.
 \blue{The sources model physically separated fully autonomous 
sensors (i.e., they cannot be controlled and commanded) where their (status update)
packets are sent to the transmitter using a random access protocol.
Thus, the stochastic arrivals model is used to account for possible random packet losses on the links
between the sources and the transmitter due to, e.g., collisions, and/or for possible idle slots where the source sensors do not send updates.}
Additionally,  there is no direct communication link between the transmitter and the destination, and thus, the transmitter sends all status update packets to the destination via  the relay.
We assume that each status update is encapsulated in one packet\footnote{A status update packet
of each source contains a time stamp representing the time when the sample was generated and the
measured value of the monitored process.}.
The buffer size is one packet per source and
each buffer stores the most recently arrived packet of a source, as they contain the freshest information. More specifically, a packet of a source arriving at the transmitter replaces the packet of the same source in the transmitter's buffer; similarly, a packet of a source received by the relay replaces the packet of the same source in the relay's buffer. 
It is worth noting that considering one packet size buffer for each source is sufficient in our system, as storing and transmitting outdated packets does not improve the AoI. 
\blue{Moreover, the relay transmits the packet available in the buffer at the beginning of the slots, while the buffer is updated at the end of the slots (if a new packet is successfully received).}
	\\\indent
	We consider a discrete-time system with unit time slots $t\in\{0,1,2,\ldots\}$.
The sources, indexed
by $i\in\mathcal{I}$,  independently generate status update packets  according to the Bernoulli  distribution with parameter $\mu_i$. 
\textcolor{blue}{Note that $\mu_i=1$ gives the same performance  when considering  the  system with the generate-at-will model and no sampling cost.}
Let $u_i[t]$  be a binary indicator that shows whether a packet from  source  $i$  arrives at the transmitter  at the beginning of slot $t$, i.e., $u_i[t]=1$ indicates that a packet arrived; otherwise, $u_i[t]=0$. Accordingly, $\Pr\{u_i[t]=1\}=\mu_i$.
\textcolor{blue}{For clarity, the definitions of the  main symbols are collected in Table \ref{Table_Not}.}
\begin{figure}%[h!]
    \centering
    \includegraphics[width=.6\textwidth]{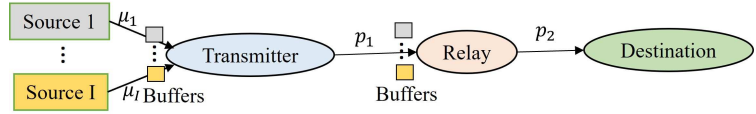}
    \vspace{-1 em}
    \caption{A multi-source relaying status update system  in which different status updates arrive at random time slots at the transmitter, which then sends the packets to the destination via a buffer-aided relay over unreliable links.
    }
    \label{SM}
    \vspace{- 1 em}
\end{figure}

\begin{table}
\small
\centering
\caption{ \textcolor{blue}{The key symbols with their definitions used in the paper }}
\vspace{-0.5 em}
	\begin{tabular}{ c |  c }
		\hline
		\textbf{Notation(s)}  & \textbf{Definition(s)}  \\
		\hline
		$ \mathcal{I}/I/i $ & \text{The set/number/index of sources }
		\\
		\hline
	$\theta_i[t]/\psi_i[t]/\delta_i[t]$  &  The AoI of source $i$ at the transmitter/relay/destination
 		\\
		\hline
	$\mu_i$  &  The arrival rate of source $i$
		\\
		\hline
	$w_i$  &  The weight of source $i$
 \\
		\hline
	$ p_1/p_2 $ 	&  \text{The reliability of the transmitter-relay/relay-destination link }
 		\\
		\hline
	$ \rho_1[t]/\rho_2[t] $ 	&  \text{The successful packet reception indicator of  the transmitter-relay/relay-destination link }
		\\
		\hline
  $ \alpha[t]/\beta[t]$  & The transmission decision at the transmitter/relay
  \\\hline
  $N$  &  The bound of the AoI  values
  \\\hline
  $\Gamma^{\mathrm{max} }$  & The transmission budget 
  \\\hline
  $\overline{K}$  & The average 
  number of total transmissions
  \\\hline
     $\overline{\delta}$  & The  weighted sum average AoI at the destination
     \\\hline
      %\vspace{2mm}
	\end{tabular}
 \label{Table_Not}
 \vspace{-2.5 em}
\end{table}
 
  \textit{Wireless Channels:} 
  As the wireless channels fluctuate over time, reception of updates (both by the relay and the destination) are subject to errors.  However, unsuccessfully received packets can be retransmitted; we assume that all retransmissions have the same reception  success probability.
  Let $p_1$ and $p_2$ be the successful transmission probabilities of  the transmitter-relay and relay-destination links, respectively. 
  Also, let $\rho_1[t]$ be a binary indicator of a successful packet reception by the relay
  %over the transmitter-relay link 
  in slot $t$, i.e.,  $\rho_1[t] = 1$ indicates that the transmitted packet is successfully received by the relay; otherwise,  $\rho_1[t] = 0$. 
  Similarly, let $\rho_2[t]$   be a binary indicator of a successful packet reception by the destination in slot $t$, i.e.,  $\rho_2[t] = 1$ indicates that the transmitted packet is successfully received by the destination; otherwise,  $\rho_2[t] = 0$.
We have $\Pr\{\rho_1[t]=1\}=p_1$ and  $\Pr\{\rho_2[t]=1\}=p_2$.
We assume that  perfect feedback (i.e., instantaneous  and error-free) is available for each link, and \blue{there is no interference between the links}\footnote{
\blue{Similarly to \cite{Deniz_Relay,Ferdowsi_relay}, we assume for simplicity that  the transmitter-relay and relay-destination links are using orthogonal frequency resource blocks.
Note that without this assumption, the performance of the full-duplex relay would highly depend on the internal self-interference cancellation mechanism \cite{Marian_relay} and this is beyond the scope of this paper.}
}.
 \\\indent \textit{Decision Variables:} 
 \textcolor{blue}{
 We assume that at most one packet transmission per slot is possible  over each link. 
 }
Let $\alpha[t]\in\{0,1,\dots,I\}$  denote the (transmission) decision of the transmitter in slot $t$, where $\alpha[t]=i$, $i\in\mathcal{I}$, %for $i=1,2$ 
 means that the transmitter sends the packet of source $i$ to the relay, and $\alpha[t]=0$ means that the transmitter stays idle.
 Similarly, let  $\beta[t]\in\{0,1,\dots,I\}$ denote the (transmission) decision of  the relay in slot $t$, where $\beta[t]=i$, $i\in\mathcal{I}$,
 means that the relay forwards the packet of source $i$ to the destination, and  $\beta[t]=0$ means that the relay stays idle. 
  We assume that there is a centralized controller  performing the scheduling.
  % \textcolor{blue}{
  % Moreover, the transmission decisions at each slot are determined after the replacements and updating the AoIs.
  % }
%=====================
\\\indent \textit{Age of Information:} Let $\theta_i[t]$ denote the AoI  of source $i$ at  the transmitter in slot $t$. Also, let
$\psi_i[t]$ denote the AoI  of source $i$ at the relay % (relay) 
and $ \delta_i[t]$ denote the AoI  of source $i$ at the destination in slot $t$.
We make a common assumption (see e.g., \cite{Nikos_2020_RL,hatami,WS_Non_Uni_PS}) that AoI values are upper-bounded by a finite value $N$. Besides tractability, this accounts for the fact that once the available information about the process of interest becomes excessively stale, further counting would be irrelevant.
The evolution of the AoIs of each source ${ i\in\mathcal{I} }$ is given by
  %\textcolor{blue}{[Marian: This following compact representation is due to the page limit.]}
\begin{equation}
%\begin{array}{ll}
\begin{array}{ll}\label{Eq_AoIDy}
  &  \theta_i[t+1]=\left\{ 
  \begin{array}{ll}
     0,  & \text{if}~~ u_i [t+1]=1, \\
        \min\big(\theta_i[t]+1,N\big), & \text{otherwise},
    \end{array}
    \right.
\\&
%\label{AoI_R1}
    \psi_i[t+1]=\left\{ 
    \begin{array}{ll}
   \min\big(\theta_i[t]+1, N\big) &\text{if}~~ \alpha[t]=i,~\rho_1[t]=1,
    \\
    \min\big(\psi_i[t]+1, N \big), & \text{otherwise},
        \end{array}
    \right.
\\
%\nonumber 
&  \delta_i[t+1]=\left\{
\begin{array}{ll}
    \min\big(\psi_i[t]+1, N \big), &\text{if}~~
    \beta[t]=i,~\rho_2[t]=1,   
    \\
    \min\big(\delta_i[t]+1, N\big), &\text{otherwise}.
        \end{array}
    \right.
    \end{array}
   % \end{array}
\end{equation}
\vspace{-1.5 em}
\textcolor{blue}{
\begin{remark}
When $N$ is not
sufficiently large, the system performance
without bounding the AoI will be different from that with bounded AoI. 
The appropriate choice of $N$ depends  
%Moreover, the sufficiently large values of $N$ depend
on the system parameters
such as the number of sources and the links’ reliabilities.
\end{remark}
}
\vspace{-1.5 em}
\subsection{Problem Formulation}
\textcolor{blue}{
We denote  the \textit{weighted sum  average AoI at the destination} (WS-AAoI) by $\overline{\delta}$
and the \textit{average number of total transmissions per slot}  in the system by $\overline{K}$, which are  defined as 
 %}
  \begin{equation}
     \begin{array}{ll}
   %  \nonumber 
    & \displaystyle \overline{\delta} \triangleq\limsup_{T\rightarrow \infty} ~\frac{1}{T}
        \textstyle
        \sum_{t=1}^{T} %\left(
         \mathbb{E}\left\{
          \textstyle \sum_i 
         \textcolor{blue}{ w_i }
          \delta_i[t]
         %+\Delta_2[t]\right)
        \right\},
        \\& \nonumber
        \displaystyle\overline{K}\triangleq\limsup_{T\rightarrow \infty} ~\frac{1}{T}  \textstyle\sum_{t=1}^{T} 
     \mathbb{E}\left\{ \mathds{1}_{\{\alpha[t]\neq0\}}+\mathds{1}_{\{\beta[t]\neq 0\}}\right\},
      \end{array}
 \end{equation}
    where %$i\in\{1,2\}$ and
    ${ w_i > 0,\,\forall i },$ denotes the weight of source $i$;
   $\mathds{1}_{\{\cdot\}}$ is an indicator function  which   equals to $1$ when the condition in $\{\cdot\}$ holds; 
   and
   $\mathbb{E}\{\cdot\}$ is the expectation with respect to the system randomness (i.e., random wireless channels and packet arrival processes) and the (possibly random) decision variables $\alpha[t]$ and $\beta[t]$\footnote{We assume that the decision variables $\alpha[t]$ and $\beta[t]$ are determined  based on the past and current AoI values; thus, we consider a set of control (scheduling) policies that contains  all causal policies \cite{Ahmad_Bd_Shroff}.}.
   }
   \textcolor{blue}{Moreover, $\overline{K} $ represents the system-wide power consumption.}
   By these definitions, we formulate 
the following stochastic optimization problem 
       \begin{subequations}
       \label{Org_P1}
       \begin{align}
        %\setstretch{0.8} 
                  %\begin{array}{ll}
                  %\label{Con_Sam1}
        		\underset{\{\alpha(t),\beta(t)\}_{t=1,2,\ldots}}{\mbox{minimize}}~~~   &
        		\overline{\delta}
        		\vspace{-1em}
        		\\
        		\mbox{subject to}~~~ &
        	 \overline{K}
       \le \Gamma_{\max}, 
        		 % \end{array}
                   \label{Con_Sam1}
                   %\vspace{-2em}
                   \end{align}
                   %\vspace{-2em}
        		\end{subequations}  
 where the real value   $\Gamma_{\max}\in(0,2]$ is the maximum allowable average number of transmissions per slot in the system. 
 \textcolor{blue}{
 The time average constraint \eqref{Con_Sam1} represents  a system-wide power utilization budget.
 Thus, problem \eqref{Org_P1} provides a trade-off between the WS-AAoI and  the system-wide power consumption.
 %average number of total transmissions.
% \textcolor{blue}{
 Note that since the maximum number of per-slot transmissions in the system is $2$, we have $\Gamma_{\max}\in(0,2]$; the values  $\Gamma_{\max}\ge 2$ make the constraint inactive.
 }
Moreover, the \textit{Slater condition} \cite[Eq. 9.32]{Eitam_CMDP} clearly holds for  problem \eqref{Org_P1}, i.e., there  exists some 
set of decisions for which  $\bar{K}
       < \Gamma_{\max}$.
\\\indent
 In the next section, we will  present a CMDP
approach  to solve problem \eqref{Org_P1}. 

 \vspace{-1em}
	\section{ CMDP Approach to Solve Problem \eqref{Org_P1}}\label{Sec_CMDP_Lag}
In this section, we 
recast problem \eqref{Org_P1} into 
a CMDP problem 
which is then solved
by using the Lagrangian relaxation method. % in Section \ref{Sec_Opt_MDP}. 
 \vspace{-1 em}
\subsection{CMDP Formulation}\label{Sec_CMDP_1}
We specify  the CMDP by the following elements:
\\$\bullet $
    \textit{State}:
      The  state of  the CMDP  incorporates the knowledge about all the AoI values in the system. 
      %|at Tx, relay, and destination.
%\textcolor{blue}{
We define the state in slot $t$ by
 $ \bold{s}[t]\triangleq (\theta_1[t],x_1[t],y_1[t],\dots,\theta_I[t],x_I[t],y_I[t])$,
 where $x_i[t]\triangleq \psi_i[t]-\theta_i[t],~\forall \,i\in\mathcal{I},$ and $y_i[t]\triangleq \delta_i[t]-\psi_i[t],~\forall \,i\in\mathcal{I},$ are the \textit{relative AoIs} at the relay and the destination in slot $t$, respectively. 
 %}
 Using the relative AoIs simplifies 
 the subsequent analysis and derivations. 
 % COMMENTED DUE TO PL.
 The intuition is that %the amount of
 %the change of
 the evolution of the AoI of source $i$ at the destination  
 from slot $t$ to
  $t+1$
  %given $\delta_i[t]$
 can be expressed as
%  proportional to $y_i[t]$,
%  which is shown in the following equation:
   $ %$ \[
    \delta_i[t+1] =
     \min\left(\delta_i[t] +  %\sum_i
    1-\mathds{1}_{\{\beta[t]=i,\rho_2[t]=1\}}
    %(%\underbrace{
    y_i[t], N\right)
    $, 
   %\]
   %Also, 
   and
   the evolution of the AoI of source $i$ at the relay  
 from slot $t$ to  $t+1$ can be expressed as
   $ 
    \psi_i[t+1] =
     \min\left(\psi_i[t] +  %\sum_i
    1-\mathds{1}_{\{\alpha[t]=i,\rho_1[t]=1\}}
    x_i[t], N\right)$. 
   We denote the state space  by  $\mathcal{S}$ which is a finite set.
     \\$\bullet $ \textit{Action}:   
      We define 
     the action   taken in slot $t$ by $\bold{a}[t]=(\alpha[t],\beta[t])$, where $\alpha[t],\beta[t]\in\{0,1,\dots,I\}$.   Let $\mathcal{A}$
  denote the action space. 
  %of size $|\mathcal{A}|=9$. 
     Actions are determined by a policy, denoted by $\pi$, which is a (possibly randomized) mapping from $\mathcal{S}$ to $\mathcal{A}$. We consider \textit{stationary randomized} policies because they are dominant (see \cite[Definition 2.2]{Eitam_CMDP})
  %   ; hence restriction to stationary policies are sufficient to find  an optimal policy. As \cite[Theorem 4.1]{Eitam_CMDP} is under unichain assumption,
     if unichain structure\footnote{We say the unichain structure exists if  the transition probability matrix corresponding to \textit{every} stationary deterministic  policy is unichain, that is, it consists of a single recurrent class plus a possibly empty set of transient states \cite[Sec. 8.3.1]{Puterman_Book}.} exists \cite[Theorem 4.1]{Eitam_CMDP};
      we will show  in Theorem \ref{The_Unichain} below that the unichain structure exists for the transition probability matrix  of the underlying (C)MDP.
     \\$\bullet $ \textit{State Transition Probabilities}:
     We denote the state transition probability from state $\bold{s}$ to next state $\bold{s}'$ under an action $\bold{a}=(\alpha,\beta)$ by $\mathcal{P}_{\bold{s}\bold{s}'}(\bold{a})$.
    Since the evolution of the AoIs in \eqref{Eq_AoIDy}  and the arrivals are independent among the sources, the transition probability can be decomposed as
     $
     \mathcal{P}_{\bold{s}\bold{s}'}(\bold{a})=\prod_{i} 
     \Pr\{\bold{s}_i' 
     \,\big|\,\bold{s}_i,\bold{a}\}$, 
where $\Pr\{\bold{s}_i'\,\big|\,\bold{s}_i,\bold{a}\}, \forall\,i\in\mathcal{I},$ denotes the state transition probability of source $i$ under an action $\bold{a}$,  $\bold{s}_i$ is the part of the current state  associated with source $i\in\mathcal{I}$, i.e., $\bold{s}_i=(\theta_i,x_i,y_i)$, and $\bold{s}_i'$ is the part of the next state  associated with source $i$, i.e., $\bold{s}_i'=(\theta_i',x_i',y_i')$. Mathematically, $\Pr\{\bold{s}_i'\,\big|\,\bold{s}_i,\bold{a}\}$ is given by \eqref{Eq_TranPro_Unr}, shown on  top of the next page,  where 
$ \tilde{\theta}_i \triangleq \min\big(\theta_i+1, N\big)$,
 $\tilde{x}_i \triangleq  \min\big(x_i+\theta_i+1, N\big)-\min\big(\theta_i+1, N\big)$, and  
 $\tilde{y}_i \triangleq \min\big(y_i+x_i+\theta_i+1, N\big)-\min\big(x_i+\theta_i+1, N\big)$.
 \begin{theorem}\label{The_Unichain}
The transition probability matrix with elements $\mathcal{P}_{\bold{s}\bold{s}'}(\bold{a})$ corresponding to
      every  deterministic policy is unichain.
\end{theorem}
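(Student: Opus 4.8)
The plan is to invoke the standard sufficient condition for the unichain property: a finite-state Markov chain has a single recurrent class (plus a possibly empty set of transient states) whenever there is a distinguished state $\bold{s}^\star$ that is accessible (reachable with positive probability in finitely many steps) from \emph{every} state. Indeed, any recurrent class is closed under the transition dynamics, so if $\bold{s}^\star$ is accessible from some state lying in a recurrent class, then $\bold{s}^\star$ must itself belong to that class; since $\bold{s}^\star$ lies in exactly one communicating class and the state space is finite, no two distinct recurrent classes can coexist. It therefore suffices to exhibit, for an \emph{arbitrary} deterministic policy $\pi$, one such universally accessible state.

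The candidate is the ``maximally stale'' state $\bold{s}^\star$ in which every AoI attains its upper bound, expressed in the relative-AoI coordinates of the excerpt. Concretely, for each source $i$ I would target $\psi_i=\delta_i=N$, together with $\theta_i=N$ when $\mu_i<1$ and $\theta_i=0$ when $\mu_i=1$ (since in the latter case arrivals occur every slot and force $\theta_i$ to $0$). Translating through $x_i=\psi_i-\theta_i$ and $y_i=\delta_i-\psi_i$ gives $y_i=0$ for all $i$, $x_i=0$ for sources with $\mu_i<1$, and $x_i=N$ for sources with $\mu_i=1$. Using the monotonicity relations $\psi_i\ge\theta_i$ and $\delta_i\ge\psi_i$ (which follow by induction from the recursions in \eqref{Eq_AoIDy}), this $\bold{s}^\star$ is a legitimate element of $\mathcal{S}$.

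To establish accessibility of $\bold{s}^\star$, I would condition on the length-$N$ realization of the exogenous randomness in which \emph{both links fail in every slot} ($\rho_1[t]=\rho_2[t]=0$) and \emph{no packet arrives} at any source with $\mu_i<1$. The crucial observation is that under $\rho_1[t]=\rho_2[t]=0$ the recursions \eqref{Eq_AoIDy} collapse to $\psi_i[t+1]=\min(\psi_i[t]+1,N)$ and $\delta_i[t+1]=\min(\delta_i[t]+1,N)$ \emph{irrespective of the actions} $\alpha[t],\beta[t]$, while $\theta_i[t+1]=\min(\theta_i[t]+1,N)$ for $\mu_i<1$ and $\theta_i[t+1]=0$ for $\mu_i=1$. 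Hence, starting from any $\bold{s}\in\mathcal{S}$, after $N$ consecutive such slots every AoI saturates at its target and the chain lands in $\bold{s}^\star$, regardless of the actions $\pi$ prescribes. Because the channels are unreliable ($p_1,p_2<1$) and $\mu_i<1$ for the relevant sources, this event has probability $\big[(1-p_1)(1-p_2)\prod_{i:\mu_i<1}(1-\mu_i)\big]^{N}>0$, so $\bold{s}^\star$ is accessible from every state.

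Since the construction is entirely policy-independent—the simultaneous link failures neutralize the actions—it applies verbatim to every stationary deterministic policy, and combining it with the sufficient condition of the first paragraph shows that each such policy induces a transition matrix with a single recurrent class, i.e., is unichain. The main obstacle, and the reason the argument is routed through simultaneous link failures rather than any particular transmission schedule, is precisely the requirement that the path to $\bold{s}^\star$ be uniform across \emph{all} deterministic policies; the only remaining bookkeeping is the $\mu_i=1$ corner case, where $\theta_i$ is pinned to $0$ and the coordinates of $\bold{s}^\star$ must be set as above.
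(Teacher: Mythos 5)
Your proposal is correct and follows essentially the same route as the paper's proof: both identify the identical "maximally stale" target state (in relative coordinates, $(0,N,0)$ for sources with $\mu_i=1$ and $(N,0,0)$ for sources with $\mu_i<1$) and show it is accessible from every state under every deterministic policy via the positive-probability event of $N$ consecutive slots with no arrivals at sources with $\mu_i<1$ and failures on both links. Your additional justification of why a universally accessible state forces a single recurrent class merely spells out what the paper delegates to a cited exercise.
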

\begin{proof} 
See Appendix \ref{Proof_Unichain}.
\end{proof}
\setlength{\intextsep}{0pt}
\setlength{\textfloatsep}{0pt}
  \begin{figure*}[t]
        \begin{equation} \label{Eq_TranPro_Unr}
     \Pr\{\bold{s}_i'\,\big|\,\bold{s}_i,\bold{a}\}= \left\{
    \begin{array}{ll}
      %\begin{array}{ll}
      \mu_ip_1p_2,& \alpha=i,\beta=i;~ \theta'_i=0,~x'_i=\tilde{\theta}_i,~y'_i=\tilde{x}_i,
      \\
      \mu_i(1-p_1)p_2,& \alpha=i,\beta=i;~ \theta'_i=0,~x'_i=\tilde{x}_i+\tilde{\theta}_i,~y'_i=0,
      \\
      \mu_ip_1(1-p_2), & \alpha=i,\beta=i;~ \theta'_i=0,~x'_i=\tilde{\theta}_i,~y'_i=\tilde{y}_i+\tilde{x}_i,
      \\\mu_i(1-p_1)(1-p_2), & \alpha=i,\beta=i;~ \theta'_i=0,~x'_i=\tilde{x}_i+\tilde{\theta}_i,~y'_i=\tilde{y}_i,
      \\  (1-\mu_i)p_1p_2, & \alpha=i,\beta=i;~ \theta'_i=\tilde{\theta}_i,~x'_i=0,~y'_i=\tilde{x}_i,
      \\
      (1-\mu_i)(1-p_1)p_2, & \alpha=i,\beta=i;~ \theta'_i=\tilde{\theta}_i,~x'_i=\tilde{x}_i,~y'_i=0,
      \\
      (1-\mu_i)p_1(1-p_2), & \alpha=i,\beta=i;~ \theta'_i=\tilde{\theta}_i,~x'_i=0,~y'_i=\tilde{y}_i+\tilde{x}_i,
      \\
    {(1-\mu_i)(1-p_1)(1-p_2)}, & \alpha=i,\beta=i;~ \theta'_i=\tilde{\theta}_i,~x'_i=\tilde{x}_i,~y'_i=\tilde{y}_i,
      \\
          \mu_ip_1, & \alpha=i,\beta\neq i;~ \theta'_i=0,~x'_i=\tilde{\theta}_i,~y'_i=\tilde{y}_i+\tilde{x}_i,
          \\
          \mu_i(1-p_1), & \alpha=i,\beta\neq i;~ \theta'_i=0,~x'_i=\tilde{x}_i+\tilde{\theta}_i,~y'_i=\tilde{y}_i,
      \\  (1-\mu_i)p_1, & \alpha=i,\beta\neq i;~ \theta'_i=\tilde{\theta}_i,~x'_i=0,~y'_i=\tilde{y}_i+\tilde{x}_i,
      \\
      (1-\mu_i)(1-p_1), & \alpha=i,\beta\neq i;~ \theta'_i=\tilde{\theta}_i,~x'_i=\tilde{x}_i,~y'_i=\tilde{y}_i,
  \\
          \mu_ip_2, & \alpha\neq i,\beta= i;~ \theta'_i=0,~x'_i=\tilde{x}_i+\tilde{\theta}_i,~y'_i=0,
          \\
          \mu_i(1-p_2), & \alpha\neq i,\beta= i;~ \theta'_i=0,~x'_i=\tilde{x}_i+\tilde{\theta}_i,~y'_i=\tilde{y}_i,
      \\  (1-\mu_i)p_2,& \alpha\neq i,\beta= i;~ \theta'_i=\tilde{\theta}_i,~x'_i=\tilde{x}_i,~y'_i=0,
      \\
      (1-\mu_i)(1-p_2),& \alpha\neq i,\beta= i;~ \theta'_i=\tilde{\theta}_i,~x'_i=\tilde{x}_i,~y'_i=\tilde{y}_i,
\\           \mu_i, & \alpha\neq i,\beta\neq i;~ \theta'_i=0,~x'_i=\tilde{x}_i+\tilde{\theta}_i,~y'_i=\tilde{y}_i,
      \\  1-\mu_i,& \alpha\neq i,\beta\neq i;~ \theta'_i=\tilde{\theta}_i,~x'_i=\tilde{x}_i,~y'_i=\tilde{y}_i,
      \\ 0& \text{otherwise}.
     % \end{array}.
      %\end{cases}
      \end{array}
      \right.
      \end{equation}
  % \end{subequations}
  \hrule
 % \vspace{-1 em}
  \end{figure*}
  %\vspace{-1 em}
%   %------
 %  \\
   $\bullet $ \textit{Cost Functions}:
    %  \textcolor{blue}{
      The (immediate) cost functions include: 1) the AoI cost,  and 2) the transmission cost.
    The  AoI cost  is the \blue{weighted} sum of AoIs at the destination, i.e.,
$ C(\bold{s}[t])=\sum_i \blue{w_i} (\theta_i[t]+x_i[t]+y_i[t]). $
The transmission  cost is given by
  $D(\bold{a}[t])=\mathds{1}_{\{\alpha[t]\neq 0\}}+\mathds{1}_{\{\beta[t]\neq 0\}}.$
   \\\indent
   Given  %an initial state $\bold{s}[0]$, and 
   a stationary randomized policy
   $\pi$,
   we denote  the WS-AAoI cost by $J(\pi)$ 
   %;\bold{s}[0] ;\bold{s}[0]
   and the average transmission cost by $\bar{D}(\pi)$, defined as follows
           \begin{align}
           \label{Eq_J}
              % \begin{array}{ll} 
           J(\pi)=
        \limsup_{T\rightarrow \infty}~\frac{1}{T}
       \textstyle\sum_{t=1}^{T} %\left(w_1\Delta_1[t]+w_2\Delta_2[t]\right)
         \Bbb{E}\left\{ C(\bold{s}[t])
        %\sum_{i=1}^{2} w_i\Delta_i[t]
       % \Big|~\bold{s}[0]
        \right\},
        \\\label{Eq_D}
        \bar{D}(\pi)=
   % \limsup{dd}
    \limsup_{T\rightarrow \infty}~ \frac{1}{T}  \textstyle\sum_{t=1}^{T} 
      \Bbb{E}\left\{ D(\bold{a}[t])
       %{}~\Big|~\bold{s}[0]
       \right\}.
            %\end{array}
             \end{align}
         %    \vspace{-2 e}
        Note that we have  omitted the dependence on the initial state in \eqref{Eq_J} and \eqref{Eq_D} because they do not vary with the initial state, due to the  unichain structure \cite[Proposition 8.2.1]{Puterman_Book}.
        % \cite[Sec. 8.9.2]{Puterman_Book}\footnote{We note that omitting the initial state from the CMDP problem in Page 405 in Sec. 8.9.2 in \cite{Puterman_Book} is due to Corollary 8.2.7 in \cite{Puterman_Book}.}, \cite{Ross_Optimal}.}
        % \\\textcolor{blue}{Marian: I brought all these thing here to get a conclusion, I am not sure  such referee is good or not.  I think referring to Sec.  8.9.2  in Puterman and [39] are sufficient. What do you suggest?}
        % \\
%      We further denote $\bar{D}(\pi;\bold{s}[0])$ as  the average transmission cost
%      constraint function of %the long-run average holding cost of 
%      CMDP problem, denoted by $\bar{D}(\pi;\bold{s}[0])$, which is the long-run average number of total transmissions in the system, is given by
%     % \begin{equation}
%     %    \nonumber
%     \[
%     \bar{D}(\pi;\bold{s}[0])=
%   % \limsup{dd}
%     \limsup_{T\rightarrow \infty}~ \frac{1}{T} \Bbb{E}\left\{\sum_{t=1}^{T} {
%       D(\bold{a}[t])
%       }{}~\Big|~\bold{s}[0]
%       \right\}.\] 
By these definitions, problem \eqref{Org_P1} can  equivalently
be recast as the following CMDP problem
%\\\textcolor{blue}{Marian: I think that the CMDP problem (6) is an alternative representation/reformulation of the original problem (2).}
   \begin{align}
       %\mathcal{OP}2:~~ 
      % \nonumber   ;\bold{s}[0]    ;\bold{s}[0]
      \begin{array}{ll}\label{Org_P}
             \underset{\pi\in\Pi_{\mathrm{SR}}}{\text{minimize}}~~~~ &  J(\pi)
             \\
          \text{subject~to}~~~~ &
          %\label{Con_Tra1}
          \bar{D}(\pi)
       \le \Gamma_{\max},
      \end{array}
           \end{align}
where $\Pi_{\mathrm{SR}}$ is the set of all stationary randomized policies.
The optimal value of  the CMDP problem \eqref{Org_P} 
%for a given $\bold{s}[0]$ 
is denoted by $J^*$ and an optimal policy is denoted by $\pi^*$. 
%In general, $J^*(\bold{s}[0])$ might depend on $\bold{s}[0]$ \cite[Remark 2.2]{Eitam_CMDP}. 
%We will show that $J^*(\bold{s}[0])$ is independent of the initial state.
%when the AoI values are bounded.
\\\indent In the section below,  we turn to solve the CMDP problem \eqref{Org_P}.
\vspace{ -1 em}
\subsection{Solving the CMDP Problem \eqref{Org_P}}\label{Sec_Opt_MDP}
%In this sect
In order to solve the CMDP problem \eqref{Org_P}, 
%\subsection{Lagrangian Relaxation Method }\label{SubSec_LRM}
we transform it into an (unconstrained) MDP 
%\\\textcolor{blue}{Marian: We have two types of MDP problems: 1) average cost and 2) discounted cost. 
%Our problem is the first one. I think it is good to }
%\\
problem using the Lagrangian relaxation method \cite{Eitam_CMDP},\cite{Ross_Optimal}. 
 %By this method,
 The states, the actions, and the state transition probabilities of the MDP  are the same as those of the CMDP. The immediate  cost function of the MDP is defined as $L(\bold{s}[t],\bold{a}[t];\lambda)=C(\bold{s}[t])+\lambda \big(D(\bold{a}[t]) - \Gamma_{\max}\big)$, where ${\lambda}\ge0$ is a Lagrange  multiplier.
 Accordingly, 
 %for a given Lagrange  multiplier ${\lambda}\ge0$,
%  the Lagrangian \cite{Eitam_CMDP}, acting as the average cost function for the MDP problem, 
%  %as the objective of the MDP
%  is defined as
%  %(constant term $-\lambda \Gamma_{\max}$ and $\bold{s}[0]$ are omitted for the notation simplicity) 
% $%     \begin{align} %\label{Lag}
%   %    \scriptsize
%     %  \tiny
%       %\footnotesize
%     % \small
% % \nonumber
%       \mathcal{L}(\pi,{{\lambda}};\bold{s}[0])
%          %\\&\nonumber
%          \triangleq\limsup_{T\rightarrow \infty}~ \frac{1}{T}
%       \left( \Bbb{E}\left\{\sum_{t=1}^{T} %\left(w_1\Delta_1[t]+w_2\Delta_2[t]\right)
%       % L(\mathcal{S}[t],\bold{a}[t];\lambda)
%       C(\bold{s}[t])+\lambda \left(D(\bold{a}[t])-\Gamma_{\max}\right)~\Big|~\bold{s}[0]
%       % C(\mathcal{S}[t])
%       \right\}
%       % +{\lambda}\Bbb{E}\left\{\sum_{t=1}^{T} 
%      %+\lambda d(\bold{a}[t])
%         \right).
%         $ %\end{align}
%         For a given $\lambda$,
        the MDP problem  is defined by 
        %\vspace{-1 em}
     \begin{align} %\nonumber
     %\begin{array}
     \label{Pro_MDP}
     %;\bold{s}[0]
       \underset{\pi\in \Pi_{\mathrm{SD}}}{\text{minimize}} ~~~ \mathcal{L}(\pi,{{\lambda}}),
       %\end{array}
     \end{align}
     where  
     $\mathcal{L}(\pi,{{\lambda}})
         %\\&\nonumber
         \triangleq\limsup_{T\rightarrow \infty}~ \frac{1}{T}
       %\left( 
       \sum_{t=1}^{T} %\left(w_1\Delta_1[t]+w_2\Delta_2[t]\right)
      % L(\mathcal{S}[t],\bold{a}[t];\lambda)
     \Bbb{E}\left\{  C(\bold{s}[t])+\lambda \big(D(\bold{a}[t])-\Gamma_{\max}\big)
       %~\Big|~\bold{s}[0]
      % C(\mathcal{S}[t])
       \right\}
       % +{\lambda}\Bbb{E}\left\{\sum_{t=1}^{T} 
     %+\lambda d(\bold{a}[t])
        %\right)
        $ %\end{align}
     is  the Lagrangian and $\Pi_{\mathrm{SD}}$ is the set of all  deterministic policies; here,
     %\footnote{Here 
     we restrict to the class of deterministic policies without loss of optimality because there always exists an optimal deterministic policy to the MDP problem \eqref{Pro_MDP} \cite[p. 370]{Puterman_Book}, which is a result of Theorem 8.4.5 in \cite{Puterman_Book} under the unichain structure shown in Theorem \ref{The_Unichain}. 
   %  Note that Theorem 8.4.5 in \cite{Puterman_Book} is applied because of the unichain structure (shown in Theorem \ref{The_Unichain}) and finite state and finite action spaces.
     %}.
     %; hence the restriction is sufficient to find an optimal policy of \eqref{Pro_MDP}.
    % and 
    % $\pi_\lambda$ denotes that the policy is parametrized by $\lambda$. 
%     The optimal value of the MDP problem \eqref{Pro_MDP} is denoted by $\mathcal{L}^*({{\lambda}};\bold{s}[0])$.
     %,  is given by $\mathcal{L}^*({{\lambda}};\bold{s}[0])\triangleq \min\limits_{\pi_\lambda}\mathcal{L}(\pi_\lambda,{{\lambda}};\bold{s}[0])$.
     %Now we present the definition of $\lambda$-optimal policy.
     \begin{Defi} (\textit{$\lambda$-optimal policy})\label{Def_landa_Optimal}
     A  $\lambda$-optimal policy is a solution to \eqref{Pro_MDP} and is denoted by $\pi^*_{\lambda}$. %, possibly randomized,
     \end{Defi}
By \cite[Theorem 12.8]{Eitam_CMDP}, under the Slater and two other technical conditions\footnote{ %The sufficient conditions are:
    The (immediate) AoI cost must be bounded below (see \cite[Eq. 11.1]{Eitam_CMDP}) and satisfy
     the moment condition \cite[Eq. 11.21]{Eitam_CMDP}. It can be  verified that these conditions hold for our problem.}, there exists a Lagrange multiplier $\lambda^*$ 
     such that
     %the optimal value of \eqref{Org_P} is given by
%   %  \begin{align}
%   %     \begin{array}{cc} (\bold{s}[0]) ;\bold{s}[0]
       $     J^{*}=\underset{\pi\in \Pi_{\mathrm{SD}}} %\in\Pi_{\mathrm{D}}}
             {\min} ~\mathcal{L}(\pi,{{\lambda^*}}).
         %\end{array}
         $
%     %\end{align} ;\bold{s}[0]
    %where $\pi$ belongs to a class of stationary deterministic policies. 
    Moreover, if $ \bar{D}(\pi^*_{\lambda^*})=\Gamma_{\max}$, then $\pi^*_{\lambda^*}$ is  an optimal policy for  the CMDP problem \eqref{Org_P}.
 \begin{remark}\label{Remark_2_EOP}
 %\textit{Structure of an optimal policy $\pi^*$:}
 %By verifying  the technical conditions specified by \cite[Assumption 1-5]{Sennot_1AC} and using
 %\cite[Theorem 4.4]{Eitam_CMDP} and analysis of \cite[Ch. 16]{Eitam_CMDP} or 
 %\cite[Theorem 2.5]{Sennot_1AC},
 Under the unichain structure shown in Theorem \ref{The_Unichain}, 
 by results of \cite{Ross_Optimal}, it can be shown that 
 %under some technical conditions which are specified by \cite[Assumption 1-5]{Sennot_1AC},
 an optimal policy of the CMDP problem \eqref{Org_P} is  a  stationary randomized policy  % either: 1)
 %, or 2) a mixing policy 
 that performs randomization 
 between two deterministic $\tilde{\lambda}$-optimal policies where one is feasible and the other is infeasible to \eqref{Org_P} (e.g., \cite{Elif_ARQ, Brancu+2Hop}).
  However, given  such  $\tilde{\lambda}$-optimal policies, finding a randomization factor of such optimal   policy is computationally difficult  \cite[Sec. 3.2]{Adam_Estimation}.
\end{remark} % randomized
%\textcolor{red}{ %recently, inspired by \cite[Theorem 2.5]{Sennot_1AC},  
% a mixing policy \cite{Adam_Estimation} is developed in \cite{Brancu+2Hop,Age_Energy_Tradeoff_CMDP} and 
%\textcolor{red}{Before going to solve  the CMDP problem \eqref{Org_P} in the next subsection, }
%\\\textcolor{red}{I am not sure about the above referee whether it is good or not. To me, it is not useful from any purpose.}
%\\
\vspace{- 0.5 em}
As stated in Remark \ref{Remark_2_EOP}, it is difficult to obtain an optimal stationary randomized policy (by randomizing the two $\tilde{\lambda}$-optimal policies) to the CMDP problem \eqref{Org_P}. Therefore, we will next develop a \textit{practical}  \textit{ near-optimal} (as empirically shown in Section \ref{Sec_Numerical_Res}) deterministic policy to the CMDP problem \eqref{Org_P}.
In particular, we propose a solution relying on bisection search over the Lagrange multiplier $\lambda$ and relative value iteration algorithm (RVIA). Namely, we alternate between solving
%In this regard, by developing an iterative algorithm, we solve 
the MDP problem \eqref{Pro_MDP} for a given $\lambda$ and searching for  a particular value of  $\lambda$ for which $\pi^*_{\lambda}$
%such that the corresponding  $\lambda$-optimal policy 
is feasible for problem \eqref{Org_P} and gives the best performance among all feasible $\lambda$-optimal policies. 
\subsubsection{Solution of the MDP  Problem \eqref{Pro_MDP}} Towards solving the MDP problem \eqref{Pro_MDP}, first,
we present the following theorems
related to a $\lambda$-optimal policy; 
particularly, Theorem \ref{The_Bellman} characterizes a ${\lambda\text{-optimal}}$ policy  and Theorem \ref{Th_Detr} specifies its structure.
%, in the following theorem whose proof  has been inspired by the proof of \cite[Theorem 8]{Eyton_Modiano}. 
Then, we utilize these theorems to develop a structure-aware RVIA \cite[Sec. 8.5.5]{Puterman_Book} that gives a $\lambda$-optimal policy.
%for the case that the AoI values are bounded.
%\textbf{\\$\lambda$-optimal or MDP optimal, unify ...\\ Emphasize on the usefulness of the The.}. 
%Prior to that, we present  the following definition. 
% \begin{Defi}(Switching-type structure)
% We say a policy has 
%  switching-type structure for $\beta$,  if the policy takes action $\beta=i$ at state $\bold{s}$,
%      then it takes the same action at all states  $\bold{s}+z\bold{e}_{i+4}$, for all $z\in\Bbb{N}$, where 
%      $\bold{e}_{i+4}$ is a vector in $\mathbb{B}^{6}$ in which the $(i+4)$-th element is $1$ and the others are $0$, where $\mathbb{B}$ denotes the field of  binary numbers.
% \end{Defi}
\begin{theorem}\label{The_Bellman}
There exists $h(\bold{s})$, for each state  $\bold{s} \in \mathcal{S}$, such that  
\begin{align}\label{Eq_Bellman}
    \begin{array}{cc}
        \mathcal{L}^{*}(\lambda) + h(\bold{s}) = \min_{\bold{a}\in\mathcal{A}}
        \{L(\bold{s},\bold{a};\lambda)+{\textstyle\sum_{\bold{s}'\in\mathcal{S}}}\mathcal{P}_{\bold{s}\bold{s}'}(\bold{a})h(\bold{s}')\},
        %~ \bold{s}\in\mathcal{S},
    \end{array}
\end{align}
where $\mathcal{L}^{*}(\lambda)$ is the optimal value of the MDP problem \eqref{Pro_MDP} for given $\lambda$. Moreover, an optimal action taken by a $\lambda$-optimal policy in each state $\bold{s} \in \mathcal{S}$, $\pi^*_{\lambda}(\bold{s})$, is given by
\begin{equation}\label{Eq_Optimal_policy}
\begin{array}{cc}
  \pi^*_{\lambda}(\bold{s}) \in \arg\min_{\bold{a}\in\mathcal{A}}
        \left\{L(\bold{s},\bold{a};\lambda)+{\textstyle\sum_{\bold{s}'\in\mathcal{S}}}\mathcal{P}_{\bold{s}\bold{s}'}(\bold{a})h(\bold{s}')\right\},~ \bold{s}\in\mathcal{S}. 
\end{array}
\end{equation}
\end{theorem}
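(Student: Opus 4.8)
The plan is to recognize that the asserted identity \eqref{Eq_Bellman} is precisely the \emph{average-cost optimality equation} (ACOE) for the unconstrained average-cost MDP \eqref{Pro_MDP}, and to establish it by invoking the standard existence theory for finite-state, finite-action, unichain average-cost MDPs \cite[Sec.~8.4]{Puterman_Book}. Concretely, I would verify the hypotheses of the existence result (e.g., \cite[Theorem~8.4.3, Theorem~8.4.5]{Puterman_Book}) and then read off, as a direct consequence, both the existence of a relative value function $h(\cdot)$ together with the constant gain $\mathcal{L}^{*}(\lambda)$, and the arg-min characterization \eqref{Eq_Optimal_policy} of a $\lambda$-optimal policy.

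First I would confirm the three structural prerequisites. The state space $\mathcal{S}$ and the action space $\mathcal{A}$ are both finite, as stated in Section~\ref{Sec_CMDP_1}. The immediate cost $L(\bold{s},\bold{a};\lambda)=C(\bold{s})+\lambda\big(D(\bold{a})-\Gamma_{\max}\big)$ is bounded: since $x_i=\psi_i-\theta_i$ and $y_i=\delta_i-\psi_i$, the AoI cost telescopes as $C(\bold{s})=\sum_i w_i(\theta_i+x_i+y_i)=\sum_i w_i\delta_i\le N\sum_i w_i$ because every AoI value is capped at $N$, while $D(\bold{a})\in\{0,1,2\}$, so $L(\cdot,\cdot;\lambda)$ ranges over a bounded interval for each fixed $\lambda\ge 0$. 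Crucially, the unichain property---that the transition matrix with elements $\mathcal{P}_{\bold{s}\bold{s}'}(\bold{a})$ induced by every deterministic policy is unichain---is exactly what Theorem~\ref{The_Unichain} furnishes.

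With these prerequisites in place, the existence theorem yields a scalar $g$ and a bias function $h:\mathcal{S}\to\mathbb{R}$ satisfying $g+h(\bold{s})=\min_{\bold{a}\in\mathcal{A}}\{L(\bold{s},\bold{a};\lambda)+\sum_{\bold{s}'\in\mathcal{S}}\mathcal{P}_{\bold{s}\bold{s}'}(\bold{a})h(\bold{s}')\}$ for every $\bold{s}\in\mathcal{S}$. Under the unichain structure the optimal average cost does not depend on the initial state, so $g$ coincides with the optimal value $\mathcal{L}^{*}(\lambda)$ of \eqref{Pro_MDP}---the same observation already used to drop the initial-state dependence in \eqref{Eq_J}--\eqref{Eq_D}---which gives \eqref{Eq_Bellman}. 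The second claim is then immediate: the cited theory certifies that any stationary deterministic policy selecting, in each state $\bold{s}$, an action attaining the minimum on the right-hand side is average-optimal for \eqref{Pro_MDP}, which is exactly \eqref{Eq_Optimal_policy}.

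The step I expect to require the most care is the verification that the general existence theorem applies \emph{under the unichain hypothesis alone}, rather than under the stronger communicating or ergodic assumptions sometimes imposed, so that one obtains a single scalar gain and a well-defined solution of the ACOE. This is where Theorem~\ref{The_Unichain} does the essential work: it is precisely the condition under which \cite[Theorem~8.4.5]{Puterman_Book} yields a constant optimal gain, ruling out the state-dependent gains that would otherwise prevent \eqref{Eq_Bellman} from holding with a scalar $\mathcal{L}^{*}(\lambda)$. Everything else is a routine specialization of the cited average-cost machinery to the cost and transition structure defined in \eqref{Eq_TranPro_Unr}, and the same existence result will subsequently underpin the convergence of the structure-aware RVIA used to compute $\pi^*_{\lambda}$.
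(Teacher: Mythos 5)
Your proposal is correct and follows essentially the same route as the paper: it invokes the unichain property established in Theorem~\ref{The_Unichain} and then cites the standard average-cost optimality-equation results of \cite[Sec.~8.4]{Puterman_Book} for both the existence of $h(\cdot)$ with constant gain $\mathcal{L}^{*}(\lambda)$ and the arg-min characterization of $\pi^*_{\lambda}$. The additional verification of finiteness and cost boundedness is a harmless elaboration of the same argument.
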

\begin{proof}
Because of  Theorem \ref{The_Unichain}, the first part, i.e., \eqref{Eq_Bellman}, follows from \cite[Theorem 8.4.3]{Puterman_Book}. Then, the second part, i.e., \eqref{Eq_Optimal_policy}, directly follows from \cite[Theorem 8.4.4]{Puterman_Book}.
\end{proof}
 \begin{theorem}
 %(Structure of $\lambda$-optimal polices)
   \label{Th_Detr} 
 Any $\lambda$-optimal policy of problem \eqref{Pro_MDP}
 has
     a switching-type structure
%      \footnote{We say a policy has 
%  switching-type structure for $\beta$ with respect to $\bold{y}=(y_1,y_2)$,  if the policy takes action $\beta=i$ at state $\bold{s}$, and it also
%       takes the same action at all states  $\bold{s}+k\bold{e}_{3i}$, for all $k\in\Bbb{N}$, where 
%      $\bold{e}_{3i}$ is a vector in $\mathbb{B}^{6}$ in which the $(3i)$-th element is $1$ and the others are $0$, where $\mathbb{B}$ denotes the field of  binary numbers.} 
     for $\beta$ with respect to $\bold{y}=(y_1,\dots,y_I)$. 
     This is, if the policy takes action $\beta=i,~i\in\{1,\dots,I\},$ at state $\bold{s}$, it also
      takes the same action at all states  $\bold{s}+k\bold{e}_{3i}$, for all $k\in\Bbb{N}$, where 
     $\bold{e}_{3i}$ is a vector  in which the $(3i)$-th element is $1$ and the others are $0$.
     %, where $\mathbb{B}$ denotes the field of  binary numbers. 
 \end{theorem}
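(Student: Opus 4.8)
The plan is to combine the Bellman characterization of a $\lambda$-optimal policy in Theorem \ref{The_Bellman} with a monotone comparative-statics (submodularity) argument. Define the state--action value function
$$Q(\bold{s},\bold{a}) \triangleq L(\bold{s},\bold{a};\lambda) + \sum_{\bold{s}'\in\mathcal{S}} \mathcal{P}_{\bold{s}\bold{s}'}(\bold{a})\,h(\bold{s}'),$$
so that by \eqref{Eq_Optimal_policy} the optimal relay action at $\bold{s}$ is a minimizer of $Q(\bold{s},\cdot)$. The whole statement reduces to showing that the indicator $\mathds{1}_{\{\beta=i\}}$ and the coordinate $y_i$ are complementary in this minimization, i.e., that raising $y_i$ can only make forwarding source $i$ more attractive; this is exactly the single-crossing behavior that yields a threshold in $y_i$.

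First I would establish a monotonicity lemma: $h(\bold{s})$ is non-decreasing in each $y_i$ with all other coordinates held fixed. I would prove this by relative value iteration, setting $V_0\equiv 0$ and $V_{n+1}(\bold{s}) = \min_{\bold{a}}\{L(\bold{s},\bold{a};\lambda)+\sum_{\bold{s}'}\mathcal{P}_{\bold{s}\bold{s}'}(\bold{a})V_n(\bold{s}')\}$, and showing by induction that every $V_n$ is non-decreasing in $y_i$, so the property passes to the limit $h$ (the limit exists and RVIA converges by the unichain property of Theorem \ref{The_Unichain}). The inductive step uses a one-step coupling: the immediate cost increases by exactly $w_i\ge 0$ when $y_i\mapsto y_i+1$ and is otherwise action-independent, while from \eqref{Eq_TranPro_Unr} the only next-state coordinate affected by $y_i$ is $y_i'$, which under every action is either a reset value not depending on $y_i$, or equals $\tilde{y}_i$ or $\tilde{y}_i+\tilde{x}_i$, both non-decreasing in $y_i$. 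Coupling arrivals and channel outcomes gives $V_n(\bold{s}'^{+})\ge V_n(\bold{s}')$ termwise, hence $V_{n+1}$ inherits the monotonicity.

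The core step is a submodularity property for $\bold{s}^{+}\triangleq \bold{s}+\bold{e}_{3i}$ and any $\beta'\neq i$,
$$Q(\bold{s}^{+},(\alpha,i)) - Q(\bold{s},(\alpha,i)) \;\le\; Q(\bold{s}^{+},(\alpha,\beta')) - Q(\bold{s},(\alpha,\beta')).$$
The intuition I would formalize is that when $\beta=i$ the relay resets $y_i$ to a $y_i$-independent value with probability $p_2$, so the ``dangerous'' term $\tilde{y}_i$ that carries the $y_i$-increase into the next slot survives only with probability $1-p_2$, whereas for $\beta'\neq i$ it survives with probability one. Since the $y_i'$-increment is at most one and $h$ is non-decreasing in $y_i$, the left increase is at most $(1-p_2)$ times a per-unit marginal-value term while the right increase is one times the same term. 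Rearranging the display gives $Q(\bold{s}^{+},(\alpha,i))-Q(\bold{s}^{+},(\alpha,\beta'))\le Q(\bold{s},(\alpha,i))-Q(\bold{s},(\alpha,\beta'))$, so feeding in the optimality of $\beta=i$ at $\bold{s}$ propagates it to $\bold{s}^{+}$, and induction on $k$ extends it to all $\bold{s}+k\bold{e}_{3i}$, $k\in\mathbb{N}$.

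The hard part is making this submodularity rigorous given the joint action $(\alpha,\beta)$ and the fact that $h$ does not separate across sources: the per-unit marginal value $h(\cdot+\text{one unit of }y_i)-h(\cdot)$ is in principle evaluated at different surrounding next-states under $\beta=i$ (source $j$ untouched) versus $\beta'=j$ (source $j$ reset), and under possibly different $\alpha$ once the minimization over the transmitter decision is taken into account. I would address this by strengthening the value-iteration induction to carry the submodularity as a joint invariant together with monotonicity and verifying that the Bellman operator preserves it. The factorization $\mathcal{P}_{\bold{s}\bold{s}'}(\bold{a})=\prod_i \Pr\{\bold{s}_i'\mid \bold{s}_i,\bold{a}\}$ in \eqref{Eq_TranPro_Unr}, the fact that the $y_i'$-increment depends only on $\delta_i$ relative to $N$ (not on $\alpha$ or $\beta$), and the reset-to-$y_i$-independent-value structure are precisely the ingredients that let the coupling go through; the truncation at $N$, where the increment drops to zero, needs care but does not break the monotonicity or the single-crossing.
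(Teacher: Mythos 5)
Your proposal follows essentially the same route as the paper's proof: a monotonicity lemma for the value function (the paper's Lemma~\ref{Lemm_Non-dec}, proved by the same induction on the iterates of \eqref{Eq_RVI}), followed by a single-crossing comparison of the Bellman $Q$-differences for $\beta=i$ versus $\beta'\neq i$ at $\bold{s}$ and $\bold{s}+k\bold{e}_{3i}$, exploiting exactly the fact that $\beta=i$ resets $y_i$ to a $y_i$-independent value with probability $p_2$. Where you diverge is in how you would close the "hard part": you propose carrying submodularity as an extra invariant through value iteration, but this machinery is not needed. From \eqref{Eq_TranPro_Unr}, conditioned on $\rho_2=0$ the next-state distribution is identical for every choice of $\beta$ (no destination-side reset occurs for any source), so in the difference $\Bbb{E}\{V(\bold{s}')\mid\bold{s},(\alpha,i)\}-\Bbb{E}\{V(\bold{s}')\mid\bold{s},(\alpha,\beta')\}$ the $(1-p_2)$-weighted terms cancel term by term; what survives is a $p_2$-weighted sum in which the $\beta=i$ contributions do not depend on $y_i$ at all and the $\beta'$ contributions enter with a negative sign, so plain monotonicity of $V$ in $y_i$ already yields the decreasing-differences inequality --- this is precisely the paper's $U\le G\le 0$ computation, and it sidesteps the "marginal values at different surrounding states" issue you worry about. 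The one caveat you correctly sense but your fix would not cure, and which the paper also leaves implicit, is the joint action: both arguments compare actions sharing the \emph{same} transmitter decision $\alpha$ on both sides (the paper takes $(2,1)$ versus $(2,2)$), whereas concluding that the minimizing $\beta$ at $\bold{s}+k\bold{e}_{3i}$ equals $i$ when the minimizing $\alpha$ may change between the two states requires an additional step that neither write-up supplies explicitly.
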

 \begin{proof}
 See Appendix \ref{Proof_App_Threshod}.
\end{proof}
RVIA is an iterative procedure that utilizes the optimality equation \eqref{Eq_Bellman}. Particularly, at each iteration $n\in\{0,1,\dots\}$, for each state $\bold{s} \in\mathcal{S}$, we have 
%, as adopted in, e.g.,  \cite{Eyton_Modiano,Elif_ARQ},
%\cite[Ch. 6]{Puterman_Book}
%)
 %algorithm  %\cite{Sennot_Value_Iter} 
 %to solve the MDP problem \eqref{Pro_MDP}.
 %for a given  $\lambda$ and parameter $N$.  
 %RVIA
 %which
% uses the following value iteration equation,  with initialization
 %starting with an initial values as 
% $V_0(\bold{s})=0,~\forall\bold{s}\in\mathcal{S}$,
 %for any $ \bold{s}\in\mathcal{S},$
 %\cite{Eyton_Modiano}
 %\vspace{- 1.5 em}
\begin{equation}
\label{Eq_RVI}
%\nonumber
\begin{array}{ll}
    %$
    & V_{n+1}(\bold{s})=\min_{\bold{a}\in\mathcal{A}}\left\{ L(\bold{s},\bold{a};\lambda)+{\textstyle\sum_{\bold{s}'\in\mathcal{S}}}\mathcal{P}_{\bold{s}\bold{s}'}(\bold{a})h_n(\bold{s}')\right\},~~ %~\forall \bold{s}\in\mathcal{S}, %$
    \\
    & 
    h_{n+1}(\bold{s}) = V_{n+1}(\bold{s})-V_{n+1}(\mathbf{s}_\mathrm{ref}),
    %-h_n^{(N)}(\mathbf{s}_0),$
\end{array}
\end{equation}
\sloppy 
where %$L(\bold{s},\bold{a};\lambda)=C(\bold{s},\bold{a})+\lambda D(\bold{a})$ and 
% $h_n(\bold{s}') \triangleq V_n(\bold{s}')-V_n(\mathbf{s}_\mathrm{ref})$, 
% %\cite[p. 266]{Sennot_Value_Iter}, 
% $\Bbb{E}\{h_n(\bold{s}')\}=\underset{\bold{s}'\in\mathcal{S}}{\sum}\mathcal{P}_{\bold{s}\bold{s}'}(\bold{a})h_n(\bold{s}')$,
% and
$\mathbf{s}_\mathrm{ref} \in \mathcal{S}$ is an arbitrarily chosen reference state. 
% For any initialization $V_0(\bold{s})$, the sequences $\{h_n(\bold{s})\}_{n=1,2,\dots}$  and $\{V_n(\bold{s})\}_{n=1,2,\dots}$ converge, 
% i.e., ${\lim_{n\rightarrow \infty}~h_{n}(\bold{s}) = h(\bold{s}) }$ and ${\lim_{n\rightarrow \infty}~V_{n}(\bold{s}) = V(\bold{s})}$. Moreover,  ${h(\bold{s}) = V(\bold{s}) - V(\bold{s}_\mathrm{ref})} $ satisfies \eqref{Eq_Bellman}, and by substituting $h(\bold{s})$ into \eqref{Eq_Bellman} and using the fact that $V(\bold{s})=\min_{\bold{a}\in\mathcal{A}}\left\{ L(\bold{s},\bold{a};\lambda)+{\textstyle\sum_{\bold{s}'\in\mathcal{S}}}\mathcal{P}_{\bold{s}\bold{s}'}(\bold{a})h(\bold{s}')\right\}$ (see \eqref{Eq_RVI}), we obtain
% %\vspace{- 1 em}
% $V(\bold{s}_\mathrm{ref}) = \mathcal{L}^{*}(\lambda)$ (which implies that RVIA automatically returns the optimal value of the MDP problem \eqref{Pro_MDP}). 
The   structure-aware RVIA is presented in Alg. \ref{A_RVI} (see Steps 3--16), where $\varepsilon$ is a small constant for the RVIA termination criterion. In particular, at each iteration of RVIA, in Steps 6--9, we exploit the switching-type  structure  specified in Theorem \ref{Th_Detr}  to find an optimal action for each state $\bold{s}$, i.e., ${\bold{a}^* \triangleq {\arg\min_{\bold{a}\in\mathcal{A}}} %\arg\min_{\bold{a}\in\mathcal{A}_{\bold{s}}}
 \{L(\bold{s},\bold{a};\lambda)+ %\underset
 {\sum_{\bold{s}'\in\mathcal{S}}}\mathcal{P}_{\bold{s}\bold{s}'}(\bold{a})h(\bold{s}') \}} $;
%to bypass some steps in finding $\bold{a}^*$ in Step 10,
%hence the name structure-aware RVIA.by using the switching-type structure,
specifically,  in computing ${\bold{a}^*}\triangleq (\alpha^*,\beta^*)$, whenever we have determined an optimal  decision of $\beta^*$ in Step 6, then we only need to find an optimal decision of  $\alpha^*$. %\in\{0,1,2\}$.
%using the structure possibly  reduces the size of the action space from $9$ to $3$ in computing ${\bold{a}^*}$.
\\\indent
The following theorem shows that RVIA given by \eqref{Eq_RVI}  (i.e., Steps 3--16 of Alg. \ref{A_RVI}) converges and  returns the optimal value of the MDP problem \eqref{Pro_MDP}.
% COMMENTED 
\vspace{-0.5 em}
\begin{theorem}\label{The_RVIA_Converg}
For any initialization $V_0(\bold{s})$, the sequences $\{h_n(\bold{s})\}_{n=1,2,\dots}$  and $\{V_n(\bold{s})\}_{n=1,2,\dots}$, generated by \eqref{Eq_RVI},  converge, 
i.e., ${\lim_{n\rightarrow \infty}~h_{n}(\bold{s}) \triangleq h(\bold{s}) }$ and ${\lim_{n\rightarrow \infty}~V_{n}(\bold{s}) \triangleq V(\bold{s})}$. Moreover,  ${h(\bold{s}) = V(\bold{s}) - V(\bold{s}_\mathrm{ref})} $ satisfies \eqref{Eq_Bellman} and
$V(\bold{s}_\mathrm{ref}) = \mathcal{L}^{*}(\lambda)$.
\end{theorem}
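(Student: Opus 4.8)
The plan is to establish convergence through the \emph{span semi-norm} contraction theory for unichain average-cost MDPs, following the framework of Puterman. First I would recall that by the unichain property of Theorem~\ref{The_Unichain} together with Theorem~\ref{The_Bellman}, the optimality equation \eqref{Eq_Bellman} admits a solution pair $(\mathcal{L}^{*}(\lambda),h)$. The RVIA update \eqref{Eq_RVI} can be written as $V_{n+1}=T h_n$ followed by the normalization $h_{n+1}(\mathbf{s})=V_{n+1}(\mathbf{s})-V_{n+1}(\mathbf{s}_\mathrm{ref})$, where the Bellman operator is $(Tv)(\mathbf{s})\triangleq\min_{\mathbf{a}\in\mathcal{A}}\{L(\mathbf{s},\mathbf{a};\lambda)+\sum_{\mathbf{s}'\in\mathcal{S}}\mathcal{P}_{\mathbf{s}\mathbf{s}'}(\mathbf{a})v(\mathbf{s}')\}$. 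Since adding a constant to $v$ shifts $Tv$ by the same constant, the iteration is well defined on the quotient space of value functions modulo constants, on which the span semi-norm $\mathrm{sp}(v)\triangleq\max_{\mathbf{s}}v(\mathbf{s})-\min_{\mathbf{s}}v(\mathbf{s})$ is a genuine norm.

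Second, I would prove that $T$ is a strict contraction in the span semi-norm, i.e., there exists $\gamma\in[0,1)$ with $\mathrm{sp}(Tv-Tw)\le\gamma\,\mathrm{sp}(v-w)$ for all $v,w$. Because the state and action sets are finite and $T$ is monotone and translation-equivariant, this reduces to a coupling argument on the transition matrices, with contraction modulus given by a Doeblin-type coefficient $\gamma=1-\min\sum_{\mathbf{s}'}\min\{\mathcal{P}_{\mathbf{s}_1\mathbf{s}'}(\mathbf{a}),\mathcal{P}_{\mathbf{s}_2\mathbf{s}'}(\mathbf{a}')\}$ minimized over state and action pairs. The subtlety — and the step I expect to be the main obstacle — is that under unichain alone the induced chains may be periodic, in which case this coefficient can equal one and $T$ fails to be a strict span contraction. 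I would resolve this by invoking the aperiodicity (data) transformation of Puterman, which replaces $\mathcal{P}_{\mathbf{s}\mathbf{s}'}(\mathbf{a})$ by $\tau\mathcal{P}_{\mathbf{s}\mathbf{s}'}(\mathbf{a})+(1-\tau)\mathds{1}_{\{\mathbf{s}'=\mathbf{s}\}}$ for some $\tau\in(0,1)$; this introduces self-loops that make every deterministic policy's chain aperiodic while leaving \eqref{Eq_Bellman}, its solution $h$, and the optimal average cost $\mathcal{L}^{*}(\lambda)$ unchanged. Alternatively, for this particular model aperiodicity can be verified directly by exhibiting a self-transition, e.g., the maximal-AoI state self-loops under the idle action whenever no source generates a packet.

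Finally, with the strict span contraction established, $\{h_n\}$ is Cauchy in $\mathrm{sp}(\cdot)$ and hence converges to a limit $h$; by continuity of $T$ (a minimum of finitely many affine maps) and passing to the limit in \eqref{Eq_RVI}, $h$ satisfies the relative optimality equation and, by uniqueness up to an additive constant, coincides with the solution of \eqref{Eq_Bellman}. To identify the limits, note that $h_n(\mathbf{s}_\mathrm{ref})=0$ for every $n$, so $h(\mathbf{s}_\mathrm{ref})=0$. Continuity of $T$ then gives $V_{n+1}=Th_n\to Th=\mathcal{L}^{*}(\lambda)+h$ by \eqref{Eq_Bellman}, so $V(\mathbf{s})=\mathcal{L}^{*}(\lambda)+h(\mathbf{s})$; evaluating at the reference state yields $V(\mathbf{s}_\mathrm{ref})=\mathcal{L}^{*}(\lambda)$ and therefore $h(\mathbf{s})=V(\mathbf{s})-V(\mathbf{s}_\mathrm{ref})$, which completes the proof.
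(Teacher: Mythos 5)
Your overall strategy---span semi-norm analysis of the Bellman operator for a unichain model, with aperiodicity secured either by the data transformation or by exhibiting a self-loop---is exactly the machinery behind the result the paper invokes. The paper's own proof is much shorter: it cites Bertsekas, Prop.~4.3.2, and reduces the theorem to checking that every deterministic policy induces a unichain \emph{and aperiodic} chain, the former from Theorem~\ref{The_Unichain} and the latter from the self-transition of the accessible state $\bold{s}^{\mathrm{acc}}$ constructed in Appendix~\ref{Proof_Unichain}. So in spirit you and the paper agree; the difference is that you unpack the contraction argument rather than citing it.

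There is, however, one genuine error in the unpacking. You claim that once aperiodicity holds, $T$ is a \emph{one-step} strict span contraction with modulus $\gamma=1-\min\sum_{\bold{s}'}\min\{\mathcal{P}_{\bold{s}_1\bold{s}'}(\bold{a}),\mathcal{P}_{\bold{s}_2\bold{s}'}(\bold{a}')\}$, and you attribute the possible failure of $\gamma<1$ solely to periodicity. That diagnosis is incomplete: $\gamma=1$ whenever two states have disjoint one-step supports, which happens here regardless of aperiodicity. For example, take two states differing only in $y_1$ (both small enough that no $\min(\cdot,N)$ saturation occurs) and the idle action from each; by \eqref{Eq_TranPro_Unr} every reachable next state carries $y_1'=\tilde{y}_1=y_1$, so the two one-step supports are disjoint and the overlap coefficient is zero. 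Hence $T$ is \emph{not} a one-step span contraction for this MDP, aperiodic or not. The correct statement---and the actual content of the results in Puterman and Bertsekas---is that for a finite unichain aperiodic model there exists an integer $J\ge 1$ such that $T^{J}$ is a strict span contraction; the Cauchy/limit-identification part of your argument then goes through with $T^{J}$ in place of $T$. This is a repairable gap, but as written the contraction step is false. A second, minor point: your direct aperiodicity check (self-loop at the all-maximal-AoI state when no source generates a packet) fails if some $\mu_i=1$; the paper handles this by splitting the sources and using $\bold{s}^{\mathrm{acc}}_i=(0,N,0)$ for sources with $\mu_i=1$ and $(N,0,0)$ otherwise. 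Your fallback via the aperiodicity transformation avoids this issue entirely.
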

%\vspace{- 1 em}
\begin{proof}
The proof  follows directly from \cite[Prop. 4.3.2]{bertsekas2007dynamic}. Thus, we need to show that the hypothesis of \cite[Prop. 4.3.2]{bertsekas2007dynamic} holds.
%, which is shown in Appendix \ref{Appendix_RVI_Con}.
According to \cite[p. 209]{bertsekas2007dynamic}, it is sufficient to show that  
the  Markov chain, described by the transition probability matrix with elements $\mathcal{P}_{\bold{s}\bold{s}'}(\bold{a})$, corresponding to every deterministic policy, is unichain and aperiodic.
The unichain structure has been proven in Theorem \ref{The_Unichain},
and aperiodicity follows from the fact that 
the recurrent state $\bold{s}^{\mathrm{acc}}$ (see Appendix \ref{Proof_Unichain}) has self transition (i.e., ${\mathcal{P}_{\bold{s}^{\mathrm{acc}}\bold{s}^{\mathrm{acc}}}(\bold{a}) >0,~\forall\, \bold{a}\in\mathcal{A}}$); this is because from  \cite[Exercise 4.1]{Gallager_SP_Theory_App}, such (recurrent) state is also aperiodic, and  then, by \cite[Theorem 4.2.8]{Gallager_SP_Theory_App}, 
all states that belong to the same class as the recurrent state $\bold{s}$ are aperiodic.
\end{proof}
\setlength{\textfloatsep}{0pt}
% \setlength{\textfloatsep}{1\baselineskip}
%\lipsone
%\lipsum[1-2]
% \begin{figure*}
% \vspace{-5 em}
\begin{algorithm} [t!]
%\tiny
%\small
%\scriptsize
\SetAlgoLined
\setlength{\AlCapSkip}{1em}
%\KwResult{Write here the result }
 \KwInput{$
 %= \lambda_{\text{ini}}^{+},
 1)~ \text{System parameters:}~\Gamma_{\max},~I,~\{\mu_i,w_i\}_{i\in\mathcal{I}},~
  p_1,~p_2,$
  \\
  $2)~ \text{RVIA and bisection parameters:}~
  N,~\zeta,~\varepsilon,~\lambda^{+},~\lambda^{-},
  ~\text{and}~ 3)~\text{arbitrarily chosen}~ \bold{s}_{\mathrm{ref}}\in\mathcal{S}$ }   % initialization\;^
% \begin{algorithmic}
\tcp{Bisection search over $\lambda$}
 \While{ $|\lambda^{+}-\lambda^{-}|\ge\zeta$ }
 {
 
 $\lambda_{\mathrm{bis}}=\frac{\lambda^{+}+\lambda^{-}}{2}$; 
 
 \tcp{Initialization of RVIA}
 $\text{Set for each}  ~\bold{s}\in\mathcal{S}:~V(\bold{s})=0,~h(\bold{s})=0,~h_{\mathrm{old}}(\bold{s})=1$\;  %^{(N)}
 
 \tcp{RVIA for a given $\lambda_{\mathrm{bis}}$}
 \While{$
 \max_{\bold{s}\in\mathcal{S}}|h(\bold{s})-h_{\mathrm{old}}(\bold{s})|>\varepsilon$} 
 {
 
 \For   {$\text{each}~\bold{s}\in\mathcal{S}$}{
 \tcp{Using the switching-type structure}
  \eIf {there exists $k\in\Bbb{N}$  such that  $\beta=i$ for $\bold{s}-k\bold{e}_{3i}$}{
 
   $\bold{a}^{*}\leftarrow(\alpha^{*},i)$, where
 
 $\alpha^{*}=\arg\min_{\alpha\in\{0,1,2\}}\{L(\bold{s},\bold{a};\lambda_{\mathrm{bis}})+
 {\textstyle\sum_{\bold{s}'\in\mathcal{S}}}\mathcal{P}_{\bold{s}\bold{s}'}(\bold{a})h(\bold{s}')
 \} $\;
   }
   {
 $
 \bold{a}^{*}\leftarrow 
 {\arg\min_{\bold{a}\in\mathcal{A}}} %\arg\min_{\bold{a}\in\mathcal{A}_{\bold{s}}}
 \{L(\bold{s},\bold{a};\lambda_{\mathrm{bis}})+ {\textstyle\sum_{\bold{s}'\in\mathcal{S}}}\mathcal{P}_{\bold{s}\bold{s}'}(\bold{a})h(\bold{s}') \} $\;
 }
 
 $
 V(\bold{s})\leftarrow L(\bold{s},\bold{a}^{*};\lambda_{\mathrm{bis}})+{\textstyle\sum_{\bold{s}'\in\mathcal{S}}}\mathcal{P}_{\bold{s}\bold{s}'}(\bold{a})h(\bold{s}')$\; %-h^{(N)}
 
  $h_{\mathrm{tmp}}(\bold{s})\leftarrow V(\bold{s})-V(\mathbf{s}_\mathrm{ref})$\; %\% $\bold{s}_0$ is the reference state\;
  
  }
  $h_{\text{old}}(\bold{s})\leftarrow h(\bold{s})$, $h(\bold{s})\leftarrow h_{\mathrm{tmp}}(\bold{s})$\;
 
 }

  Compute $\bar{D}(\pi^*_{\lambda_{\mathrm{bis}}})$\; 
 \eIf {$\bar{D}(\pi^*_{\lambda_{\mathrm{bis}}})> \Gamma_{\max}$}{
 
  $ \lambda^{-}\leftarrow\lambda_{\mathrm{bis}}$\;
  
  }
   { $\lambda^{+}\leftarrow \lambda_{\mathrm{bis}}$\;
  
   }
 }

 Compute  $\pi^*_{\lambda^{+}}$ and $\pi^*_{\lambda^{-}}$ using \eqref{Eq_Optimal_policy}\;
 
  \KwOutput{$\pi^*_{\lambda^-}$ and $\pi^*_{\lambda^+}$}
  %as the optimal value of the CMDP problem \eqref{Org_P} } %\pi^{*}=\eta \pi^*_{\lambda_1}+(1-\eta)\pi^*_{\lambda_2} \text{in Theorem \ref{Th_ES_OP}}} %\lambda_{1},\lambda_{2}, \pi^{*}_{\lambda}}
 \caption{\small 
Structure-aware RVIA with bisection  to solve  the CMDP problem \eqref{Org_P} 
} \label{A_RVI}
%\end{algorithmic}
\end{algorithm} 
% \vspace{-5em}
% \end{figure*}
%\setlength{\intextsep}{0pt} 
%\lipstwo
%\vspace*{-2cm}
%\setlength{\intextsep}{0\baselineskip}
\subsubsection{Searching for Lagrange Multiplier}
%According to the properties that
\textcolor{blue}{
By  \cite[Lemma 3.1]{Ross_Optimal},
 $J(\pi^{*}_{\lambda})$ is increasing  in $\lambda$ and $\bar{D}(\pi^{*}_{\lambda})$ is decreasing in $\lambda$\footnote{
 \textcolor{blue}{
 Intuitively, 
 increasing $\lambda$ penalizes more the transmission cost in the Lagrangian; thus, by increasing $\lambda$, the average number of transmissions $\bar{D}(\pi^{*}_{\lambda})$ decreases, which, in turn, increases the WS-AAoI $J(\pi^{*}_{\lambda})$.}
 }.
 } 
Accordingly, we are interested in the smallest value of Lagrange multiplier $ \lambda$ for which policy $\pi^{*}_{\lambda}$ is feasible for the CMDP problem \eqref{Org_P}.
Formally, we want to find ${\tilde{\lambda}\triangleq\inf\{\lambda>0: \bar{D}(\pi^{*}_{\lambda})\le \Gamma_{\max}\}}$. 
To search for  $\tilde{\lambda}$, we apply  bisection
 that exploits the monotonicity of  $\bar{D}(\pi^{*}_{\lambda})$ with respect to $\lambda$.
We initialize bisection with  $\lambda^{-}=0$ and $\lambda^{+}$ as a large positive real number. Then, bisection iterates until $|\lambda^{+}-\lambda^{-}|\le \zeta$, where  
 $\zeta$ is a 
small constant for the bisection termination criterion. 
%update procedure 
%(see Steps 18--22 of Alg. \ref{A_RVI}) 
%is applied  
 Details are stated in Alg. \ref{A_RVI}. 
%Note that $\pi^*{\lambda^+}$ policiesn is f 
%More precisely, $\zeta$ should  be set in order to guarantee that   $\pi^*_{\lambda^+}$ and $\pi^*_{\lambda^-}$ converges.
\\\indent
It is worth stressing  that, as stated in Remark \ref{Remark_2_EOP}, 
 there is no guarantee, 
 %(roughly, no possibility), 
 even for an arbitrarily small  $\zeta$, 
that  the feasible deterministic policy $\pi^*_{\lambda^+}$, obtained by Alg. \ref{A_RVI},
would be an optimal policy  for the  CMDP problem \eqref{Org_P}. Nevertheless, the empirical results in Section \ref{Sec_Numerical_Res} will show that policy $\pi^*_{\lambda^+}$ has near-optimal performance. 
At the same time, the infeasible policy $\pi^*_{\lambda^-}$
can serve as a benchmark, because it provides a lower bound to an optimal solution of 
%could be benchmark as a lower bound solution to the CMDP problem 
\eqref{Org_P}. 
%The empirical results in Section \ref{Sec_Numerical_Res} show 
In Section \ref{Sec_Numerical_Res}, we will empirically show that
%the deterministic policy $\pi^{*}_{\lambda^+}$  has a near-optimal performance  and the deterministic
policy $\pi^{*}_{\lambda^-}$ is a tight lower bound solution.
\\\indent
%\textit{Complexity of Algorithm \ref{A_RVI}:}
It is essential to note that the computational complexity of the (relative) value iteration algorithms dramatically grows as the state and action spaces increase, i.e., the \textit{curse of dimensionality problem};
\textcolor{blue}{
the detailed complexity analysis of Alg. \ref{A_RVI} can be found in Sec. \ref{Sec_ComAna}.}
Since RVIA is run at each iteration of  bisection,
Alg.~\ref{A_RVI} 
 becomes computationally inefficient  
when  applied 
for 
a large number of sources.
To circumvent the curse of dimensionality, we propose a low-complexity scheduling policy  in the next section.
 \vspace{- 1 em}
\section{Low-Complexity Scheduling Policy to Solve Problem \eqref{Org_P1}}\label{Sec_LC_MW}
In this section, we devise   DPP-SP (i.e., drift-plus-penalty-based scheduling policy), using the idea of the drift-plus-penalty method \cite{Neely_Sch},  to solve the main problem \eqref{Org_P1}.
The proposed  DPP-SP is a heuristic policy that has low complexity and, as empirically shown in Section  \ref{Sec_Numerical_Res}, obtains a near-optimal performance.
We prove that DPP-SP is guaranteed to satisfy constraint \eqref{Con_Sam1}. 
\\\indent
According to the drift-plus-penalty method \cite{Neely_Sch}, the time average constraint   \eqref{Con_Sam1} is enforced by transforming it into queue stability constraint. 
%Accordingly,
%According to the drift-plus-penalty  method,
Accordingly, a virtual queue is associated  for  constraint \eqref{Con_Sam1}   in such a way  that the stability of the virtual queue implies satisfaction of the constraint.
%To begin, we define the virtual queue (simply queue) associated with  time average constraint \eqref{Con_Sam1} at slot $t$.
Let $H[t]$ denote the virtual queue associated with  constraint \eqref{Con_Sam1}  in slot $t$  which evolves as
\vspace{-1 em}
\begin{align}\label{Eq_Evol_Q}
\begin{array}{cc}
%\nonumber
    H[t+1]=\max\{H[t]-\Gamma_{\max}+D(\bold{a}[t]),~0\}. %,~~~
    %+\underbrace{
%   D(\bold{a}[t])=\mathds{1}_{\{\alpha[t]\neq 0\}}+\mathds{1}_{\{\beta[t]\neq 0\}}.
\end{array}
\end{align}
%where $ D(\bold{a}[t])=\mathds{1}_{\{\alpha[t]\neq 0\}}+\mathds{1}_{\{\beta[t]\neq 0\}}$.
%\\
By \cite[Ch. 2]{Neely_Sch}, the time average constraint \eqref{Con_Sam1} is satisfied if
 the  virtual queue is \textit{strongly stable}, i.e.,
%\begin{align}
    %\begin{array}{cc}
         $\limsup_{T\rightarrow \infty} ~\frac{1}{T} \sum_{t=1}^{T}\Bbb{E}\{H[t]\} <+\infty$.
         %which implies that the average mean backlog of the virtual queue is finite.
    %\end{array}
%\end{align}
%\textcolor{blue}{
% Under mild boundedness assumptions, the strong  stability implies all of the other forms of stability (e.g., rate stability) \cite[Theorem 2.8]{Neely_Sch}. Further, it is easy to show that, if $H[t]$ is rate stable, i.e., $\underset{t\rightarrow \infty}{\lim}\frac{H[t]}{t}=0$ w.p.1, the constraint \eqref{Con_Sam1} is satisfied w.p.1.
%}
%\\
%If $H[t]$ is rate stable, i.e.,
%then the constraint \eqref{Con_Sam1} is satisfied [Neely Lem 2.1]. }
%\\\indent
%\textcolor{red}{By \cite[Chapter 4]{Neely_Sch}, if the virtual queue is  stable, the transmission  constraint is satisfied, i.e.,  
%\[
% %\limsup_{T\rightarrow \infty} ~\frac{1}{T} \sum_{t=1}^{T}\Bbb{E}\{H[t]\} <+\infty~\Rightarrow
% % \limsup_{T\rightarrow \infty}~ \frac{1}{T} \sum_{t=1}^{T}\Bbb{E}\left\{
%       D(\bold{a}[t])
%       \right\} \le \Gamma_{\max}.
% \]
% }
%\\\indent
%\textcolor{blue}{
Next, we define the Lyapunov function and its drift which are used to define the virtual queue stability condition.
\\\indent
We  define a quadratic  Lyapunov function as  $L(H[t])=\frac{1}{2}H^2[t]$  \cite[Ch. 3]{Neely_Sch}.
The Lyapunov function indicates the size of the virtual queue, i.e., if
the Lyapunov function is small, then  the virtual queue is small,
and if the Lyapunov function is large, then the virtual queue is large. By minimizing the expected change of
the Lyapunov function from one slot to the next, the virtual queue can be stabilized \cite[Ch. 4]{Neely_Sch}. Let $\mathcal{Z}[t]=\{\bold{s}[t],H[t]\}$ denote the system state in slot $t$. The one-slot \textit{conditional Lyapunov drift},
%\cite{Neely_Sch},
denoted by $\Delta[t]$, is defined as the expected change in the Lyapunov function over one slot given the current system state $\mathcal{Z}[t]$. 
Accordingly,  $\Delta[t]$ is given by
\vspace{-1.5  em}
\begin{equation}\label{Eq_Drift_F}
    \begin{array}{ll}
        \Delta[t]=\Bbb{E}\left\{L(H[t+1])-L(H[t])~\big|~\mathcal{Z}[t]\right\},
    \end{array}
\end{equation}
 where the expectation is with respect to the
(possibly random)  decisions made in reaction to the current system state.
 \\\indent
Applying the drift-plus-penalty method to main problem \eqref{Org_P1}, we seek for a control policy that minimizes an upper bound on the following drift-plus-penalty  function, $\varphi[t]$, at every  slot $t$: 
     \begin{equation}\label{Eq_DPP_Or_1}
     \begin{array}{ll}
      \varphi[t] & = \Delta[t]+ V 
      \textstyle{\sum_i}w_i \Bbb{E}\{  (\delta_i[t+1] + \psi_i[t+1])~\big|~\mathcal{Z}[t]\} 
      \\ &= \Delta[t]+V \sum_i 
       w_i
      \Bbb{E}\left\{
      \left(2\theta_i[t+1]+2x_i[t+1]+y_i[t+1]\right)~\big|~\mathcal{Z}[t]\right\},  
     \end{array}
 \end{equation}
     where the expectation is with respect to the channel randomness (i.e., $\rho_1[t]$ and $\rho_2[t]$) and
(possibly random)  decisions made in reaction to the current system state; parameter $V \ge 0$ adjusts a trade-off between  the size of the virtual queue   and the objective function.
It is noteworthy  that, in \eqref{Eq_DPP_Or_1}, different from considering the original immediate objective function (i.e., the sum AoI at the destination) as the penalty term, we have added the sum AoI at the relay to the penalty term so that  
%the penalty term is defined so that 
minimizing the upper bound of the drift-plus-penalty function at each slot also concerns  the evolution of the sum AoI at the relay. % and the sum AoI at the destination.
% Moreover, it is remarkable that  the penalty term is defined in such a way that 
% %to make a possibility that the decision at the transmitter in each slot $t$, obtained by   
% minimizing the upper bound of the drift-plus-penalty function to obtain the decision variables at each slot concerns the evolution of the sum AoI at the relay and the sum AoI at the destination.
%to minimize 
%Moreover, $\psi_i[t+1]$ and $\delta_i[t+1] $  are the AoI of source $i$ at the relay and at the destination, respectively.
     %(i.e., the sum AoI at the destination). 
   %  \\\indent
    %  Because of the existence of the $\max\{\cdot\}$ operator in the objective function $\varphi[t]$ of \eqref{Org_P1_DPP} due to the   virtual queue  evolution in \eqref{Eq_Evol_Q}, working with the drift term of $\varphi[t]$ is difficult. Hence, following the
%\textcolor{blue}{standard:  Marian: I think it is not necessarily a standard procedure hence we can remove that. What do you suggest? } 
%procedure of the drift-plus-penalty method,  
%Accordingly, 
\\\indent
To obtain the upper bound of the drift-plus-penalty function, we derive an upper bound for the drift term $\Delta[t]$, given by the following proposition. 
%in the following that can be readily
%used in the optimization procedure \cite[Ch. 4]{Neely_Sch}. 
%Note that when we minimize
%  the upper bound instead of the conditional Lyapunov drift, the same logic for stabilizing the virtual queue discussed before still holds true.
%  }
    %  is intractable, we minimize an upper bound of $\varphi[t]$ \cite[Ch. 3]{Neely_Sch}.
    % Because minimizing the objective function $\varphi[t]$ in \eqref{Org_P1_DPP} is intractable, we minimize an upper bound of $\varphi[t]$ \cite[Ch. 3]{Neely_Sch}.
%Accordingly, we derive the upper bound for the conditional
%Lyapunov drift in the following.
 \begin{Pro}\label{Lemma_Upper_bound}
The upper bound for the conditional
Lyapunov drift in \eqref{Eq_Drift_F} is given by 
\begin{equation}
    \begin{array}{ll}\label{Eq_Drift_1}
         \Delta[t] %\le %B + H[t]\Big(\Bbb{E}\left\{D[t]~|~\mathcal{S}[t]\right\}-\Gamma_{\max} \Big)
         \le B + H[t]\big(\Bbb{E}\{D(\bold{a}[t])~\big|~\mathcal{Z}[t]\}-\Gamma_{\max}\big),
    \end{array}
\end{equation}
where $B=1/2\Gamma_{\max}^2+2$.
%is any real number greater than $0.5\Gamma_{\max}^2+2$.
 \end{Pro}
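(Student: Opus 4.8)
The plan is to follow the standard drift-bounding argument of Lyapunov optimization. The starting point is the virtual-queue recursion \eqref{Eq_Evol_Q}. Since squaring the $\max\{\cdot,0\}$ operator can only decrease its value, I would first apply the elementary inequality $\big(\max\{x,0\}\big)^2 \le x^2$ with $x = H[t]-\Gamma_{\max}+D(\bold{a}[t])$ to obtain
\[
H^2[t+1] \le \big(H[t]-\Gamma_{\max}+D(\bold{a}[t])\big)^2.
\]
This removes the nonlinearity introduced by the projection onto the nonnegative reals and reduces the drift to a quadratic.

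Next I would expand the square and group terms so that the part linear in $H[t]$ is separated from a purely state-action remainder. Writing $\Lambda \triangleq D(\bold{a}[t])-\Gamma_{\max}$, expanding $\big(H[t]+\Lambda\big)^2 = H^2[t] + 2H[t]\Lambda + \Lambda^2$ and dividing by two gives
\[
L(H[t+1])-L(H[t]) \le H[t]\big(D(\bold{a}[t])-\Gamma_{\max}\big) + \tfrac{1}{2}\big(D(\bold{a}[t])-\Gamma_{\max}\big)^2.
\]

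The only remaining task---and the one place where the specific structure of the problem enters---is to bound the quadratic remainder $\tfrac{1}{2}\big(D(\bold{a}[t])-\Gamma_{\max}\big)^2$ by the constant $B$. Here I would use that $D(\bold{a}[t]) = \mathds{1}_{\{\alpha[t]\neq 0\}}+\mathds{1}_{\{\beta[t]\neq 0\}} \in \{0,1,2\}$ and $\Gamma_{\max}\in(0,2]$ are both nonnegative, so the cross term $-2D(\bold{a}[t])\Gamma_{\max}$ is nonpositive and may be discarded; this yields $\big(D(\bold{a}[t])-\Gamma_{\max}\big)^2 \le D^2(\bold{a}[t])+\Gamma_{\max}^2 \le 4 + \Gamma_{\max}^2$, so the remainder is at most $\tfrac{1}{2}\Gamma_{\max}^2 + 2 = B$. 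Finally, taking the conditional expectation $\Bbb{E}\{\cdot \mid \mathcal{Z}[t]\}$ of the slot-wise inequality and noting that $H[t]$, $\Gamma_{\max}$, and $B$ are deterministic given $\mathcal{Z}[t]$ produces exactly \eqref{Eq_Drift_1}.

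I do not expect a genuine obstacle: the argument is entirely routine. The only point requiring minor care is the choice of the constant $B$---one must invoke both $D(\bold{a}[t])\le 2$ and the nonnegativity of $D(\bold{a}[t])$ and $\Gamma_{\max}$ (to discard the cross term) in order to arrive at the stated value $B = \tfrac{1}{2}\Gamma_{\max}^2 + 2$ rather than a looser constant.
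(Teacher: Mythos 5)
Your proof is correct and follows essentially the same route as the paper: the paper simply cites the packaged inequality $\left(\max\{A_1-A_2+A_3,0\}\right)^2\le A_1^2+A_2^2+A_3^2+2A_1(A_3-A_2)$ from Neely's book, which is exactly your two steps (drop the $\max$, then discard the nonpositive cross term $-2\Gamma_{\max}D(\bold{a}[t])$) combined, and it likewise concludes with $\Bbb{E}\{D^2(\bold{a}[t])\mid\mathcal{Z}[t]\}\le 4$ to reach $B=\tfrac{1}{2}\Gamma_{\max}^2+2$.
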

 \begin{proof}
   See Appendix \ref{Appendix_UpperB}.
 \end{proof}
 %}
 %\textcolor{blue}{
 
 %To proceed in the derivation of DPP-SP, we 
 %define
 %the following binary random variables. 
% continue  minimizing an upper bound of $\varphi[t]$, we proceed with the following definitions.
 %Let define
 %$b_i[t]\in\{0,1\}$ (resp. $d_i[t]\in\{0,1\}$) be a binary random variable that indicates the successful  reception of packet source $i$ by the relay  (resp. the destination) in slot $t$. We have
 %$b_i[t] =  \rho_1[t]\mathds{1}_{\{\alpha[t]=i\}}$ and $d_i[t] = \rho_2[t]\mathds{1}_{\{\beta[t]=i\}}$.
 % Comment for Proof
%  It is clear the quantity of $b_i[t]$ (respt. $d_i[t]$) depends on decision $\alpha[t]$ (respt. $\beta[t]$) and channel reliability $\rho_1[t]$ (respt. $\rho_2[t]$) at time $t$. We have
%  \begin{equation}\label{Eq_Expectations}
%      \begin{array}{ll}
%         \Bbb{E}\{d_i[t]\}=p_2\Pr\{\beta[t]=i\},~~  \Bbb{E}\{b_i[t]\}=p_1\Pr\{\alpha[t]=i\},~\Bbb{E}\{\Lambda_i[t]\}=\mu_i.
%      \end{array}
%  \end{equation}
% By these definitions, 
% We then have
Let us express the evolution of the AoI and the relative AoIs of each source $i\in\mathcal{I}$ by  the following compact formulas\footnote{These expressions are for unbounded AoI values as the derivation of DPP-SP does not require to bound them.}
 %, which are used to   %compact definitions of the AoIs in the  
\begin{equation}\label{Eq_AoIs_Compact}
    \begin{array}{ll} 
      & \theta_i[t+1]=(1-u_i [t+1])(\theta_i[t]+1),~
      \\ & x_i[t+1]=(1-\rho_1[t]\mathds{1}_{\{\alpha[t]=i\}})x_i[t]+u_i [t+1](\theta_i[t]+1), 
     \\& y_i[t+1]=(1-\rho_2[t]\mathds{1}_{\{\beta[t]=i\}})y_i[t]+\rho_1[t]\mathds{1}_{\{\alpha[t]=i\}}x_i[t].
    \end{array}
    \end{equation}
   % where  $b_i[t] =  \rho_1[t]\mathds{1}_{\{\alpha[t]=i\}}$ and $d_i[t] = \rho_2[t]\mathds{1}_{\{\beta[t]=i\}}$.
%    }
   % \textcolor{blue}{
   %\\
   %\indent
   Using Proposition \ref{Lemma_Upper_bound} and substituting \eqref{Eq_AoIs_Compact}  into   \eqref{Eq_DPP_Or_1},  the upper bound for the drift-plus-penalty function $\varphi[t]$ can be derived as
% Now
 %we continue calculation of 
 %we define an upper bound for the DPP function $\varphi[t]$ by 
 %\eqref{Eq_DPP_Or_1}
\begin{equation}\label{Eq_upperB_DPP}
    \begin{array}{ll}
        % &\varDelta[t]+V\Bbb{E}\{AoI_D[t+1]~|~\mathcal{Z}[t]\}
         \varphi[t]
         &
         %\stackrel{(a)}{\le}
         \le
         B +  H[t](\Bbb{E}\{D(\bold{a}[t])~\big|~\mathcal{Z}[t]\}-\Gamma_{\max})
         \\&
         + V \sum_i w_i
         \left(\Bbb{E}\left\{(1-\rho_2[t]\mathds{1}_{\{\beta[t]=i\}})y_i[t]+(1-\rho_1[t]\mathds{1}_{\{\alpha[t]=i\}})x_i[t] + x_i[t]+2\theta_i[t]+2~\big|~\mathcal{Z}[t]\right\}\right).
        % \\ 
      % &  \stackrel{(a)}{\le} 
     %  B+H[t](\Bbb{E}\{D(\bold{a}[t])~|~\mathcal{Z}[t]\}-\Gamma_{\max})+ V \sum_i \left( \Bbb{E}\{1-d_i[t]~|~\mathcal{Z}[t]\}y_i[t]+x_i[t]+\theta_i[t]+1 \right)
      % \\&
      % + V \sum_i \Bbb{E}\{1-b_i[t]~|~\mathcal{Z}[t]\}x_i[t]
      %\triangleq \tilde{\varphi}[t],
    \end{array}
\end{equation}
%where % $(a)$ is due to \eqref{Eq_Drift_1} and \eqref{Eq_AoIs_Compact}, and 
%$(a)$ is due to the fact that $\sum_i \Bbb{E}\{1-b_i[t]~|~\mathcal{Z}[t]\}x_i[t]$ has a positive value under any policy.
%\footnote{It is worth noting that  we added the term $\sum_i \Bbb{E}\{1-b_i[t]~|~\mathcal{Z}[t]\}x_i[t]$ to  devise an upper bound for the drift-plus-penalty function in such a way that minimizing it gives a reasonable policy to  the main problem \eqref{Org_P1}; otherwise, minimizing the upper bound in the L.H.S of $(a)$ in \eqref{Eq_upperB_DPP} does not provide a reasonable policy for problem \eqref{Org_P1} because the policy will takes  $\alpha[t]=0$ in all slots.}.
% We note that the upper bound above holds for $\varphi[t]$ obtained by any algorithm, i.e., under any scheduling policy,
% %. Moreover, the DPP function $\varphi[t]$ has the 
% %upper bound 
% for all possible values of $H[t]$ and all parameters $V\ge 0$.
%Now,
% the main problem \eqref{Org_P1} is transformed into the following per-slot optimization problem:
%         \begin{subequations}\label{Org_P1}
%   \begin{align}
%      \underset{\alpha[t],\beta[t]}{\text{minimize}}~~~~
%       &  
%       \tilde{\varphi}[t]
%         \\%\nonumber
%       \text{subject~to}~~~~ &
%       %\label{Con_Tra1}
%       \alpha[t]\in\{0,1,2\},\beta[t]\in\{0,1,2\}, 
%     \label{Con_Sam1}
%             \end{align}
%      \end{subequations}
%      where $\tilde{\varphi}[t]$ is the R.H.S of $(b)$  in \eqref{Eq_upperB_DPP}.
%--------------- 
%Having defined the upper bound \eqref{Eq_upperB_DPP},  %instead of solving \eqref{Org_P1_DPP},
\indent
Now, we turn to minimize the  upper bound of the drift-penalty-function given in \eqref{Eq_upperB_DPP}.
To this end, we first compute the expectations with respect to the channel randomness, i.e., we have ${\Bbb{E}\{\rho_2[t]\mathds{1}_{\{\beta[t]=i\}}~|~\mathcal{Z}[t]\}=p_2\Bbb{E}\{\mathds{1}_{\{\beta[t]= i\}}~|~\mathcal{Z}[t]\}}$ and $ {\Bbb{E}\{\rho_1[t]\mathds{1}_{\{\alpha[t]=i\}}~|~\mathcal{Z}[t]\}=p_1\Bbb{E}\{\mathds{1}_{\{\alpha[t]= i\}}~|~\mathcal{Z}[t]\}}$. Then, after removing  the terms  in \eqref{Eq_upperB_DPP}
that are  independent of the decision variables,
we need to minimize the following expression:
\begin{equation}\label{Eq_MW_UB_Simplified}
\begin{array}{ll}
&
H[t]\Bbb{E}\{\mathds{1}_{\{\beta[t]\neq 0\}}~|~\mathcal{Z}[t]\}
   -Vp_2\sum_i w_i \Bbb{E}\{\mathds{1}_{\{\beta[t]= i\}}~|~\mathcal{Z}[t]\} y_i[t] % \Pr \{\beta[t]=i\}
   %######## Plus
   \\ 
  &
   +  H[t]\Bbb{E}\{\mathds{1}_{\{\alpha[t]\neq 0\}}~|~\mathcal{Z}[t]\}
   -Vp_1\sum_i w_i
   \Bbb{E}\{\mathds{1}_{\{\alpha[t]= i\}}~|~\mathcal{Z}[t]\} x_i[t], %\Pr \{\alpha[t]=i\}
    \end{array}
\end{equation}
where the expectation is with respect to the (possibly random) decisions.
\\\indent
 To minimize the expression in \eqref{Eq_MW_UB_Simplified}, we follow the approach of opportunistically minimizing a (conditional) expectation
\cite[p. 13]{Neely_Sch}, i.e., the expression in \eqref{Eq_MW_UB_Simplified} is minimized by the algorithm that observes the current system state $\mathcal{Z}[t]$ and chooses  $\alpha[t]$ and $\beta[t]$ to minimize 
\begin{equation}\label{Eq_dpp_sp_2}
    \begin{array}{cc}
        H[t]\mathds{1}_{\{\alpha[t]\neq 0\}}
  -Vp_1\sum_i w_i \mathds{1}_{\{\alpha[t]= i\}} x_i[t]
  +
      H[t]\mathds{1}_{\{\beta[t]\neq 0\}}
  -Vp_2\sum_i w_i \mathds{1}_{\{\beta[t]= i\}} y_i[t].
    \end{array}
\end{equation}
The expression in  \eqref{Eq_dpp_sp_2} is separable with respect to   $\alpha[t]$ and $\beta[t]$,  
 thus we obtain $\alpha[t]$ and $\beta[t]$ by solving the following problems 
 \vspace{-1 em}
\begin{align}\label{Eq_alpha}
      \underset{\alpha[t]\in\{0,1,\dots,I\}}{\text{minimize}}~~
      H[t]\mathds{1}_{\{\alpha[t]\neq 0\}}
  -Vp_1 \textstyle\sum_i w_i \mathds{1}_{\{\alpha[t]= i\}} x_i[t], % \Pr \{\alpha[t]=i\}
    \\
    \label{Eq_beta}
      \underset{\beta[t]\in\{0,1,\dots,I\}}{\text{minimize}}~~
      H[t]\mathds{1}_{\{\beta[t]\neq 0\}}
  -Vp_2 \textstyle \sum_i w_i \mathds{1}_{\{\beta[t]= i\}} y_i[t]. % \Pr \{\beta[t]=i\}
  % \end{array}
\end{align}
 It can be inferred from   problem \eqref{Eq_alpha} that if $H[t]\ge \max_{i\in\mathcal{I}} \{Vp_1 w_i x_i[t]\}$, then the optimal action is $\alpha[t]=0$; otherwise, the optimal action is $\alpha[t]=\argmax_{i\in\mathcal{I}} \{Vp_1 w_ix_i[t]\}$. Problem \eqref{Eq_beta} has the similar solution with respect to $\beta[t]$.
\\\indent
In summary,  the proposed DPP-SP works as follows:
 at each slot $t$, the controller observes $\mathcal{Z}[t]$\vspace{-0.5em}\footnote{\textcolor{blue}{It is worthwhile to mention that the observed state $\mathcal{Z}[t]$ has 
 the virtual queue $H[t]$ in addition to  
the observed state of RVIA. However, the virtual queue
is just an extra variable maintained in the
internal memory and updated according to the action taken in the past
and consists of only one variable regardless of the number of sources.
Thus, observing the virtual queue
does not need any signaling or exchange of information in the system because it is, as the name
suggests, virtually created by the controller and its dynamic only depends on its current value and the transmission actions taken.
%Accordingly, it does not bring any extra complexity to DPP-SP but requires a memory
%(to register its value) in the implementation of DPP-SP.
}
}
and
 determines  the transmission decision variables according to the following rules
 \begin{equation}\label{Eq_MW_Policy_1}
  \begin{array}{ll}
      \text{If} ~\max_{i\in\mathcal{I}} \{Vp_1  w_i x_i[t]\} \ge
      H[t],~ \text{then}~ \alpha[t]=\argmax_{i \in\mathcal{I}} \{V p_1  w_i x_i[t]\};~ \text{otherwise},~ \alpha[t]=0,
\\
\text{If} ~\max_{i\in\mathcal{I}} \{V p_2 w_i y_i[t]\} \ge H[t], ~ \text{then}~ \beta[t]=\argmax_{i\in\mathcal{I}} \{V p_2 w_i y_i[t]\}; ~ \text{otherwise},~ \beta[t]=0.
  \end{array}  
\end{equation}
\indent What remains is to show that DPP-SP, operating  according to  \eqref{Eq_MW_Policy_1}, satisfies constraint \eqref{Con_Sam1}.
We prove this in the following theorem.
 \begin{theorem}\label{Th_Stability}
 Assume that $\Bbb{E}\{L(H[0])\}$ is finite. For any finite $V$,
  the  virtual queue under  DPP-SP that operates according to  \eqref{Eq_MW_Policy_1} is strongly stable, implying that DPP-SP satisfies constraint \eqref{Con_Sam1}.
 \end{theorem}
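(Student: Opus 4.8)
The plan is to run the standard Lyapunov drift argument of \cite[Ch. 4]{Neely_Sch}, combining the per-slot minimization that defines DPP-SP with the Slater condition. First I would extract from the Slater condition stated after \eqref{Con_Sam1} an $\omega$-only benchmark policy: since there exist decisions with $\bar{K}<\Gamma_{\max}$, there is a stationary randomized policy $\pi^{\omega}$ that ignores the state and a constant $\epsilon>0$ with
\[
\mathbb{E}\{D(\mathbf{a}^{\omega}[t])\mid\mathcal{Z}[t]\}=\mathbb{E}\{D(\mathbf{a}^{\omega}[t])\}\le\Gamma_{\max}-\epsilon .
\]
This policy is the competitor against which DPP-SP is compared; it is admissible in the per-slot problem but need not be implementable, which is all that is required.

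The second step uses the defining optimality of DPP-SP. By construction, the rules \eqref{Eq_MW_Policy_1} opportunistically minimize \eqref{Eq_dpp_sp_2}, hence minimize the decision-dependent part of the right-hand side of the drift-plus-penalty bound \eqref{Eq_upperB_DPP} over all (possibly randomized) decisions. Therefore evaluating that right-hand side at $\pi^{\omega}$ can only enlarge it, and since \eqref{Eq_upperB_DPP} is a valid upper bound for the decisions DPP-SP actually takes, I obtain
\[
\varphi[t]\le B+H[t]\big(\mathbb{E}\{D(\mathbf{a}^{\omega}[t])\}-\Gamma_{\max}\big)+VC,
\]
where $C$ is a finite uniform bound on $\sum_i w_i\,\mathbb{E}\{2\theta_i[t+1]+2x_i[t+1]+y_i[t+1]\mid\mathcal{Z}[t]\}$ under $\pi^{\omega}$; such a $C$ exists because every AoI is capped at $N$. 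Recalling from \eqref{Eq_DPP_Or_1} that $\varphi[t]=\Delta[t]+V\sum_i w_i\mathbb{E}\{(\delta_i[t+1]+\psi_i[t+1])\mid\mathcal{Z}[t]\}$ and dropping the nonnegative penalty term on the left, the Slater slack yields the pointwise drift bound
\[
\Delta[t]\le B+VC-\epsilon\,H[t].
\]

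In the final step I would take the full expectation of this inequality (law of iterated expectations, using \eqref{Eq_Drift_F}), sum over $t=0,\dots,T-1$, and telescope the Lyapunov terms:
\[
\mathbb{E}\{L(H[T])\}-\mathbb{E}\{L(H[0])\}\le (B+VC)T-\epsilon\textstyle\sum_{t=0}^{T-1}\mathbb{E}\{H[t]\}.
\]
Using $L(H[T])\ge0$ and the hypothesis $\mathbb{E}\{L(H[0])\}<\infty$, dividing by $\epsilon T$ and taking $\limsup_{T\to\infty}$ gives $\limsup_T \frac{1}{T}\sum_{t=0}^{T-1}\mathbb{E}\{H[t]\}\le (B+VC)/\epsilon<\infty$, which is precisely strong stability of $H[t]$. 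By \cite[Ch. 2]{Neely_Sch}, strong stability implies that constraint \eqref{Con_Sam1} holds, completing the argument.

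I expect the main obstacle to be the comparison step: one must justify that the state-independent Slater policy is a legitimate competitor in the per-slot minimization underlying DPP-SP, and that its induced penalty term is uniformly bounded (here guaranteed by the cap $N$, even though the DPP-SP rules themselves were derived without bounding the AoI). Once the pointwise bound $\Delta[t]\le B+VC-\epsilon H[t]$ is established, the remaining telescoping and the passage from strong stability to constraint satisfaction are routine.
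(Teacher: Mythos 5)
Your proof is correct and follows essentially the same route as the paper's: compare the drift-plus-penalty upper bound achieved by DPP-SP against a benchmark policy with constraint slack, bound the penalty term by a constant using the AoI cap $N$, and telescope the Lyapunov terms. The only cosmetic difference is that the paper instantiates the benchmark concretely as the always-idle policy, which has $\mathbb{E}\{D(\mathbf{a}^{\mathrm{idl}}[t])\mid\mathcal{Z}[t]\}=0$ and hence delivers the maximal slack $\epsilon=\Gamma_{\max}$ directly, with no need to invoke the Slater condition.
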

 \begin{proof}
 See Appendix \ref{App_Stability}.
 \end{proof}
 %\textit{Complexity of DPP-SP:}
 As it can be seen in \eqref{Eq_MW_Policy_1},  DPP-SP performs only \textit{two simple operations} to determine the actions at each slot.
Hence,  DPP-SP has low complexity  and can easily support systems with large numbers of sources.
\textcolor{blue}{
The detailed complexity analysis of DPP-SP can be found in Sec. \ref{Sec_ComAna}.}

\vspace{-1 em}
\section{ A Deep Reinforcement Learning Algorithm to Solve Problem \eqref{Org_P1}}\label{Sec_Learning}
In this section,   we develop a   deep reinforcement learning algorithm to solve the main problem \eqref{Org_P1}.
Inspired by \cite{Lya_DRL_Indus}, we use the Lyapunov optimization theory to convert the CMDP problem \eqref{Org_P}  into an
MDP problem which is then solved by a model-free deep learning algorithm, namely, D3QN (i.e., dueling double deep Q-network) \cite{D3DQN_First,D3QN_II_TWC}.
\textcolor{blue}{
Note that another approach to the CMDP problem \eqref{Org_P} could be a primal-dual reinforcement learning algorithm.  In contrast to our algorithm, such an algorithm  
leads to an iterative optimization procedure. Thus, the proposed Lyapunov-based learning algorithm is in general simpler than a primal-dual DRL-based algorithm.
}

It is worth pointing that: i) as D3QN is a model-free algorithm, we do not require 
the state transition probabilities of the MDP  problem, thus, the proposed deep learning  is applicable  for unknown environments (i.e., when  the packet arrival rates and the successful transmission probabilities of the (wireless) links  are not available at the controller), 
and ii) there is no guarantee that the proposed deep learning algorithm provides an optimal policy to the main problem \eqref{Org_P1};
%, thus it is an approximate solution to problem \eqref{Org_P1}; 
however, an advantage of the deep learning algorithm  is coping with unknown environments with large state and/or action spaces which can be used as a benchmark policy.
%, e.g., we do not need to bound  the AoI values in the implementation of D3QN. 
We further note that to implement the proposed learning algorithm, we do not need to bound the AoI values and store the state space (which may require considerable memory).
\\\indent
%By definitions in Section \ref{Sec_LC_MW}, 
We define the  expected time average reward function, obtained by policy $\pi$, as
\begin{align}
\begin{array}{cc}
     R(\pi)\triangleq
   \limsup_{T\rightarrow \infty} \frac{1}{T}
      \textstyle \sum_{t=0}^{T}
  %\gamma^t
 \Bbb{E}\left\{ r[t]\right\},  %~\Big|~\mathcal{Z}[t]
\end{array}
\end{align}
% $
%   %\text {maximize}_{\pi} ~~
%   R(\pi)\triangleq
%   \lim_{T\rightarrow \infty} 
%   %\frac{1}{T} 
%     %\sup
%     %\frac{1}{T}
%      %  \left(
%       \Bbb{E}\left\{\sum_{t=1}^{T}
%       %\left(w_1\Delta_1[t]+w_2\Delta_2[t]\right)
%   %  +\lambda d(\bold{a}[t])
%   \gamma^t
%   r[t]
%  % (\mathcal{Z}[t],\bold{a}[t])~
%   %\Big|~\mathcal{Z}[0]=z_0 %=-c_{Lya.}(\boldsymbol{s}[t],\bold{a}[t])
%       \right\},
% $
where  
$ %\[
r[t] = 
%R(\mathcal{Z}[t],\bold{a}[t])=
-\Big(L(H[t+1])-L(H[t])+V \sum_i \textcolor{blue}{w_i} \delta_i[t+1] \Big)
$ %\] \theta_i[t+1] + x_i[t+1]+y_i[t+1]
is the \textit{immediate reward function}, and
%and  $\gamma\in(0,1)$ is a discount factor. Moreover, 
$L(H[t])=\frac{1}{2}H^2[t]$ is  the quadratic  Lyapunov function with virtual queue $H[t]$  given by \eqref{Eq_Evol_Q}.
\blue{It is worth pointing out that the Lyapunov drift in the reward function  is introduced  to guarantee the satisfaction of the average constraint \eqref{Con_Sam1} \cite{Lya_DRL_Indus, TWC_Learning}. }
%the  Lyapunov function  given by \eqref{Eq_l}. 
%, and %-c_{Lya.}(\boldsymbol{s}[t],\bold{a}[t])$. 
%\[
%\bold{S}[t]\triangleq\left(\bold{s}[t];H[t]\right)
%\]
We want to solve the following problem
%\vspace{-1 em}
\begin{equation}\label{Prob_LR}
    \begin{array}{ll}
         \underset{\pi}{\text{maximize}}~~R(\pi).
    \end{array}
\end{equation} %$$\underset{\pi}{\text{maximize}}~R(\pi)$$
\indent
Problem \eqref{Prob_LR} 
can be formulated as  an MDP problem, where $r[t]$ is the immediate reward, the state is ${\mathcal{Z}[t]=\{\bold{s}[t],H[t]\}}$, and the action is ${\bold{a}[t]=(\alpha[t],\beta[t])}$. To solve the MDP problem,  we apply D3QN. Implementation details are presented in Sec. \ref{Sec_Numerical_Res}.
%%%%%%%%%%%%%%%%%%%%%%%%%%%%%%%%%%%%%%
%\setlength{\textfloatsep}{1pt}
%\setlength{\textfloatsep}{1\baselineskip}
\vspace{-1 em}
\textcolor{blue}{
\section{Complexity Analysis}\label{Sec_ComAna}
Here, we analyze the (overall) \textit{computational} complexity of the proposed policies.
First,  in terms of complexity, there are two different phases: 1) offline phase, i.e., an initial phase to find a policy, and 2) online phase where the (offline-derived) policy is used to generate the corresponding action at each slot. DPP-SP does not have the offline phase, whereas the deterministic policy obtained by Alg. \ref{A_RVI} and the deep learning policy have both the offline and online phases. 
% For instance, the offline phase of the deterministic policy is running Alg. \ref{A_RVI} to get the lookup table, and the online phase is just finding the corresponding action of  state observed at each slot. 
Next, we elaborate the complexity of the proposed polices in each phase. The complexity of the policies is summarized in Table \ref{Table_Com}.
\\\indent
$\bullet$ \underline{The deterministic policy:} 
The complexity of the offline phase of the deterministic policy is the complexity of running Alg. \ref{A_RVI}. 
Alg. \ref{A_RVI} is an iterative algorithm that involves iterating between bisection and  RVIA.  
The complexity order of   \text{each iteration} of RVIA is at most $\mathcal{O}(|\mathcal{A}||\mathcal{S}|^2)$, where 
 the state space size $|\mathcal{S}|$ is approximately $N^{3I}$ and the action space size $|\mathcal{A}|$ is $(I+1)^2$.
%, where $I$ is the number of sources and $N$ is bound of the AoI values.
Accordingly, the complexity of the offline phase  of the deterministic  policy is $ {  \mathcal{O}\left( M_1 M_2 I^2 N^{6I} \right) }$, where $M_1$ and $M_2$, are, respectively, the iterations required in bisection and RVIA.  
The complexity of the online phase is $\mathcal{O}(1)$ since it is needed to just fetch the corresponding action of each state from the  lookup table  obtained in the offline phase.
% (can be storied as a vector)
\\\indent
$\bullet$ \underline{DPP-SP:} 
As mentioned above, DPP-SP does not have the offline phase. In the online phase, the policy needs $I$ comparisons for each of the two decision variables, thus, $2I$ comparisons in total. Therefore, the complexity of DPP-SP in the online phase is $\mathcal{O}(I)$.
\\\indent
$\bullet$ 
\underline{The deep learning policy:} 
The offline phase of the deep learning policy is its training phase. 
%In the online phase, the policy exploits the learned optimal state-action Q-function.  
Because the policy is based on the deep neural network, its  (computational) complexity is mainly related to the model and size of the neural network and the training process.
The training complexity of the neural network consists of two stages: 1) the forward propagation   algorithm (forward pass) and 2) the backpropagation  algorithm (backward pass).
The complexity of the forward propagation  algorithm is $\mathcal{O}\left( P_\mathrm{h}(P_\mathrm{i}+ M_3P_\mathrm{h} + P_\mathrm{o})\right)$ \cite{Multi_Agent_WCOML,Optimal_VNF}, where
$P_\mathrm{i} = 1+3I$ is the number of neurons of the input layer (which equals the number of elements in the state vector),  and $P_\mathrm{o} = |\mathcal{A}| $ is the number of neurons of the output layer. Moreover, 
$P_\mathrm{h}$ is the number of neurons in each hidden layer and $M_3$ is the number of hidden layers;  it is assumed that all the hidden layers have the same number of neurons.
The complexity of the backpropagation algorithm is similar to that of the forward propagation algorithm \cite{Multi_Agent_WCOML}. 
In terms of the training process, the complexity is mainly related to the number of episodes $M_4$ and iterations (per episode) $M_5$, and the batch size $M_6$ (i.e., the number of samples used to update the weights of the neural network). 
Accordingly, the overall complexity of the offline phase of the policy is $\mathcal{O}\left( M_4M_5M_6P_\mathrm{h}(P_\mathrm{i}+ M_3P_\mathrm{h} + P_\mathrm{o}) \right)$.
In the online phase, the action selection is done by executing the forward propagation algorithm and thus, the complexity of the online phase of the policy is  $\mathcal{O}( P_\mathrm{h}(P_\mathrm{i}+ M_3P_\mathrm{h} + P_\mathrm{o}))$.
}
\begin{table}[t!]
\centering
\caption{ \textcolor{blue}{The overall computational complexity of the proposed policies } }
\label{-1 em }
	\begin{tabular}{ c |  c | c}
		\hline
		\textbf{Policy}  & \textbf{Offline phase} & \textbf{ Online phase}
   \\
		\hline
		\text{Deterministic policy} & $\mathcal{O}\left( M_1 M_2 I^2 N^{6I} \right)$  & $\mathcal{O}(1)$
		\\
		\hline
	\text{DPP-SP}  & --- &  $\mathcal{O}(I)$
 		\\
		\hline
	\text{Deep learning policy}  &  
 $\mathcal{O}\left( M_4M_5M_6P_\mathrm{h}(P_\mathrm{i}+ M_3P_\mathrm{h} + P_\mathrm{o}) \right)$   &
$\mathcal{O}\left( P_\mathrm{h}(P_\mathrm{i}+ M_3P_\mathrm{h} + P_\mathrm{o}) \right)$
\\\hline
	\end{tabular}
 \label{Table_Com}
 %\vspace{-2.5 em}
\end{table}
%\\\indent

%------------------------

\vspace{-1 em}
\section{Numerical Results}\label{Sec_Numerical_Res}
%\textcolor{red}{ The whole of this section is under revising ...}
%\\
In this section, we numerically evaluate  the WS-AAoI (i.e., weighted  sum average AoI at the destination) performance of the three proposed  policies: 1) the deterministic policy $\pi^*_{\lambda^+}$ obtained by the structure-aware  RVIA in Alg. \ref{A_RVI}, 2) DPP-SP given by \eqref{Eq_MW_Policy_1}, and 3) the deep learning policy provided in Section \ref{Sec_Learning}.
For Alg. \ref{A_RVI}, we set  
$N=10$, 
$I=2$,
 %the bisection and  RVI termination criteria, respectively,   
 $\zeta=0.1$, 
 and $\varepsilon=0.001$.
%and validate the structure of an optimal policy demonstrated by Theorem \ref{Th_ES_OP1}.
For the deep learning policy,  we consider a fully-connected
deep neural network consisting of an input layer (${|\mathcal{Z}[t]|= 6+1=7}$ neurons), $2$ hidden layers consisting of $512$ and $256$ neurons with \textit{ReLU} activation function, and an output layer (${|\mathcal{A}|=9}$ neurons).
%; and  and the activation function at each hidden layer is \textit{ReLU}.
Moreover, the number of steps per episode is $600$, the discount factor is $0.99$, the mini-batch size is $64$,
%mini-batch/replay memory size equals to $64/10^5$
%capacity of replay buffer is $50000$,
the learning-rate is $0.0001$, and  the optimizer is \textit{RMSProp} \cite{RMSP}. 
\blue{The sources'  weights  are set to  $1$ for all sources.}
The system parameters, i.e.,  the arrival rates $\boldsymbol{\mu}=(\mu_1,\mu_2)$, the channel reliabilities $\bold{p}=(p_1,p_2)$, and the constraint budget 
$\Gamma_{\max}$ are specified  in the caption of each figure.
\\\indent
Next,  we provide  algorithm-specific analysis in Section \ref{Sec_NR_A} and performance comparison in Section \ref{Subsec_NR_B}.
\vspace{-1.5 em}
\subsection{Algorithm-specific Analysis}\label{Sec_NR_A}
%Here, we provide some important analysis and results on the three proposed algorithms as follows.
\subsubsection{Algorithm \ref{A_RVI}} %(Deterministic and Mixing Policies)
 Here, we verify Theorem \ref{Th_Detr} by visualizing the switching-type structure 
of  ${\lambda\text{-optimal}}$ policies and investigate the WS-AAoI  performance of the deterministic policy $\pi^*_{\lambda^+}$. 
%, and then compare performance of three policies
%\textit{feasible policy} $\pi^*_{\lambda^+}$, \textit{mixed policy} $\pi_{\eta^*}$, and \textit{infeasible policy} $\pi^*_{\lambda^-}$.
%Note that infeasible policy gives a lower bound for problem \eqref{Org_P}.
\\\indent
 Fig. \ref{Fig_Str_Beta} shows the structure of a $\lambda$-optimal policy for the  decision at the relay $\beta$ with respect to the relative AoIs at the destination $y_1$ and $y_2$ for state  $\bold{s}=(1, 0, y_1, 2,1,y_2)$. The figure validates Theorem \ref{Th_Detr} and unveils that
 the relay schedules an available packet of the  source that has higher relative AoI at the destination; this is because the contribution of delivering such packet in the AoI reduction is higher than the other  who has a lower relative AoI at the destination. Having $\beta=0$ at  ($y_1=0$, $y_2=0$) is because   the most recent status update packets of the sources at the relay are also available at the destination; thus, resending them would  not reduce the AoI. 
 \\\indent
 Fig. \ref{Fig_Str_alpha} exemplifies the structure of the $\lambda$-optimal policy for the  decision at the transmitter $\alpha$ with respect to the relative AoIs at the relay $x_1$ and $x_2$ for state  $\bold{s}=(1, x_1,4, 1, x_2,4)$. Having $\alpha=0$ at ($x_1=0$, $x_2=1$) implies that transmission
does not occur
at every state due to the resource budget. Moreover, $\alpha=0$ at  ($x_1=0$, $x_2=0$) is because the most recent status update packets of the sources at the transmitter were already sent to the relay. 
The figure also shows that  for fixed $y_1$ and $y_2$, the transmitter will give a higher priority to schedule transmissions of Source 1 who has a lower arrival rate. That is, while the low packet arrival rate of Source 1 inevitably leads to infrequently receiving status updates by the destination, the optimal policy partly compensates for this by prioritizing to send fresh packets from Source 1, whenever possible.
%****
\setlength{\textfloatsep}{0pt}
%\setlength{\textfloatsep}{1\baselineskip}
%***
\begin{figure}[t]
\centering
\subfigure[
Decision on $\beta$ for different $y_1$ and $y_2$
]
{
\includegraphics[width=0.31\textwidth]{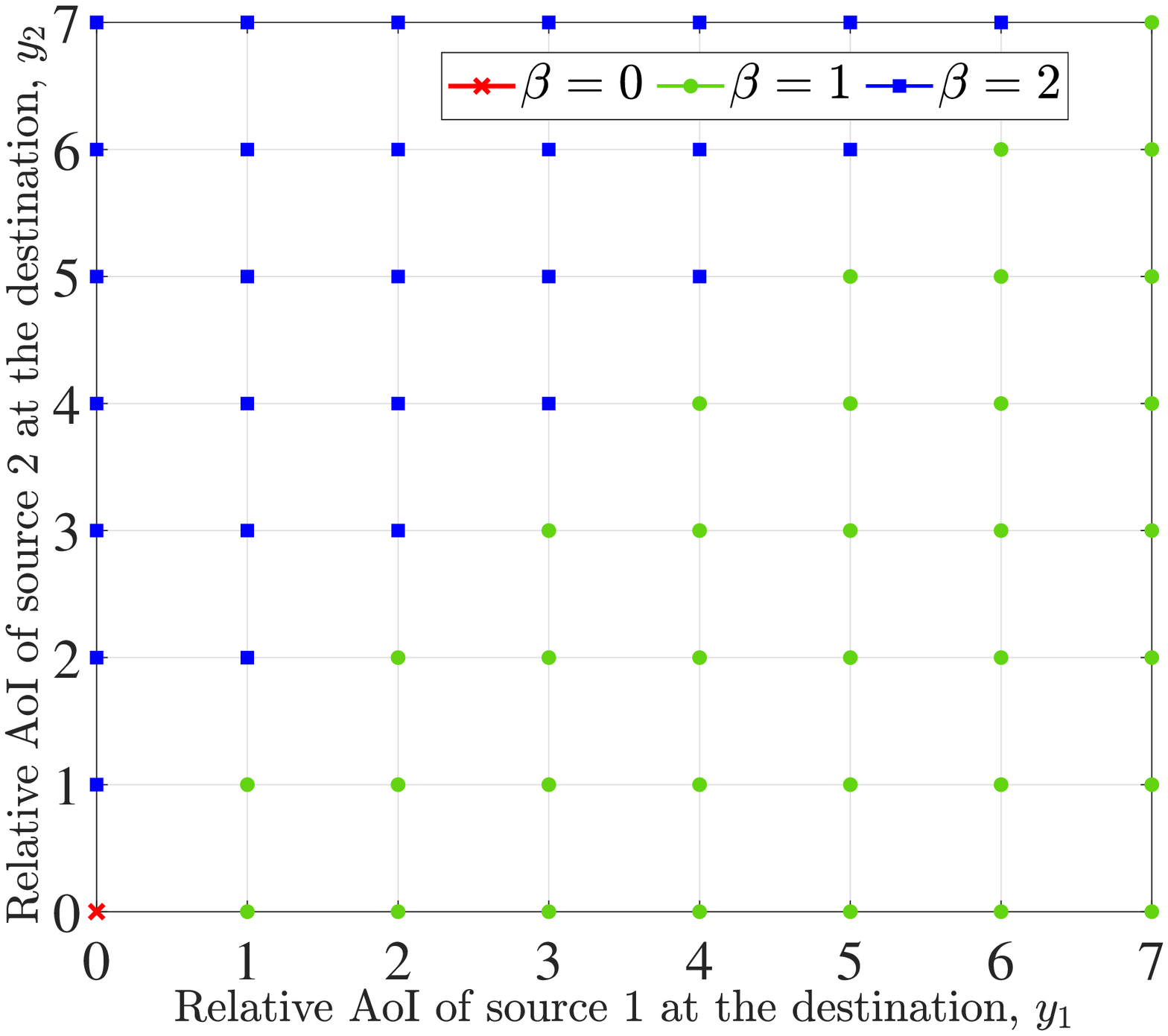}
\label{Fig_Str_Beta}
}
%\qquad
\subfigure[
Decision on $\alpha$ for different $x_1$ and $x_2$]{
\includegraphics[width=0.33\textwidth]{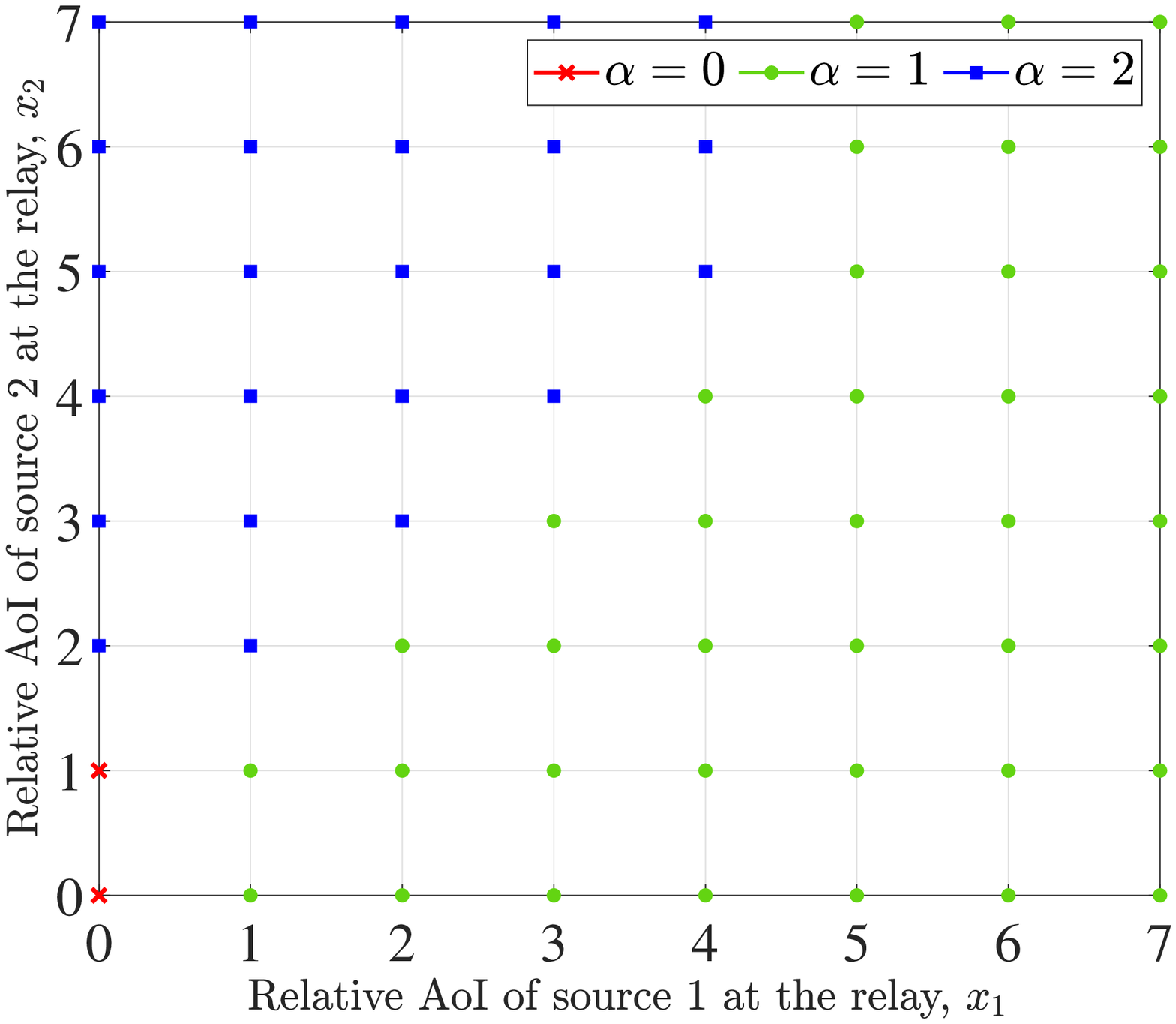}
\label{Fig_Str_alpha}}
%\vspace{-1em}
%\subfigure[S-AAoI v.s. the $\Gamma_{\max}$]
% {
% \includegraphics[width=0.3\textwidth]{./Figure/RVI_BS_ThreePolicies_Gamma_p0-3--0-4_mu0-4--0-6-N25}
% \label{Fig_Str_Beta}
% }
\vspace{-1em}
\caption{An illustration of the switching-type structure, where $\lambda = 1.25,~\bold{p}=(0.8,0.7),~\text{and}~\boldsymbol{\mu}=(0.6, 0.9)$.}  %\Gamma_{\max}=1.6
\label{Fig_ST_Strc}
%\vspace{- 2.5 em}
\end{figure}
% %---------- MULTIPLE TABLE IN ON COLUMN ------------
% \section{Example with 3 sub-tables}
% \begin{table}[ht]
% \centering
% \subfloat[Subtable 1 list of tables text][Subtable 1 caption]{
% \begin{tabular}{l|ccc}
% & 1 & 2 & 3\\
% \hline
% 1 & A & B & C\\
% 2 & D & E & F\\
% \end{tabular}}
% \qquad
% \subfloat[Subtable 2 list of tables text][Subtable 2 caption]{
% \begin{tabular}{l|ccc}
% & 1 & 2 & 3\\
% \hline
% 1 & A & B & C\\
% 2 & D & E & F\\
% \end{tabular}}
% \qquad
% \subfloat[Subtable 3 list of tables text][Subtable 3 caption]{
% \begin{tabular}{l|ccc}
% & 1 & 2 & 3\\
% \hline
% 1 & A & B & C\\
% 2 & D & E & F\\
% \end{tabular}}
% \caption{This is a table containing several subtables.}
% \end{table}
% %----------------------------------
%\\
%\textcolor{red}{ under revising ...
\\\indent Fig. \ref{Fig_RVI-BS_Gaamm} illustrates the WS-AAoI performance of the proposed policies obtained by Alg. \ref{A_RVI}
%(i.e., the values of $J(\pi^*_{\lambda^+})$ and $J(\pi^*_{\lambda^-})$)
as a function of the constraint budget $\Gamma_{\max}$ 
%Tx-R link and the R-D link. 
obtained by  averaging over 100,000 time slots. 
%(the same number of time slots are used for the results presented in Subsection \ref{Subsec_NR_B}).
The ``lower bound" is obtained by the infeasible policy $\pi^*_{\lambda^-}$. %given by Alg. \ref{A_RVI}.
%and the ``optimal value" is the optimal value of the CMDP problem \eqref{Org_P} obtained by the $
%\underset{\lambda^+,\lambda^-}{\max}~~
%\max\left\{\mathcal{L}(\pi^{*}_{\lambda^+},\lambda^+), \mathcal{L}(\pi^{*}_{\lambda^-},\lambda^-) \right\}$.
 First, Fig. \ref{Fig_RVI-BS_Gaamm} shows that the deterministic policy $\pi^*_{\lambda^+}$ achieves near-optimal performance and the lower bound is tight because  the difference between the feasible policy and the infeasible policy is small.   
%According to the figure, 
%note that the WS-AAoI performance of an optimal policy cannot be more than the infeasible policy.
%, i.e., the maximum possible improvement is  at most $2.9\%$. 
%Therefore, the figure gives a certificate for our structure-aware Alg. \ref{A_RVI} in producing a near-optimal deterministic  policy which can easily be  implemented in practice.
In addition, we observe that the gap between the deterministic policy and the lower bound increases as $\Gamma_{\max}$ 
decreases. Thus, randomizing these two policies will produce the highest relative gain in this regime. 
%;  this is because as $\Gamma_{\max}$ decreases, $\lambda^+$ and $\lambda^-$ (obtained by Alg. \ref{A_RVI})    increase. 
%Thus, in this regime, mixing these two  policies produce 
%, and thus the WS-AAoI becomes more sensitive to the difference (gap) in the
%average number of transmissions incurred by those policies.
%difference  between   $\lambda^+$ and $\lambda^-$. 
\begin{figure}[t!]
    \centering
    \includegraphics[width=.45\textwidth]{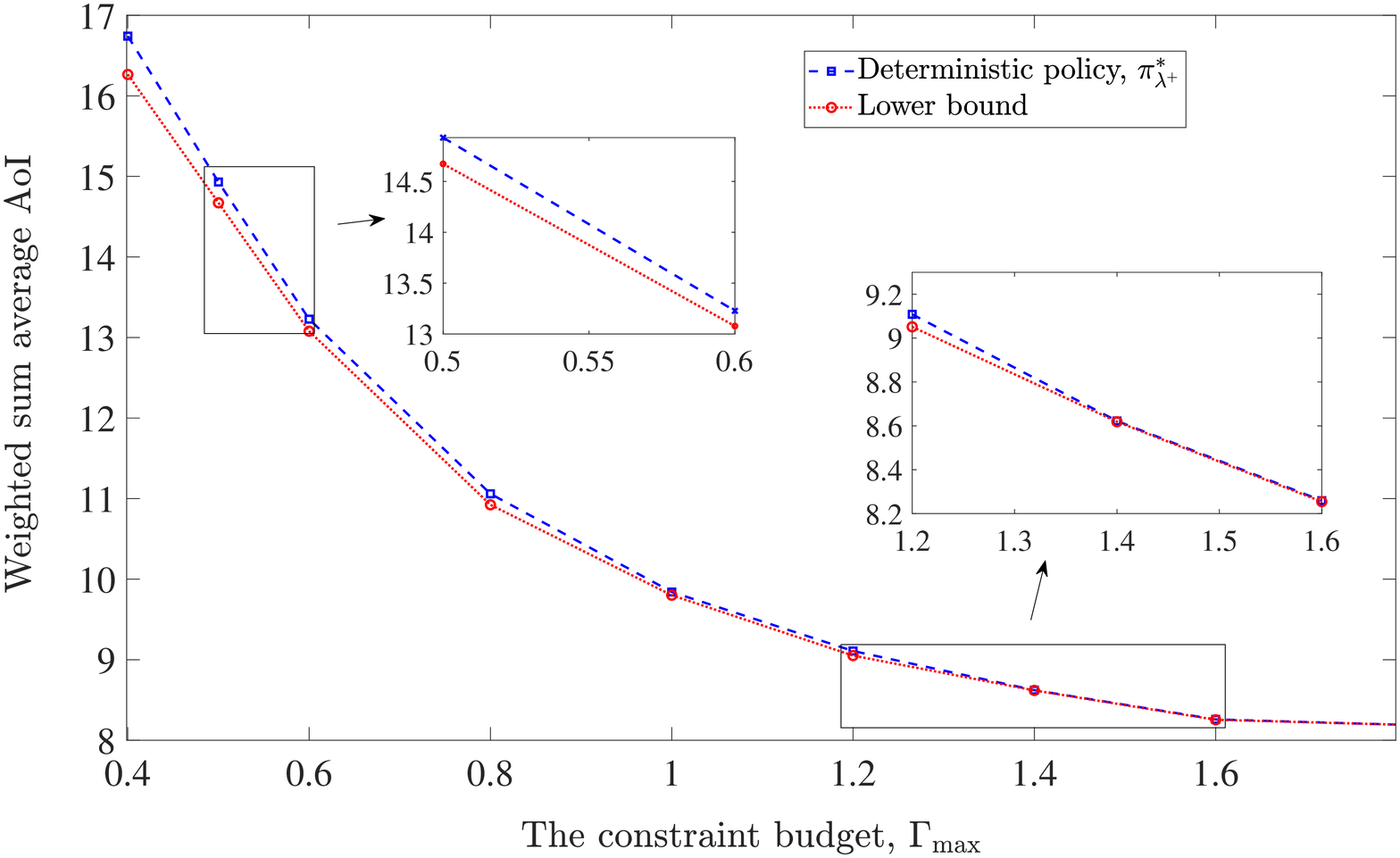}
    \vspace{- 0.5 em}
    \caption{The WS-AAoI versus $\Gamma_{\max}$ for the policies obtained by Alg. \ref{A_RVI}, where $\bold{p}=(0.7,0.8) $ and  $\boldsymbol{\mu}=(0.5,0.6)$.
    }
    \vspace{- 2 em}
    \label{Fig_RVI-BS_Gaamm}
\end{figure}
   %------------ Boy's Template 
%     \psfrag{k}[t][b]{$k$}
% \psfrag{fbest - fmin}[b][t]{$f_\mathrm{best}^{(k)} - f^\star$}
% \psfrag{noise-free realize}{noise-free case}
% \psfrag{realize1}{realization 1}
% \psfrag{realize2}{realization 2}
%     %------------
%---------------------------------------------
\subsubsection{DPP-SP} 
For DPP-SP, we investigate the impact of the trade-off  parameter $V$ on the WS-AAoI and  the average number of transmissions in the system %of  DPP-SP (given by \eqref{Eq_MW_Policy_1}) % from the WS-AAoI and the average resource constraint point of views 
in Fig. \ref{Fig_DPP_V}. 
Fig. \ref{Fig_SAOI_V} shows  the evolution of the WS-AAoI over time slots for different values of $V$. We observe that, for sufficiently small values of $V$, by increasing $V$, the WS-AAoI decreases. %however, the rate of decreasing is reduced after a certain value of $V$ and  the WS-AAoI becomes almost fix. 
%Fig. \ref{Fig_Cost_V} shows that  by increasing $V$, the average backlogs of the virtual queues increase.
Fig. \ref{Fig_Cost_V} shows  
the evolution of the average number of transmissions over time slots.
 The figure validates Theorem \ref{Th_Stability} by showing that the time average constraint \eqref{Con_Sam1} is satisfied for all $V$. However,  the convergence speed decreases as $V$ increases. 
 These observations give us some practical guidelines in that we should set parameter $V$ large (but not excessively high) to obtain a low value of the WS-AAoI, because increasing $V$ beyond a certain value does not bring significant improvements. 
%Based on these observations, there is an incentive to set parameter $V$ as large as possible to have a lower WS-AAoI, but increasing $V$ after a certain value does not significantly improve the WS-AAoI.
%performance of DPP-SP.
%We note that the convergence speed also depends on $\Gamma_{\max}$ by an indirect relation.
\begin{figure}[t]
\centering
\subfigure[Evolution of the WS-AAoI versus time slots]
{
\includegraphics[width=0.38\textwidth]{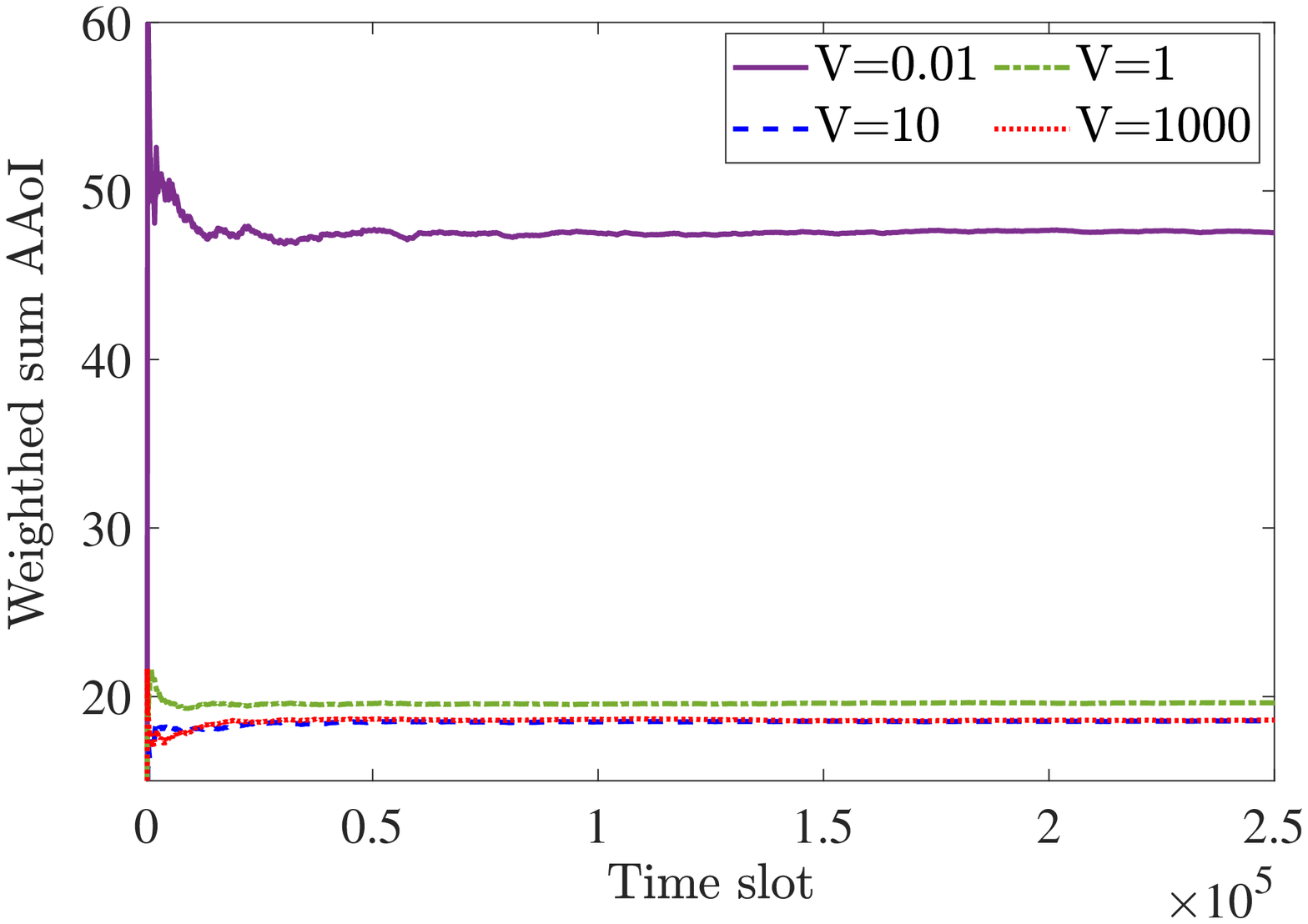}
\label{Fig_SAOI_V}
}
%\qquad
\subfigure[Evolution of the  average number of transmissions 
versus time slots]{
\includegraphics[width=0.36\textwidth]{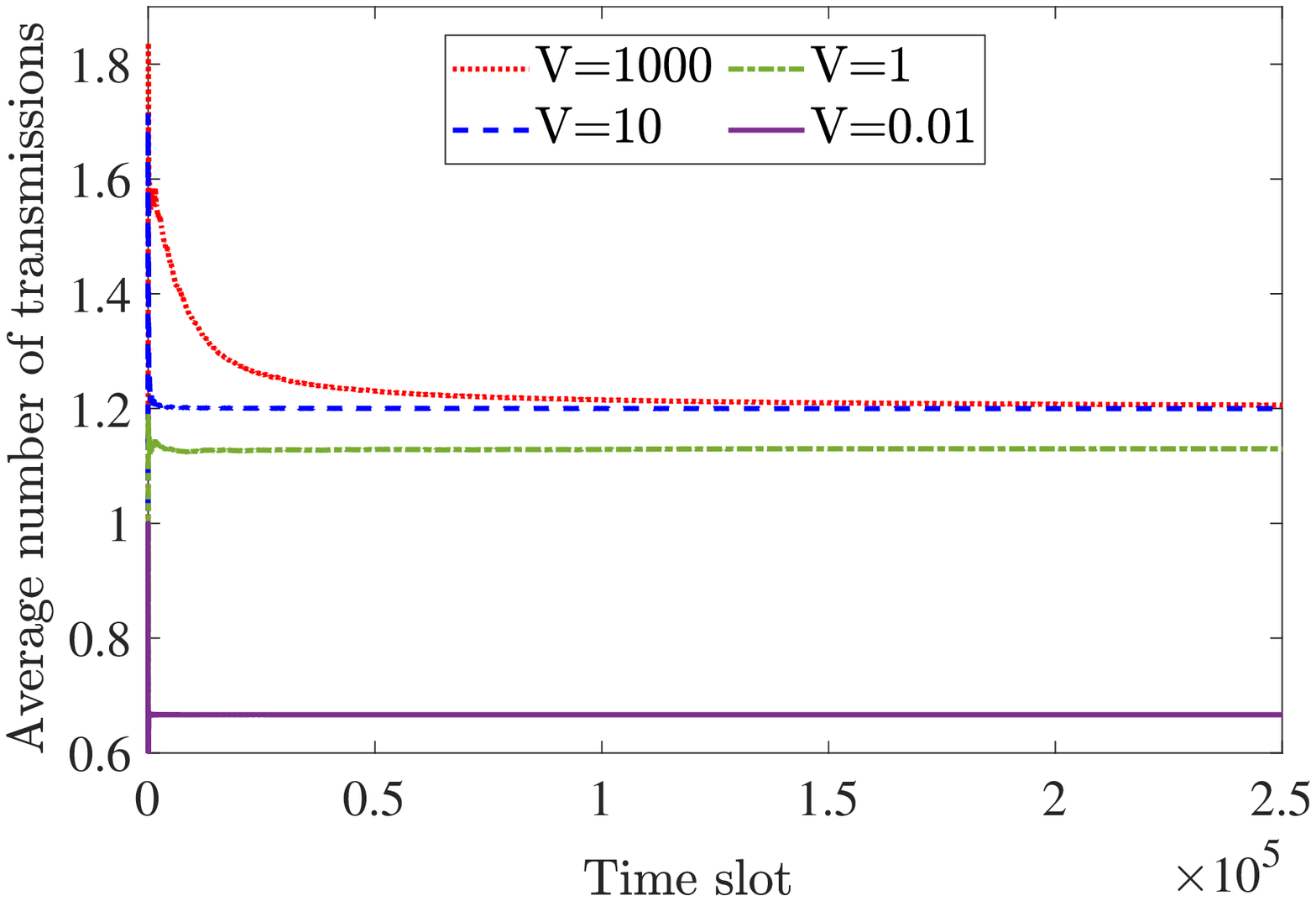}
\label{Fig_Cost_V}}
\vspace{-1em}
\caption{Impact of parameter $V$ on  DPP-SP, where $\Gamma_{\max}=1.2,~\bold{p}=(0.3,0.4),~\text{and}~\boldsymbol{\mu}=(0.5, 0.7)$.}
\label{Fig_DPP_V}
%\vspace{-2.5 em}
\end{figure}
%--------------------------------
\subsubsection{Deep Learning Policy} 
% First, the related parameters are set as follows: $\gamma = 0.99$, the number of iterations per episode is $600$, mini-batch size equals to $64$,
% and (the learning-rate) $\kappa=0.0001$. Moreover, the optimizer is \textit{RMSProp}, and the activation function is \textit{ReLU}.
%\\\indent
For the deep learning policy, we show  the evolution of
%In Fig.  \ref{Fig_D3QN_all}, we show the results of D3QN  in terms of 
the \textit{episodic reward} over episodes in Fig. \ref{Fig_D3QN_reward}, the evolution of the average number of transmissions over episodes in Fig. \ref{Fig_D3QN_Cost}, and the evolution of the WS-AAoI over episodes  in Fig. \ref{Fig_D3QN_S-AAoI} for different values of $\Gamma_{\max}$. The episodic reward is defined by the sum of rewards obtained at each episode. 
%The purpose of this figure is investigating the convergence behavior of the policy.
 Fig. \ref{Fig_D3QN_Cost} validates that the proposed  deep learning policy satisfies the time average constraint \eqref{Con_Sam1} for all the  constraint budgets.
However, the  convergence speed is highly affected by $\Gamma_{\max}$, i.e., as $\Gamma_{\max}$  increases, the policy  converges quickly. 
%As it can be seen from Fig. \ref{Fig_D3QN_Cost} and ,
The same convergence behavior is seen for the episodic reward function in Fig. \ref{Fig_D3QN_reward} and the WS-AAoI   in Fig. \ref{Fig_D3QN_S-AAoI}. 
\begin{figure}[t]
\centering
\subfigure[Evolution of the episodic reward versus episodes]
{
\includegraphics[width=0.31\textwidth]{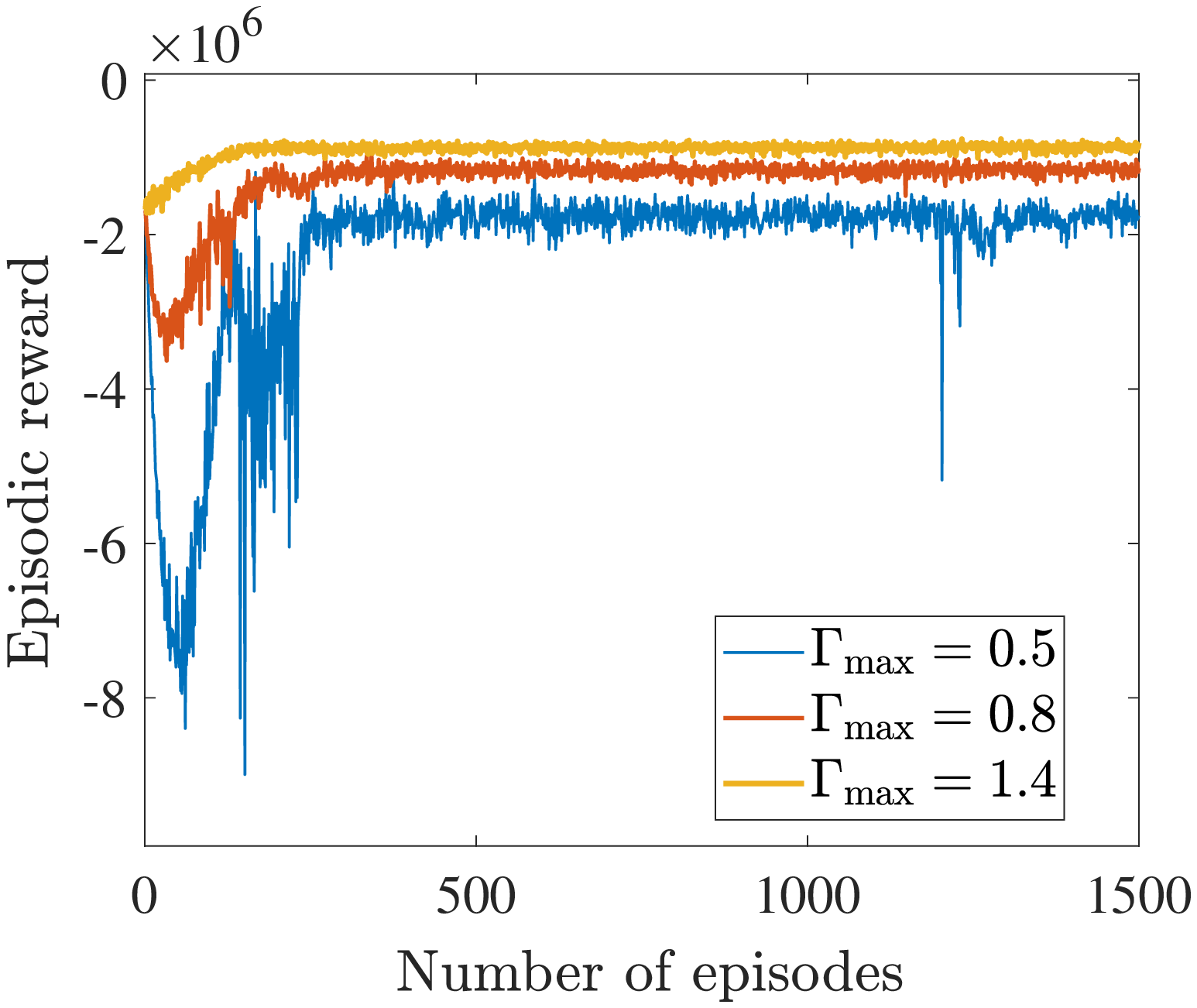}
\label{Fig_D3QN_reward}
}
%\qquad
\subfigure[Evolution of the  average number of transmissions  versus episodes]{
\includegraphics[width=0.305\textwidth]{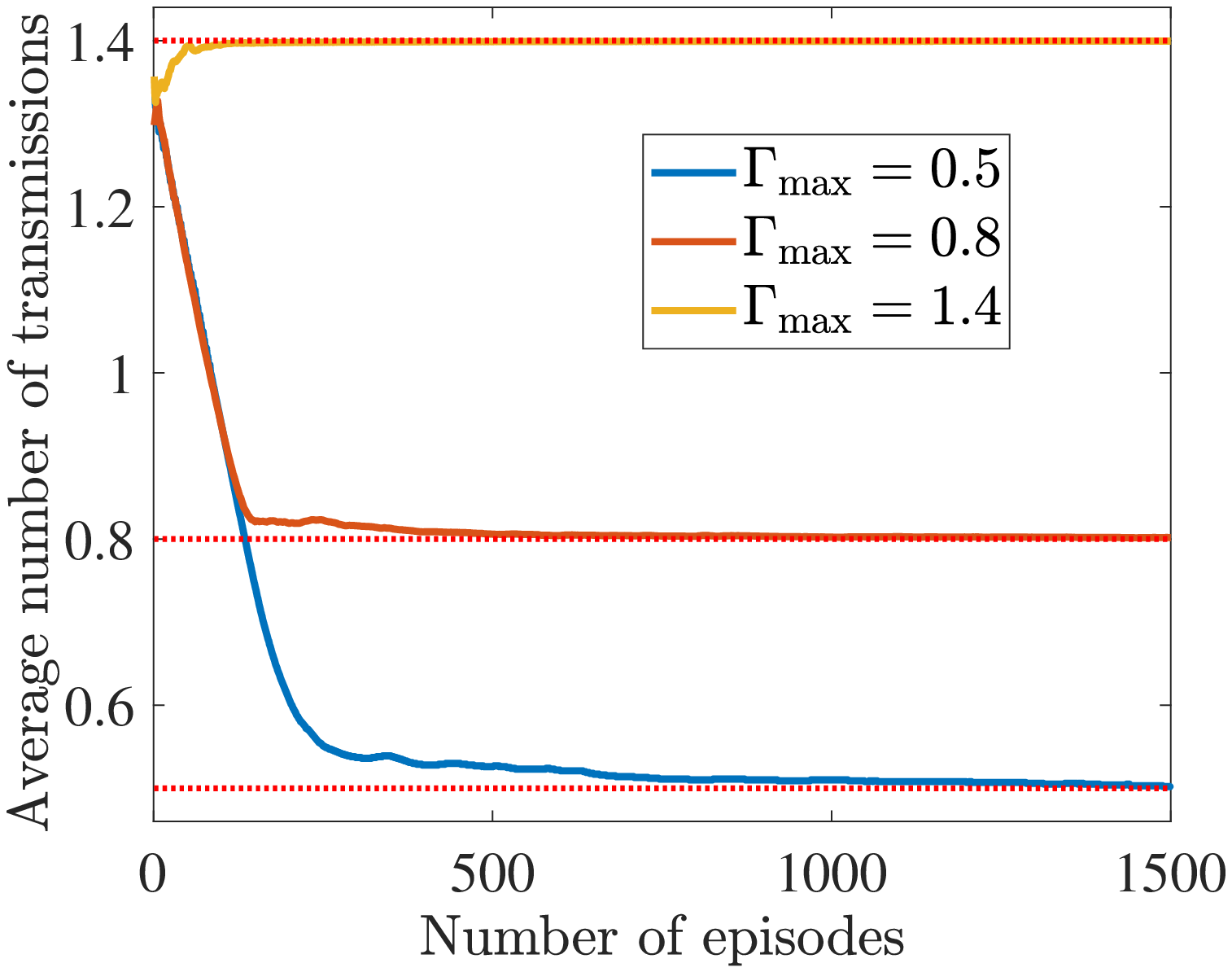}
\label{Fig_D3QN_Cost}
}
\subfigure[Evolution of the WS-AAoI versus episodes]{
\includegraphics[width=0.30\textwidth]{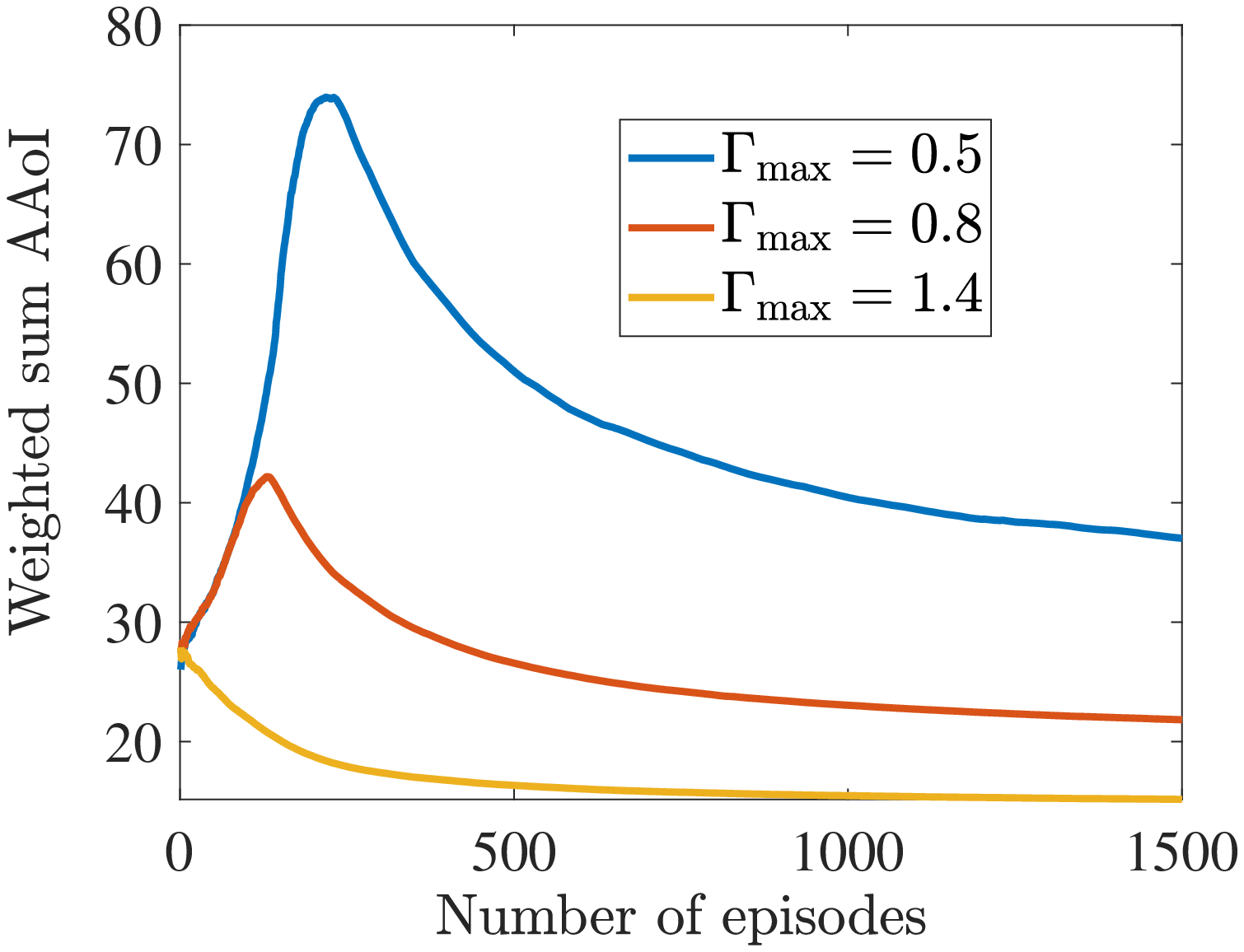}
\label{Fig_D3QN_S-AAoI}
}
\vspace{- 1 em}
\caption{Results of the deep learning policy for different $\Gamma_{\max}$, where  $V=100,~\bold{p}=(0.4,0.5),~\text{and}~\boldsymbol{\mu}=(0.6, 0.8)$.}
\label{Fig_D3QN_all}
%\vspace{-2.5 em}
\end{figure}
\vspace{-1.5em}
\subsection{Performance Comparisons} \label{Subsec_NR_B}
In this subsection, we provide a performance comparison of the proposed policies.  The results are averaged over 100,000 time slots and the parameter $V$ is set to $100$.  
 For comparison, we also consider a greedy ``baseline policy", which determines the transmission decision variables at each slot $t$ according to the following rule:   
\textit{
    If $\bar{D}_t\le \Gamma_{\max}$, then ${\alpha[t]= \argmax_i x_i[t]}$ and ${\beta[t] = \argmax_i y_i[t]}$; otherwise, $\alpha[t]= 0$ and $\beta[t]= 0$,}
     where $\bar{D}_t$ denotes the average number of transmissions until slot $t$. This policy satisfies the time average constraint \eqref{Con_Sam1}.
     It is remarkable that the baseline policy and DPP-SP (given by \eqref{Eq_MW_Policy_1}) have similar computational complexity.
\subsubsection{Effect of the Constraint Budget}  Fig. \ref{Fig_Com_gamma} depicts the WS-AAoI performance of the proposed policies and the baseline policy as a function of  the constraint budget $\Gamma_{\max}$. 
First, Fig. \ref{Fig_Com_gamma} reveals that the low-complexity DPP-SP has  near-optimal performance because it 
nearly coincides with  the (near-optimal) RVIA-based deterministic policy $\pi_{\lambda^+}^*$ obtained by Alg. \ref{A_RVI}.
The figure also shows that the  deep learning policy  obtains near-optimal performance when  the constraint budget becomes sufficiently large, e.g., ${\Gamma_{\max}\ge 0.8}$. 
Moreover, the figure shows that 
the WS-AAoI performance gap between the baseline  policy and the  proposed policies 
is extremely large when the constraint budget is small;
this is because in such cases, performing good actions in each slot becomes more critical due to having a high limitation on the average number of transmissions.
The figure shows that the proposed policies achieve
up to almost $91\%$ improvement in the WS-AAoI performance
   compared to the baseline policy. Finally, we can observe that, as the constraint budget increases, the WS-AAoI values decrease; however,  from a certain point onward, increasing the constraint budget does not considerably decrease the WS-AAoI.
\begin{figure}[t] %[h!]
    \centering
    \includegraphics[width=.5\textwidth]{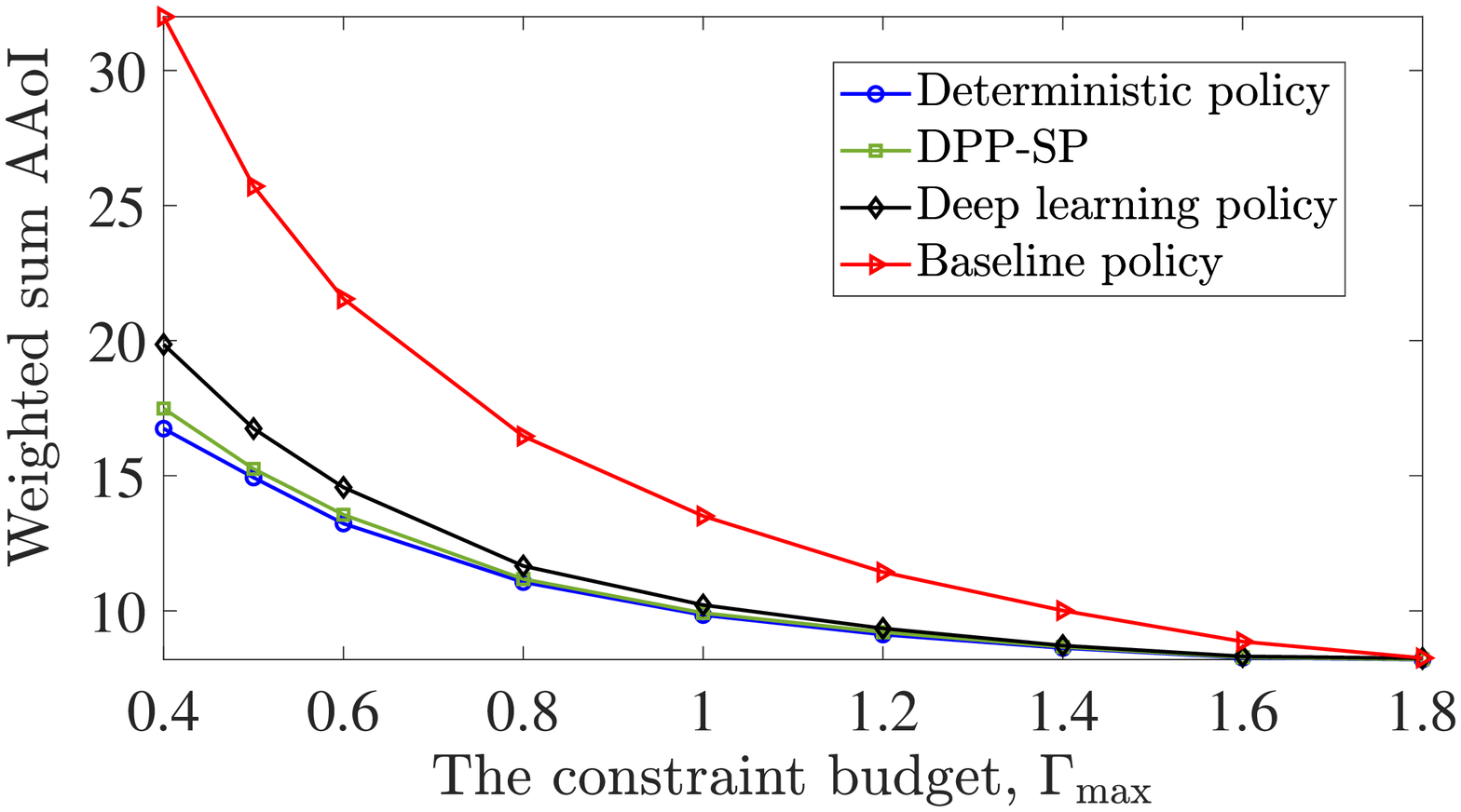}
    \vspace{-1.5em}
    \caption{The WS-AAoI versus the constraint budget for the different  policies, where  $\bold{p}=(0.7,0.8)$ and $\boldsymbol{\mu}=(0.5,0.6)$. 
    %Moreover, for the ``Threshold-aware Greedy", we set $M_1=M_2=M$, where $M=(40, 20, 8, 7, 4, 2, 1, 1)$ corresponding to the configured values of $\Gamma_{\max}$ in the figure.
  % $, V=1000 for DPP, V=100 for D3QN$.
    }
    \vspace{- 1 em}
    \label{Fig_Com_gamma}
\end{figure}
%------------------------------------------ NEXT PARAMETER
\subsubsection{Effect of the Arrival Rates}
In Fig. \ref{Fig_SAAoI_AR},  we examine the impact of the arrival rates $\mu_1$ and $\mu_2$ on the WS-AAoI performance of the different policies.
\blue{The figure shows  that the WS-AAoI increases as the arrival rates decrease. This is because when the arrival rates decrease, the rate of fresh update delivery at the destination decreases.}
The figure also reveals that, as the arrival rates increase,  the reduction of the WS-AAoI by the proposed policies in comparison to the baseline policy  becomes increasingly more prominent. 
\blue{The reason for this behavior is that by the increase of the arrival rates there are more new fresh packets which  can potentially reduce the AoI if they are delivered timely/optimally to the destination. The greedy baseline policy, however, cannot  deliver them timely.  }
Moreover, it is observable that when the arrival rates are sufficiently large, increasing them further does not considerably reduce the WS-AAoI.
\blue{This observation is due to the fact that in our system, only one packet can be transmitted in each slot, and for large values of the arrival rates, the probability of having at least one fresh packet does not change considerably by changing the arrival rates.}
\subsubsection{Effect of the Successful Transmission Probabilities}
In Fig. \ref{Fig_SAAoI_CR}, we examine the impact of the
successful transmission probabilities  $p_1$ and $p_2$ 
on 
the WS-AAoI performance of the different policies. 
%, e.g., $0.1$.  
 First, the figure shows that the WS-AAoI performance gap between the proposed policies 
and  the baseline policy is significant, especially when the successful transmission probabilities are small.
\blue{The reason is that when the successful transmission probabilities are small, finding optimal transmission times become more critical, as there are resource limitations.  }
Moreover, the figure shows that the WS-AAoI considerably decreases as the successful transmission probabilities increase; this is expected, \blue{because
 the probabilities of successfully receiving the transmitted status update packets through the unreliable links increase, and consequently,  the destination receives updates more frequently.  }
\begin{figure}[t]
\centering
\subfigure[The WS-AAoI versus the arrival rates for $\bold{p}=(0.6,0.7)$]
{
\includegraphics[width=0.38\textwidth]{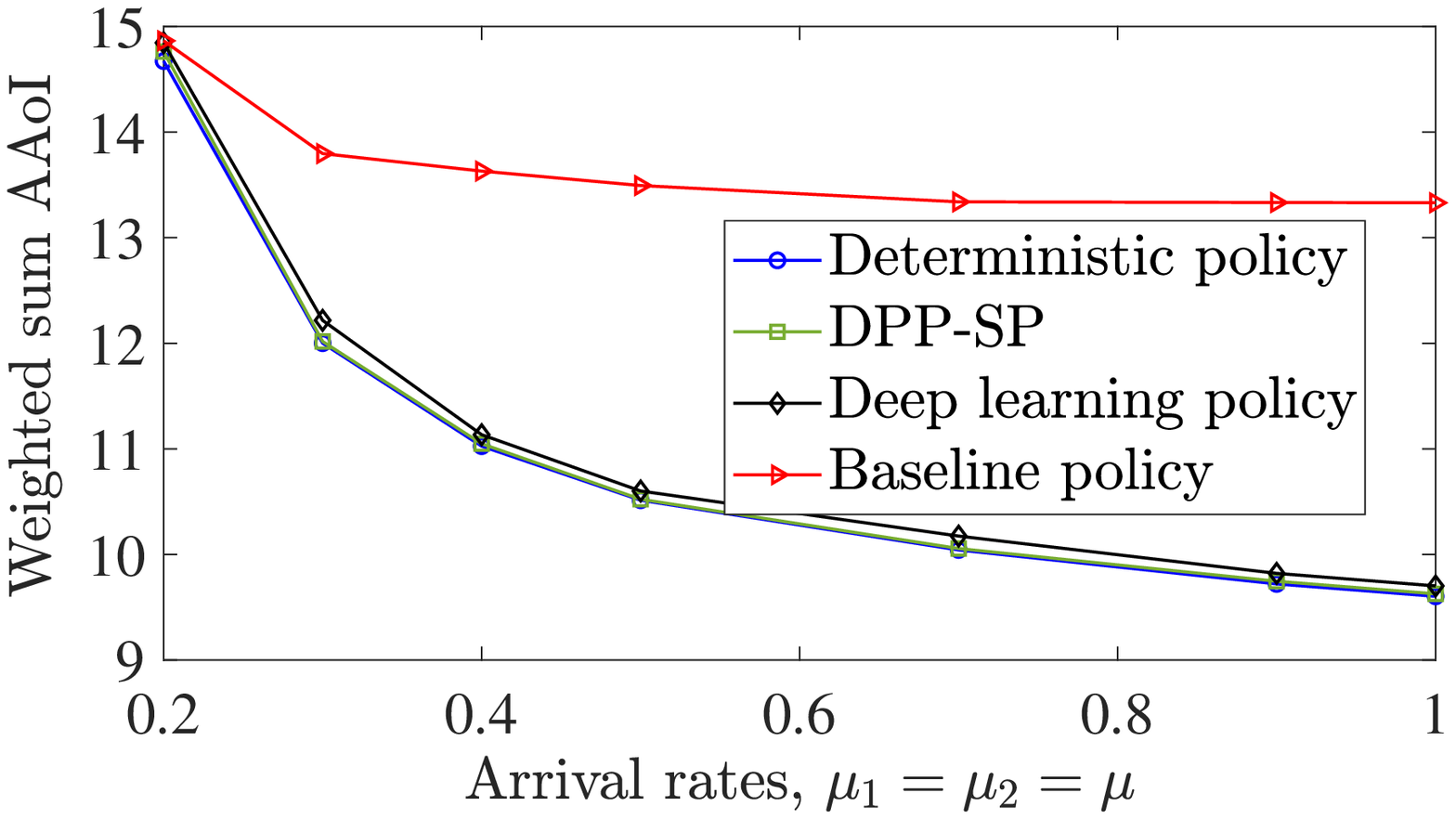}
\label{Fig_SAAoI_AR}
}
%\qquad
\subfigure[The WS-AAoI versus the successful transmission probabilities for $\boldsymbol{\mu}=(0.6, 0.7)$]{
\includegraphics[width=0.38\textwidth]{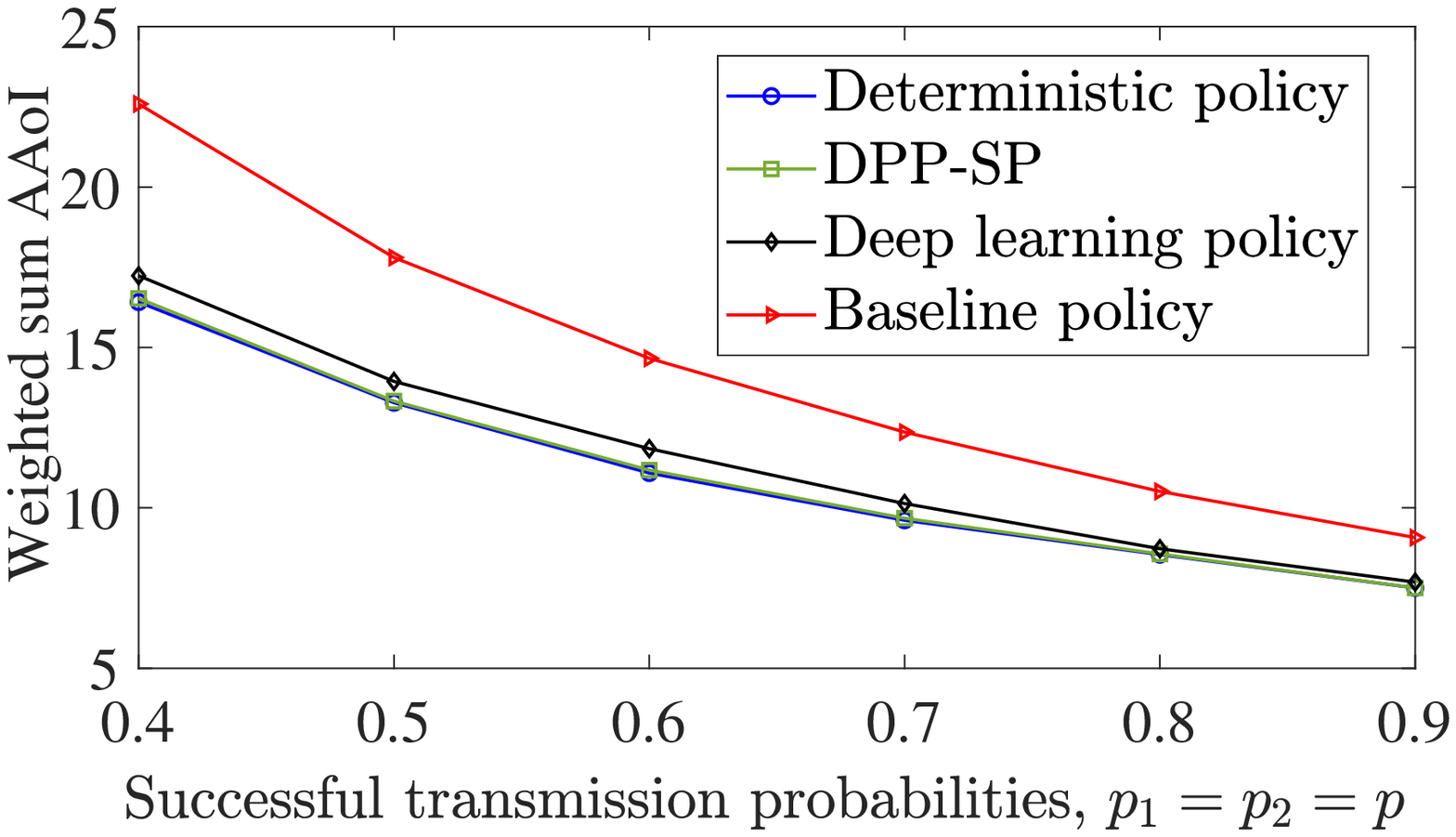}
\label{Fig_SAAoI_CR}}
\vspace{-1em}
\caption{The WS-AAoI performance of the proposed and baseline  policies, where $\Gamma_{\max}=1.2$.
}
\label{Fig_AR_CR}
%\vspace{-1.5em}
\end{figure}
%-------------------------------------------- EFFECT OF #Sources
%\textcolor{blue}{
\subsubsection{Effect of Number of Sources}
In Fig. \ref{Fig_Source}, we show the effect of the  number of sources on the WS-AAoI for different values of the constraint budget $\Gamma_{\max}$ without bounding the AoI.
\blue{Here, we utilize the DPP-SP, the deep learning policy,  and the greedy baseline policy; notably, as explained in Section \ref{Sec_ComAna}, Alg. \ref{A_RVI} is not scalable to a  multi-source setup (with a high number of sources)}. The figure shows that WS-AAoI increases by increasing $I$. \blue{This is because, for a fixed $\Gamma_{\max}$,  when $I$ increases, the opportunity of having  transmissions for each source decreases; thus, the WS-AAoI increases.}
\begin{figure}[t] %[h!]
    \centering
    \includegraphics[width=.38\textwidth]{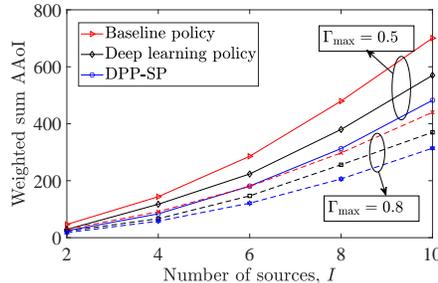}
    \vspace{-1em}
    \caption{\blue{The WS-AAoI versus the number of sources, where  $\bold{p}=(0.4,0.5)$ and $\mu_i = 0.6$ for all sources.} 
    }
  %  \vspace{-2.5 em}
    \label{Fig_Source}
\end{figure}
\subsubsection{\blue{Effect of the Source's Weight}}
\blue{{Fig. \ref{Fig_Weight} illustrates the impact of the weight $w_i$ on the AAoI of sources for DPP-SP in a two-source setup. As can be seen, by increasing the weight of a source, its AAoI decreases, as expected. The reason is that by increasing the weight of a source, we put more emphasis on the AoI of the source, and thus, the policy tries to keep its AoI lower.}
}
\begin{figure}[t] %[h!]
    \centering
    \includegraphics[width=.32\textwidth]{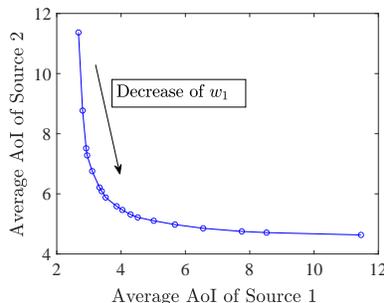}
    \vspace{-1.5em}
    \caption{
    \blue{Impact of weight $w_i$ on the AAoI of sources in a two-source setup, where $w_2 = 1- w_1$ and ${ w_1 = [0.05:0.05:0.95] }$. 
    Moreover, $\bold{p}=(0.7,0.7)$ and $\boldsymbol{\mu}=(0.6, 0.6)$.
    }
    }
 %   \vspace{-2.5 em}
    \label{Fig_Weight}
\end{figure}
% %-------------------- END -----------------
% \begin{figure}[h!]
%     \centering
%     \includegraphics[width=.70\textwidth]{./Figure/impact_Arrival_Rate_Gamma0-8_p04-05}
%     \caption{S-AAoI versus the cost budget, for different  algorithms, where $p=(0.4,0.5),~, \Gamma=0.8, V=1000 for DPP, V=100 for D3QN$, for $\Gamma=0.3,0.5,0.8,1.2,1.4,1.6,1.8$}
%     \label{Fig_AAoI_Gaamma}
% \end{figure}
% \begin{figure}[h!]
%     \centering
%     \includegraphics[width=.80\textwidth]{./Figure/impact_Channel_reliabilities_Gamma0-8_mu-04-05}
%     \caption{S-AAoI versus the cost budget, for different  algorithms, where $,~\mu=(0.4,0.5),\Gamma=0.8 V=100/1000 for DPP, V=100 for D3QN$, for $\Gamma=0.3,0.5,0.8,1.2,1.4,1.6,1.8$}
%     \label{Fig_AAoI_Gaamma}
% \end{figure}

\vspace{-2.5em}
\section{Conclusion}\label{Sec_Conclusion}
We studied the WS-AAoI minimization problem in a multi-source relaying system with stochastic arrivals and unreliable channels subject to transmission capacity  and  the average number of  transmissions constraints.
We formulated a stochastic optimization problem and solved it with three different algorithms. 
Specifically, we proposed the CMDP approach in which 
we first conducted analysis to show that an optimal policy of the MDP problem has a switching-type structure and subsequently, utilized this structure to devise a
structure-aware RVIA that gives a near-optimal  deterministic policy and a tight lower bound; the convergence of the algorithm was also proven.
We devised a dynamic near-optimal low-complexity DPP-SP, representing an efficient online scheduler for systems with
large numbers of sources. 
Moreover, we  devised a deep learning policy combining  the Lyapunov optimization theory and D3QN.
%and then proved that it is guaranteed to satisfy the time average constraint.
\\\indent
We numerically investigated the effect of system parameters %such as arrival rates and channel reliabilities
on the WS-AAoI and showed the effectiveness of our proposed policies compared to the baseline policy; the results showed up to $91\%$ improvement in the WS-AAoI performance.
%\\\indent  
Accordingly, an age-optimal scheduler design is crucial for resource-constrained relaying status update systems, where greedy-based scheduling  is inefficient. 
Moreover, the results showed that the proposed deep learning policy satisfies the time average constraint and achieves performance close to  the other proposed near-optimal policies  in many settings. 
\vspace{-1 em}
\appendix
\vspace{- 1.5 em}
\subsection{Proof of Theorem \ref{The_Unichain}}\label{Proof_Unichain}
By \cite[Exercise 4.3]{Gallager_SP_Theory_App}, it is sufficient to show that the Markov chain, described by the transition probability matrix with elements $\mathcal{P}_{\bold{s}\bold{s}'}(\bold{a})$, corresponding to  every deterministic policy has a state which is accessible from any other state.  We show this 
by dividing the sources into two different groups $\mathcal{I}_1$ and $\mathcal{I}_2$ based on the values of the arrival rates $\mu_i$, i.e., sources with $\mu_i=1$ belongs to $\mathcal{I}_1$ and sources with $\mu_i\in(0,1)$ belongs to $\mathcal{I}_2$. 
Let us express each  state $\bold{s}\in\mathcal{S}$ by  $\bold{s}=\{\bold{s}_i\}_{i\in\mathcal{I}_1\cup\mathcal{I}_2}$, where recall that $ { \bold{s}_i = (\theta_i, x_i, y_i) }$.
Then, in the Markov chain induced by every deterministic policy, state $\bold{s}^{\mathrm{acc}}=\{\bold{s}^{\mathrm{acc}}_i\}_{i\in\mathcal{I}_1\cup\mathcal{I}_2},$ where $\bold{s}^{\mathrm{acc}}_i=(0,N,0), \forall\,i\in\mathcal{I}_1$ and $\bold{s}^{\mathrm{acc}}_i=(N,0,0), \forall\,i\in\mathcal{I}_2$, 
is accessible from any other state. 
This  is due to the fact that regardless of actions taken: (1) there is always new arrivals for sources belong to $\mathcal{I}_1$, (2) the probability of having no arrivals for all sources belong to $\mathcal{I}_2$,
 for at least $N$ consecutive  slots is $\prod_{i} (1-\mu_i)^N,~i\in\mathcal{I}_2$, which is positive,  and (3)
 the probability of having
 unsuccessful receptions  in both the relay and the destination 
  for at least $N$ consecutive  slots is $(1-p_1)^N(1-p_2)^N$, which is positive. 
Thus, according to the evolution of the AoIs, starting from any state at any slot $t$ leads to state $\bold{s}^{\mathrm{acc}}$ with a positive  probability, which completes the proof.
\vspace{-1 em}
\subsection{Proof of Theorem \ref{Th_Detr}}\label{Proof_App_Threshod}
To show the switching-type structure w.r.t. $y$ for a  $\lambda$-optimal policy,  we use Theorem \ref{The_Bellman}.  
%Let us define,
%for each state  $\bold{s}\in\mathcal{S}$ and a state $\bold{s}_{\mathrm{ref}}\in\mathcal{S}$,  
%$h(\bold{s}) \triangleq V(\bold{s}) - %V(\bold{s}_{\mathrm{ref}})$,
%which  satisfies \eqref{Eq_Bellman}.
%, where $\bold{s}_{\mathrm{ref}}\in\mathcal{S}$ is an arbitrary state. 
First, by turning the optimality equation \eqref{Eq_Bellman} into the iterative procedure \eqref{Eq_RVI}, for each state  $\bold{s}\in \mathcal{S}$, we can iteratively obtain ${h(\bold{s})\triangleq V(\bold{s}) - V(\bold{s}_{\mathrm{ref}})}$ 
%(recall that$\bold{s}_{\mathrm{ref}}\in\mathcal{S}$ is an arbitrary state,
%(see Theorem \ref{The_RVIA_Converg})
and consequently $\pi^*_{\lambda}(\bold{s})$ (see \eqref{Eq_Optimal_policy}).
% In particular, %we have \eqref{Eq_RVI}, i.e., 
%  at each iteration $n\in\{0,1,\dots\}$, we have
%  %, for each state $\bold{s}\in\mathcal{S}$ 
%  %\vspace{-1 em}
% \begin{align}
% \label{Eq_RVI_2}
% %\nonumber
% \begin{array}{ll}
%     %$
%     & V_{n+1}(\bold{s})=\underset{\bold{a}\in\mathcal{A}}{\min}
%   % \min_{\bold{a}\in\mathcal{A}}
%     \left\{ L(\bold{s},\bold{a};\lambda)+{\textstyle\sum_{\bold{s}'\in\mathcal{S}}}\mathcal{P}_{\bold{s}\bold{s}'}(\bold{a})h_n(\bold{s}')\right\},~ %~\forall \bold{s}\in\mathcal{S}, %$
%   % \\
% %    & 
%     h_{n+1}(\bold{s}) = V_{n+1}(\bold{s})-V_{n+1}(\mathbf{s}_\mathrm{ref}).
%     %-h_n^{(N)}(\mathbf{s}_0),$
% \end{array}
% \end{align}
We then use \eqref{Eq_RVI} and show
  a monotonic property  of  the function $V(\bold{s})$ in the following lemma, which will be used in the next steps of the proof. 
  \vspace{-0.5 em }
\begin{Lem}\label{Lemm_Non-dec}
The function $V(\bold{s})$ is a non-decreasing function with respect to every $s_j$, where $s_j,~j=1,\dots,3I,$ is the $j$-th element of state vector $\bold{s}=(\theta_1,x_1,y_1,\dots,\theta_I,x_I,y_I)$.
\end{Lem}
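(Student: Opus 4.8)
The plan is to prove the claim through the relative value iteration of Theorem \ref{The_RVIA_Converg}, showing by induction on $n$ that every iterate $V_n$ is non-decreasing and then passing to the limit $V$. The decisive preliminary move is to re-express the state through the \emph{absolute} AoIs $(\theta_i,\psi_i,\delta_i)$ rather than the relative ones $(\theta_i,x_i,y_i)$: as explained at the end, the per-slot map is coordinatewise monotone in the former but not in the latter, so this change of viewpoint is what makes a clean coupling argument available.

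First I would record two facts about the per-slot quantities. The immediate AoI cost satisfies $C(\bold{s})=\sum_i w_i(\theta_i+x_i+y_i)=\sum_i w_i\delta_i$, so with $w_i>0$ it is non-decreasing in each $\delta_i$ and constant in the other absolute coordinates; hence $L(\bold{s},\bold{a};\lambda)=C(\bold{s})+\lambda(D(\bold{a})-\Gamma_{\max})$ is non-decreasing in $(\theta_i,\psi_i,\delta_i)$ for every fixed $\bold{a}$. Next I would exhibit a \emph{monotone coupling} of the controlled transition: fix an action $\bold{a}=(\alpha,\beta)$ and draw one common realization of the arrival and channel variables $(u_i,\rho_1[t],\rho_2[t])$ for both states. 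Inspecting \eqref{Eq_AoIDy} shows that each next-slot coordinate $\theta_i[t+1],\psi_i[t+1],\delta_i[t+1]$ is a non-decreasing function of the current triple $(\theta_i[t],\psi_i[t],\delta_i[t])$, since every branch is a constant, a reset to $0$, or a map $v\mapsto\min(v+1,N)$, all order-preserving. Consequently, if $\bold{s}\le\tilde{\bold{s}}$ componentwise in the absolute coordinates, the coupled successors satisfy $\bold{s}^{\mathrm{next}}\le\tilde{\bold{s}}^{\mathrm{next}}$ componentwise, almost surely.

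With these two facts the induction is routine. I would initialize $V_0\equiv 0$ (admissible because the limit of \eqref{Eq_RVI} is initialization-independent by Theorem \ref{The_RVIA_Converg}), which is trivially non-decreasing, and assume $V_n$, equivalently $h_n$ (which differs only by the constant $V_n(\bold{s}_\mathrm{ref})$), is non-decreasing in each absolute coordinate. For $\bold{s}\le\tilde{\bold{s}}$ and any fixed $\bold{a}$, the monotone coupling gives $\sum_{\bold{s}'}\mathcal{P}_{\bold{s}\bold{s}'}(\bold{a})h_n(\bold{s}')=\mathbb{E}\{h_n(\bold{s}^{\mathrm{next}})\}\le\mathbb{E}\{h_n(\tilde{\bold{s}}^{\mathrm{next}})\}=\sum_{\bold{s}'}\mathcal{P}_{\tilde{\bold{s}}\bold{s}'}(\bold{a})h_n(\bold{s}')$, and combined with $L(\bold{s},\bold{a};\lambda)\le L(\tilde{\bold{s}},\bold{a};\lambda)$ this yields $Q_{n+1}(\bold{s},\bold{a})\le Q_{n+1}(\tilde{\bold{s}},\bold{a})$ for every $\bold{a}$, where $Q_{n+1}$ denotes the bracketed term in \eqref{Eq_RVI}. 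Taking the minimum over $\bold{a}$ preserves the inequality, so $V_{n+1}(\bold{s})\le V_{n+1}(\tilde{\bold{s}})$, closing the induction; letting $n\to\infty$ shows $V$ is non-decreasing in each of $\theta_i,\psi_i,\delta_i$.

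Finally I would translate back to the relative coordinates of the lemma. A unit increase of $x_i$ (holding $\theta_i,y_i$) corresponds to raising both $\psi_i$ and $\delta_i$ by one; a unit increase of $y_i$ corresponds to raising $\delta_i$ alone; and a unit increase of $\theta_i$ corresponds to raising $\theta_i,\psi_i,\delta_i$ simultaneously. Each of these is a non-negative displacement in the absolute coordinates, so monotonicity of $V$ in $(\theta_i,\psi_i,\delta_i)$ immediately yields monotonicity in every relative coordinate $s_j$, which is the claim. The main obstacle is exactly the one flagged at the outset: because of the saturation at $N$, the per-slot map is \emph{not} monotone in the relative coordinates $(\theta,x,y)$ (increasing $x_i$ can push $\psi_i$ against the cap $N$ and thereby \emph{decrease} the successor's $y_i=\delta_i-\psi_i$), so a naive coupling in $(\theta,x,y)$ fails; recognizing that the dynamics reduce to order-preserving $\min(v+1,N)$ maps only in the absolute AoIs is the key that makes the argument go through.
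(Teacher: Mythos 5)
Your proposal is correct, and it follows the same skeleton as the paper's proof (induction over the relative value iteration iterates, using that the immediate cost is monotone and that the $\min$ over actions preserves monotonicity, then passing to the limit via Theorem~\ref{The_RVIA_Converg}). The genuine difference is in the one step the paper dispatches in a single clause --- that $\sum_{\bold{s}'}\mathcal{P}_{\bold{s}\bold{s}'}(\bold{a})V_n(\bold{s}')$ is non-decreasing in each $s_j$ ``via the induction hypothesis.'' As you correctly observe, this is not immediate in the relative coordinates: because of the cap at $N$, raising $x_i$ can produce a successor in which $\tilde{x}_i$ goes up while $\tilde{y}_i$ goes down, i.e.\ a displacement of the form ``increase $\psi'$ holding $\delta'$,'' and monotonicity of $V_n$ in the relative coordinates says nothing about such moves. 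Your fix --- prove the strictly stronger property that $V_n$ is non-decreasing in each \emph{absolute} coordinate $(\theta_i,\psi_i,\delta_i)$, which does close under the per-slot map since every branch of \eqref{Eq_AoIDy} is a reset or an order-preserving $v\mapsto\min(v+1,N)$, establish the expectation step by a monotone coupling under common arrival/channel randomness, and then observe that unit increments of $\theta_i$, $x_i$, $y_i$ are nonnegative displacements in the absolute coordinates --- is sound and actually repairs/strengthens the paper's argument rather than merely reproducing it. What the paper's terser route buys is brevity; what yours buys is a justification of the stochastic-monotonicity step that the paper leaves implicit and that, taken literally in relative coordinates, does not hold.
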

\begin{proof}
The proof is based on the induction hypothesis.
%and can be proven similar as the proof of  \cite[Proposition 5]{Eyton_Modiano}. 
%% D2 PL COMMENTED
 First,  the sequence ${\{V_{n}(\bold{s})\}_{n=1,2,\dots}}$, updated by \eqref{Eq_RVI},  converges to $ V(\bold{s})$ for any initialization (see Theorem \ref{The_RVIA_Converg}). Also, Lemma \ref{Lemm_Non-dec} holds  for $V_{0}(\bold{s})$. Now, we assume that $V_{n}(\bold{s})$ is non-decreasing in $s_j$. The
 immediate cost of the MDP 
 %\mathds{1}_{\{\alpha[t]\neq 0\}}+\mathds{1}_{\{\beta[t]\neq 0\}}
 %${L(\bold{s},\bold{a};\lambda)={\sum}_{j=1}^{3I}  s_j+\lambda\left( D(\bold{a}[t]) - \Gamma_{\max}\right)}$
 ${L(\bold{s},\bold{a};\lambda)={\sum}_{i=1}^{I}  w_i(\theta_i+x_i+y_i)+\lambda\left( D(\bold{a}[t]) - \Gamma_{\max}\right)}$
 is a non-decreasing function in ${s_j,~j=1,\dots,3I}$. In addition, $
 %\Bbb{E}\left\{V_{n}(\bold{s}')~\big|~\bold{s},\bold{a}\right\} \triangleq 
 {\textstyle\sum_{\bold{s}'\in\mathcal{S}}}\mathcal{P}_{\bold{s}\bold{s}'}(\bold{a})V_n(\bold{s}')$ is a non-decreasing function in  $s_j$ via the induction hypothesis, and the minimum operator in \eqref{Eq_RVI} preserves the non-decreasing property.
 %; note that the constant term $V_{n}(\mathbf{s}_\mathrm{ref})$ does not affect . 
 Thus, we  conclude that $V_{n+1}(\bold{s})$ is  non-decreasing in every $s_j$, which completes the proof.
\end{proof}
%To prove the existence of   the switching-type structure w.r.t. $y$  for a $\lambda$-optimal policy,
% COMMENTED     FOR the UNBOUNDED CASE 
% we characterize an optimal policy for problem \eqref{Prob_Disc_MDP} in the following.
% \begin{Lem}
% An optimal policy of problem \eqref{Prob_Disc_MDP} has the switching-type structure with respect to $\beta$ as specified in Theorem \ref{Th_Detr}.
% \end{Lem}
% \begin{proof}
%  To prove,  =(\theta_1,x_1,y_1,\dots,\theta_I,x_2,y_2)
 %it is sufficient to 
 Now, we need to show that if for an arbitrary state $\bold{s}$ a $\lambda$-optimal policy takes action $\beta=i$, then  for all states $\bold{s}+k\bold{e}_{3i}$ the policy takes also action $\beta=i$, where $k$ is a positive integer.
 %, regardless of decision on $\alpha$.
 %, where $\alpha'$ is not necessarily equal to $\alpha$. 
 Let us define a function $\mathcal{V}(\bold{s},\bold{a};\lambda) \triangleq L(\bold{s},\bold{a};\lambda)+\Bbb{E}\left\{V(\bold{s}')~\big|~\bold{s},\bold{a}\right\} - V(\bold{s}_{\mathrm{ref}})$, where $\Bbb{E}\left\{V(\bold{s}')~\big|~\bold{s},\bold{a}\right\} \triangleq {\textstyle\sum_{\bold{s}'\in\mathcal{S}}}\mathcal{P}_{\bold{s}\bold{s}'}(\bold{a})V(\bold{s}')$. Without loss of generality, suppose that $I=2$ (for notation simplicity) and  an optimal policy  takes action  $\bold{a}=(2,1)$ at state $\bold{s}$ which implies the following: 
 \begin{equation}\label{Eq_Assum_NOndec}
 \begin{array}{ll}
    \mathcal{V}\big(\bold{s},(2,1);\lambda\big)\le  \mathcal{V}\big(\bold{s},(2,2);\lambda\big) \Rightarrow \Bbb{E}\left\{V(\bold{s}')~\big|~\bold{s},(2,1)\right\}\le \Bbb{E}\left\{V(\bold{s}')~\big|~\bold{s},(2,2)\right\}.  
    \end{array}
 \end{equation}
 \indent
 To show the switching-type structure w.r.t. $y$, we must show that 
 $ {\Bbb{E}\left\{V(\bold{s}')~\big|~\bar{\bold{s}},(2,1)\right\}\le \Bbb{E}\left\{V(\bold{s}')~\big|~\bar{\bold{s}},(2,2)\right\}}$, where $\bar{\bold{s}}=(\theta_1,x_1,y_1+k,\theta_2,x_2,y_2)$. 
 Let us express $V(\bold{s})$ as $V({\bold{s}_1},{\bold{s}_2})$ (with slight abuse of notation), where ${\bold{s}_1}=(\theta_1,x_1,y_1)$ and ${\bold{s}_2}=(\theta_2,x_2,y_2)$.
 We calculate the expectations as follows:
 %$A=\Bbb{E}\left\{V(\bold{s}')~\big|~\bar{\bold{s}},(2,1)\right\}$
 \begin{equation}\label{Eq_Str_Beta1}
 %\nonumber
 \allowdisplaybreaks
 %\small
 \nonumber
 \begin{array}{ll}
    \Bbb{E}\left\{V(\bold{s}')~\big|~\bar{\bold{s}},(2,1)\right\}=
     \mu_1\mu_2p_1 p_2V\left(\bold{s}_{1,1},\bold{s}_{2,1}\right)
     +
     \mu_1\mu_2(1-p_1) p_2V\left(\bold{s}_{1,1},\bold{s}_{2,2}\right)
          \\
     +
     \mu_1(1-\mu_2)p_1 p_2V\left(\bold{s}_{1,1},\bold{s}_{2,3}\right)
     + \mu_1(1-\mu_2)(1-p_1) p_2V\left(\bold{s}_{1,1},\bold{s}_{2,4}\right)
     + 
     \mu_1\mu_2p_1 (1-p_2) \\ V\left(\bold{s}_{1,2},\bold{s}_{2,1}\right)
     +  
     \mu_1\mu_2(1-p_1) (1-p_2)V\left(\bold{s}_{1,2},\bold{s}_{2,2}\right)
     + \mu_1(1-\mu_2)p_1 (1-p_2)V\left(\bold{s}_{1,2},\bold{s}_{2,3}\right)
    \\
     + 
     \mu_1(1-\mu_2)(1-p_1) (1-p_2)V\left(\bold{s}_{1,2},\bold{s}_{2,4}\right)
     + ((1-\mu_1)\mu_2p_1 p_2)V\left(\bold{s}_{1,3},\bold{s}_{2,1}\right)
     \\
     +(1-\mu_1)\mu_2(1-p_1) p_2V\left(\bold{s}_{1,3},\bold{s}_{2,2}\right)
     + (1-\mu_1)(1-\mu_2) p_1 p_2V\left(\bold{s}_{1,3},\bold{s}_{2,3}\right)
     \\
     +  (1-\mu_1)(1-\mu_2) (1-p_1) p_2V\left(\bold{s}_{1,3},\bold{s}_{2,4}\right)
     +
(1-\mu_1)\mu_2 p_1 (1-p_2)V\left(\bold{s}_{1,4},\bold{s}_{2,1}\right)
\\
+(1-\mu_1)\mu_2 (1-p_1) (1-p_2)V\left(\bold{s}_{1,4},\bold{s}_{2,2}\right)
+(1-\mu_1)(1-\mu_2) p_1 (1-p_2)V\left(\bold{s}_{1,4},\bold{s}_{2,3}\right)
\\
+ (1-\mu_1)(1-\mu_2) (1-p_1) (1-p_2)V\left(\bold{s}_{1,4},\bold{s}_{2,4}\right),
     \end{array}
 \end{equation}
 where
%  \begin{equation}\nonumber
%     \begin{array}{ll}
$\bold{s}_{1,1}=(0,\tilde{x}_1+\tilde{\theta}_1,0),~\bold{s}_{1,2}=(0,\tilde{x}_1+\tilde{\theta}_1,y'_1),~\bold{s}_{1,3}=(\tilde{\theta}_1,\tilde{x}_1,0),~\bold{s}_{1,4}=(\tilde{\theta}_1,\tilde{x}_1,y'_1),~
%\\
\bold{s}_{2,1}=(0,\tilde{\theta}_2,\tilde{y}_2+\tilde{x}_2),~\bold{s}_{2,2}=(0,\tilde{x}_2+\tilde{\theta}_2,\tilde{y}_2),~\bold{s}_{2,3}=(\tilde{\theta}_2,0,\tilde{y}_2+\tilde{x}_2),~\bold{s}_{2,4}=(\tilde{\theta}_2,\tilde{x}_2,\tilde{y}_2),$
where $y'_1 = \min\big(y_1+k+x_1+\theta_1+1, N\big)-\min\big(x_1+\theta_1+1, N\big)$; recall that $\tilde{\theta}_i,\,\tilde{x}_i,$ and $\tilde{y}_i$ were defined (see Section \ref{Sec_CMDP_1}) as follows: $ \tilde{\theta}_i \triangleq \min\big(\theta_i+1, N\big)$,
 $\tilde{x}_i \triangleq  \min\big(x_i+\theta_i+1, N\big)-\min\big(\theta_i+1, N\big)$, and  
 $\tilde{y}_i \triangleq \min\big(y_i+x_i+\theta_i+1, N\big)-\min\big(x_i+\theta_i+1, N\big)$.
%     \end{array} 
%  \end{equation}
We calculate $\Bbb{E}\left\{V(\bold{s}')~\big|~\bar{\bold{s}},(2,2)\right\}$ 
%with the same arranging of the probabilities in \eqref{Eq_Str_Beta1} 
by
\begin{equation}\label{Eq_Str_Beta2}
\nonumber
%\small
    \begin{array}{ll}
   \Bbb{E}\left\{V(\bold{s}')~\big|~\bar{\bold{s}},(2,2)\right\}=\mu_1\mu_2p_1p_2V\left(\bold{s}_{1,2},\bold{s}'_{2,1}\right)
   + \mu_1\mu_2(1-p_1)p_2V\left(\bold{s}_{1,2},\bold{s}'_{2,2}\right)
   + 
    \\
   \mu_1(1-\mu_2)p_1p_2V\left(\bold{s}_{1,2},\bold{s}'_{2,3}\right)
   + \mu_1(1-\mu_2)(1-p_1)p_2V\left(\bold{s}_{1,2},\bold{s}'_{2,4}\right)
   +\mu_1\mu_2p_1(1-p_2) \\
   V\left(\bold{s}_{1,2},\bold{s}_{2,1}\right)
     + 
   \mu_1\mu_2(1-p_1)(1-p_2)V\left(\bold{s}_{1,2},\bold{s}_{2,2}\right)
   + \mu_1(1-\mu_2)p_1(1-p_2)V\left(\bold{s}_{1,2},\bold{s}_{2,3}\right)
   \\
   + \mu_1(1-\mu_2)(1-p_1)(1-p_2)V\left(\bold{s}_{1,2},\bold{s}_{2,4}\right)
   +
   (1-\mu_1)\mu_2p_1p_2V\left(\bold{s}_{1,4},\bold{s}'_{2,1}\right)
   +
   \\ (1-\mu_1)\mu_2(1-p_1)p_2V\left(\bold{s}_{1,4},\bold{s}'_{2,2}\right)
   + (1-\mu_1)(1-\mu_2)p_1p_2V\left(\bold{s}_{1,4},\bold{s}'_{2,3}\right)
   \\
   + (1-\mu_1)(1-\mu_2)(1-p_1)p_2 V\left(\bold{s}_{1,4},\bold{s}'_{2,4}\right)
   +(1-\mu_1)\mu_2p_1(1-p_2)V\left(\bold{s}_{1,4},\bold{s}_{2,1}\right)
   \\
   + (1-\mu_1)\mu_2(1-p_1)(1-p_2)V\left(\bold{s}_{1,4},\bold{s}_{2,2}\right)
   + (1-\mu_1)(1-\mu_2)p_1(1-p_2)V\left(\bold{s}_{1,4},\bold{s}_{2,3}\right)
   \\
   + (1-\mu_1)(1-\mu_2)(1-p_1)(1-p_2)V\left(\bold{s}_{1,4},\bold{s}_{2,4}\right),
    \end{array}
\end{equation}
where
$ %\begin{equation}\nonumber
   % \begin{array}{l}
\bold{s}'_{2,1}=(0,\tilde{\theta}_2,\tilde{x}_2),~\bold{s}'_{2,2}=(0,\tilde{x}_2+\tilde{\theta}_2,0),~\bold{s}'_{2,3}=(
\tilde{\theta}_2,0,\tilde{x}_2),~\bold{s}'_{2,4}=(
\tilde{\theta}_2,\tilde{x}_2,0).
%\\\bold{s}'_{2,5}=(
%0,\theta_2+1,y_2+x_2).
  %  \end{array} 
 $ %\end{equation}
By the  above two equations, we have
\begin{equation}
%\small
\nonumber
    \begin{array}{ll}
    U\triangleq \Bbb{E}\left\{V(\bold{s}')~\big|~\bar{\bold{s}},(2,1)\right\}- \Bbb{E}\left\{V(\bold{s}')~\big|~\bar{\bold{s}},(2,2)\right\}= \mu_1\mu_2p_1 p_2\left[V\left(\bold{s}_{1,1},\bold{s}_{2,1}\right)-V\left(\bold{s}_{1,2},\bold{s}'_{2,1}\right) \right]
    \\
    + \mu_1\mu_2(1-p_1) p_2\left[V\left(\bold{s}_{1,1},\bold{s}_{2,2}\right)-V\left(\bold{s}_{1,2},\bold{s}'_{2,2}\right)\right]
     +
     \mu_1(1-\mu_2)p_1 p_2
        \\
     \left[V\left(\bold{s}_{1,1},\bold{s}_{2,3}\right)-V\left(\bold{s}_{1,2},\bold{s}'_{2,3}\right) \right]
     +
      \mu_1(1-\mu_2)(1-p_1) p_2
      \left[V\left(\bold{s}_{1,1},\bold{s}_{2,4}\right)-V\left(\bold{s}_{1,2},\bold{s}'_{2,4}\right) \right]
      +
            \\
      \left((1-\mu_1)\mu_2p_1 p_2\right)
      \left[V\left(\bold{s}_{1,3},\bold{s}_{2,1}\right)-V\left(\bold{s}_{1,4},\bold{s}'_{2,1}\right) \right]
      +
    (1-\mu_1)\mu_2(1-p_1) p_2
      \\
      \left[V\left(\bold{s}_{1,3},\bold{s}_{2,2}\right)-V\left(\bold{s}_{1,4},\bold{s}'_{2,2}\right) \right]
      +
      (1-\mu_1)(1-\mu_2)p_1 p_2
      \left[V\left(\bold{s}_{1,3},\bold{s}_{2,3}\right)-V\left(\bold{s}_{1,4},\bold{s}'_{2,3}\right) \right]
      +
      \\
          (1-\mu_1)(1-\mu_2)(1-p_1) p_2
      \left[V\left(\bold{s}_{1,3},\bold{s}_{2,4}\right)-V\left(\bold{s}_{1,4},\bold{s}'_{2,4}\right) \right].
   % \stackrel{(a)}{\le }.
    \end{array}
\end{equation}
Let  $G$ be the same as $U$, except for the following changes: ${\bold{s}_{1,2}\rightarrow (0,\tilde{x}_1+\tilde{\theta}_1,\tilde{y}_1)}$ and 
${\bold{s}_{1,4}\rightarrow(\tilde{\theta}_1,\tilde{x}_1,\tilde{y}_1)}$.
Now, we have 
% \begin{equation}
%     \begin{array}{ll}
 $ \Bbb{E}\left\{V(\bold{s}')~\big|~\bar{\bold{s}},(2,1)\right\}- \Bbb{E}\left\{V(\bold{s}')~\big|~\bar{\bold{s}},(2,2)\right\}= U \stackrel{(a)}{\le } G \stackrel{(b)}{\le } 0,$
%     \end{array}
% \end{equation}
where $(a)$ follows from the monotonicity  of $V(\bold{s})$ by Lemma \ref{Lemm_Non-dec} and $(b)$ follows from \eqref{Eq_Assum_NOndec}. 
%Similarly, other cases, e.g., $\alpha=0$ can be verified. 
We have shown that  a $\lambda$-optimal policy has the switching-type structure, %w.r.t. $y$, 
which completes the proof. 
 \vspace{- 3 em}
\subsection{The Upper Bound for the Conditional Lyapunov Drift in \eqref{Eq_Drift_F}}\label{Appendix_UpperB}
To derive the upper bound for the conditional Lyapunov drift $\Delta[t]$, we use the following inequality in which, for any $A_1\ge0$, $A_2\ge0$, and $A_3\ge0$, we have \cite[p. 58]{Neely_Sch}
\begin{equation}\label{Eq_Ineq}
\begin{array}{ll}
 \left(\max\{A_1-A_2+A_3,0\}\right)^2\le A_1^2+A_2^2+A_3^2+2A_1(A_3-A_2).
 \end{array}
\end{equation}
%$\left(\max\{A_1-A_2+A_3,0\}\right)^2\le A_1^2+A_2^2+A_3^2+2A_1(A_3-A_2),~A_1,A_2,A_3\ge0$, %which holds for any
%for $A_1,A_2,A_3\ge0$,  
By applying \eqref{Eq_Ineq} to the evolution of the virtual queue in \eqref{Eq_Evol_Q}, 
%$\left(\max\{A_1-A_2+A_3,0\}+A_3\right)^2\le A_1^2+A_2^2+A_3^2+2A_1(A_3-A_2)$, $A_1,A_2,A_3\ge0$,
we obtain %the following for the evolution $H[t+1]^2$
\begin{equation}\label{Eq_Drift}
    \begin{array}{ll}
         H^2[t+1]\le H^2[t]+\Gamma_{\max}^2+D(\bold{a}[t])^2+2H[t](D(\bold{a}[t])-\Gamma_{\max}).
    \end{array}
\end{equation}
By applying \eqref{Eq_Drift} to the 
%rearranging the terms  in \eqref{Eq_Drift}, diving by $2$, taking the conditional expectation, and using the
conditional Lyapunov drift $\Delta[t]$ in \eqref{Eq_Drift_F},  we obtain
\begin{equation}\label{Eq_Drift_2}
    \begin{array}{ll}
         \Delta[t] & \le \Gamma_{\max}^2/2+ \Bbb{E}\{D(\bold{a}[t])^2~|~\mathcal{Z}[t]\}/2+H[t](\Bbb{E}\{D(\bold{a}[t])~|~\mathcal{Z}[t]\}-\Gamma_{\max})
        \\
        & \stackrel{(a)}{\le}  B + H[t](\Bbb{E}\{D(\bold{a}[t])~|~\mathcal{Z}[t]\}-\Gamma_{\max}),
    \end{array}
\end{equation}
where $B =  1/2\Gamma_{\max}^2+2$ and $(a)$ is due to using $\Bbb{E}\{D(\bold{a}[t])^2~|~\mathcal{Z}[t]\}\le 4$.
\vspace{- 1 em}
\subsection{Proof of Theorem \ref{Th_Stability}}\label{App_Stability}
To show the strong stability of the virtual queue under DPP-SP, first, we define an idle policy that chooses the idle decisions in  each slot $t$, i.e.,  $\alpha^{\mathrm{idl}}[t]=0$ and $\beta^{\mathrm{idl}}[t]=0$; hence,  $\bold{a}^{\mathrm{idl}}[t] \triangleq (0,0)$.
 By using inequality \eqref{Eq_upperB_DPP}, 
 we have
 \sloppy
\begin{align}\label{Eq_QS_1}
\sloppy
%\small
    \begin{array}{ll}
        % &\varDelta[t]+V\Bbb{E}\{AoI_D[t+1]~|~\bold{S}[t]\}
       % \nonumber   ^{\mathrm{DPP-SP}}
         \varphi[t] 
        & \stackrel{(a)}{\le}
          B+ H[t](\Bbb{E}\left\{D(\bold{a}[t])~\big|~\mathcal{Z}[t]\right\}-\Gamma_{\max}) 
          \\&
         + V \sum_i w_i \left(\Bbb{E}\left\{(1-\rho_2[t]\mathds{1}_{\{\beta[t]=i\}})y_i[t]+(1-\rho_1[t]\mathds{1}_{\{\alpha[t]=i\}})x_i[t] + x_i[t]+2\theta_i[t]+2~|~\mathcal{Z}[t]\right\}\right)
         \\ &
         \stackrel{(b)}{\le}
          B+ H[t](\Bbb{E}\left\{D(\bold{a}^{\mathrm{idl}}[t])~\big|~\mathcal{Z}[t]\right\}-\Gamma_{\max}) 
          \\&
         + V %\underset{i}{\sum}
         \sum_i w_i \left(\Bbb{E}\left\{(1-\rho_2[t]\mathds{1}_{\{\beta^{\mathrm{idl}}[t]=i\}})y_i[t]+(1-\rho_1[t]\mathds{1}_{\{\alpha^{\mathrm{idl}}[t]=i\}})x_i[t] + x_i[t]+2\theta_i[t]+2~|~\mathcal{Z}[t]\right\}\right)
           \\ &
         \stackrel{(c)}{\le}
          B+ H[t] (\Bbb{E}\left\{D(\bold{a}^{\mathrm{idl}}[t])~\big|~\mathcal{Z}[t]\right\}-\Gamma_{\max})
          %\\&
         + V\sum_i w_i \left(y_i[t]+x_i[t] + x_i[t]+2\theta_i[t]+2\right),
    \end{array}
    \vspace{-1 em}
\end{align}
where % $X^{\mathrm{DPP-SP}}$ denotes that the value of $X$ is obtained under DPP-SP, and
$(a)$ is due to inequality \eqref{Eq_upperB_DPP}, 
$(b)$ follows because  DPP-SP, given by \eqref{Eq_MW_Policy_1}, minimizes  the upper bound of the drift-plus-penalty function, i.e., the R.H.S of $(a)$ in \eqref{Eq_QS_1}, in each slot $t$ among all the possible decisions, including  the idle decisions, 
%using the approach of opportunistically minimization an expectation,
and $(c)$ is due to the fact that, for any decisions in slot $t$, the inequalities ${\Bbb{E}\{1-\rho_1[t]\mathds{1}_{\{\alpha[t]=i\}}~|~\mathcal{Z}[t]\}\le 1}$ and ${\Bbb{E}\{1- \rho_2[t]\mathds{1}_{\{\beta[t]=i\}}~|~\mathcal{Z}[t]\}\le 1}$  hold. 
%By  plugging the first inequality in \eqref{Eq_SlC_Given_S} 
%and \eqref{Eq_SlC_Given_S_2}
%into the R.H.S of $(c)$ in \eqref{Eq_QS_1} 
In \eqref{Eq_QS_1},
using the fact that ${\Bbb{E}\left\{D(\bold{a}^{\mathrm{idl}}[t])~\big|~\mathcal{Z}[t]\right\}=0}$ and the AoI values are bounded by finite $N$,
and taking expectations with respect to $\mathcal{Z}[t]$ and using the law of iterated expectations  yields:
\begin{equation}\label{Eq_QS}
    \begin{array}{ll}
        % &\varDelta[t]+V\Bbb{E}\{AoI_D[t+1]~|~\bold{S}[t]\}
        %\varphi^{\mathrm{DPP-SP}}[t]   ^{\mathrm{DPP-SP}}
        \Bbb{E}\left\{L(H[t+1])-L(H[t])+    V 
      {\textstyle\sum_i} \delta_i[t+1]
      %~\Big|~\mathcal{Z}[t]
      \right\}
        {\le}
        B-
        %\epsilon
        \Gamma_{\max}
        \Bbb{E}\{H[t]\}
        + \underbrace{ V (5N+4) \textstyle\sum_i w_i }_{\triangleq \tilde{V}}
    \end{array}
\end{equation}
In \eqref{Eq_QS}, summing over $t=0,\dots, T-1$ (using the law of telescoping sums), dividing by positive $T$ and $\Gamma_{\max}$,
%using \eqref{Eq_SlC_Given_S}, 
and rearranging  yields
\begin{equation}\label{Eq_Bound_VQ_1}
    \begin{array}{ll}
        \frac{1}{T}\sum_{t=0}^{T-1} \Bbb{E}\{H[t]\} &  \le \frac{B}{\Gamma_{\max}}
        -\frac{\Bbb{E}\{L(H[T])\}-\Bbb{E}\{L(H[0])\}}{T\Gamma_{\max}}
        - \frac{V}{\Gamma_{\max}}\frac{1}{T}\sum_{t=0}^{T-1}\textstyle{\sum_i} w_i \Bbb{E}\{ \delta_i[t+1]\}              %^{\mathrm{DPP-SP}}
        +\frac{\tilde{V}}{\Gamma_{\max}}
        %\Psi(\epsilon)
        \\ & \stackrel{(a)}{\le}
        \frac{B+\tilde{V} }{\Gamma_{\max}} + \frac{\Bbb{E}\{L(H[0])\}}{T \Gamma_{\max}},
      %  \\ & \stackrel{(b )}{\le} \indty,
    \end{array}
\end{equation}
where $(a)$ follows because we neglected the negative terms in the L.H.S of $(a)$. By taking  a $\lim \sup$ of \eqref{Eq_Bound_VQ_1} as $T \rightarrow \infty$, and due to that $\Bbb{E}\{L(H[0])\}$ is finite, we obtain
\begin{align}
  %  \begin{array}{ll}
   \limsup_{T\rightarrow \infty}~
   \frac{1}{T}\textstyle\sum_{t=0}^{T-1} \Bbb{E}\{H[t]\} \le \frac{B~ +~ \tilde{V}}{\Gamma_{\max}},
   % \end{array}
\end{align}
which implies that the virtual queue is strongly stable.
\vspace{-1.5 em }
%\vspace{-1 em}
%-------------------------
% \appendix
% \subsection{Proof of Theorem \ref{Th_ES_OP1}}

\bibliographystyle{ieeetr}

\bibliography{Bib/conf_short,Bib/IEEEabrv,Bib/Ref_2SRAoI}

%\bibliography{Refbio1}{}
\end{document}